\def\namedlabel#1#2{\begingroup
    #2%
    \def\@currentlabel{#2}%
    \phantomsection\label{#1}\endgroup
}
\newtheorem{theorem}{Theorem}[]
\newtheorem{corollary}{Corollary}[section]
\newtheorem{claim}{Claim}[section]
\newtheorem{definition}{Definition}[section]
\newtheorem{lemma}{Lemma}[section] 
\newtheorem*{remark}{Remark}
\newcommand{\eps}{\varepsilon}
\newcommand{\lis}{\mathsf{LIS}}
\newcommand{\lcs}{\mathsf{LCS}}
\newcommand{\ed}{\mathsf{ED}}
\newcommand{\polylog}{\mathsf{polylog}}
\begin{document}

\begin{titlepage}
\def\thepage{}

\title{Space Efficient Deterministic Approximation of String Measures}

\author{Kuan Cheng \thanks{ckkcdh@hotmail.com \ Computer Science Department, University of Texas at Austin.\ Supported by a Simons Investigator Award (\#409864, David Zuckerman) and NSF Award CCF-1617713.} \and Zhengzhong Jin \thanks{zjin12@jhu.edu\ Department of Computer Science, Johns Hopkins University.\ Supported by NSF Award CCF-1617713 and NSF CAREER Award CCF-1845349.}\and Xin Li  \thanks{lixints@cs.jhu.edu \ Department of Computer Science, Johns Hopkins University.\ Supported by NSF Award CCF-1617713 and NSF CAREER Award CCF-1845349.} \and Yu Zheng \thanks{yuzheng@cs.jhu.edu\ Department of Computer Science, Johns Hopkins University.\ Supported by NSF Award CCF-1617713 and NSF CAREER Award CCF-1845349.}}

\maketitle \thispagestyle{empty}
\begin{abstract}
We study approximation algorithms for the following three string measures that are widely used in practice: edit distance (ED), longest common subsequence (LCS), and longest increasing sequence (LIS).\ All three problems can be solved exactly by standard algorithms that run in polynomial time with roughly $\Theta(n)$ space, where $n$ is the input length, and our goal is to design deterministic approximation algorithms that run in polynomial time with significantly smaller space. 

Towards this, we design several algorithms that achieve $1+\eps$ or $1-\eps$ approximation for all three problems, where $\eps>0$ can be any constant and even slightly sub constant. Our algorithms are flexible and can be adjusted to achieve the following two regimes of parameters: 1) space $n^{\delta}$ for any constant $\delta>0$ with running time essentially the same as or slightly more than the standard algorithms;  and 2) space $\mathsf{polylog}(n)$ with (a larger) polynomial running time, which puts the approximation versions of the three problems in Steve's class (\textbf{SC}).\ Our algorithms significantly improve previous results in terms of space complexity, where all known results need to use space at least $\Omega(\sqrt{n})$. Some of our algorithms can also be adapted to work in the asymmetric streaming model \cite{saks2013space}, and output the corresponding sequence. Furthermore, our results can be used to improve a recent result by Farhadi et. al. \cite{farhadi2020streaming}, which gives an algorithm for ED in the asymmetric streaming model that achieves a $O(2^{1/\delta})$ approximation using $\tilde{O}(n^{\delta}/\delta)$ space, reducing the running time from being \emph{exponential} in  \cite{farhadi2020streaming} to a polynomial.


Our algorithms are based on the idea of using recursion as in Savitch's theorem \cite{Savitch70}, and a careful adaption of previous techniques to make the recursion work. Along the way we also give a new logspace reduction from longest common subsequence to longest increasing sequence, which may be of independent interest.
\end{abstract}
\end{titlepage}

\section{Introduction}
Strings are fundamental objects in computer science, and problems related to strings are among the most well studied problems in the literature. In this paper, we consider the problem of approximating the following three classical string measures:
\begin{description}
\item [Edit distance:] given two strings, the edit distance (ED) between these strings is the minimum number of insertions, deletions, and substitutions to transform one string into another.
\item [Longest common subsequence:] given two strings, the longest common subsequence (LCS) between these strings is the longest subsequence that appears in both strings. 
\item [Longest increasing subsequence:] given one string and a total order over the alphabet, the longest increasing subsequence (LIS) is the longest sequence in the string that is in an increasing order.
\end{description}

These problems have found applications in a wide range of applications, including bioinformatics,  text processing, compilers, data analysis and so on. As a result, all of them have been studied extensively. Specifically, suppose the length of each string is $n$, then both ED and LCS can be computed in time $O(n^2)$ and space $O(n)$ using standard dynamic programming. For LIS, it is known that it can be computed exactly in time $O(n \log n)$ with space $O(n \log n)$. However, in practical applications these problems often arise in situations of huge data sets, where the magnitude of $n$ can be in the order of billions (for example, when one studies human gene sequences). Thus, even a running time of  $\Theta(n^2)$ can be too costly. Similarly, even a $\Theta(n)$ memory consumption can be infeasible in many applications, especially for basic tasks such as ED, LCS, and LIS since they are often used as building blocks of more complicated algorithms.

Motivated by this, there have been many attempts at reducing the time complexity of computing ED and LCS, however none of these attempts succeeded significantly. Recent advances in fine grained complexity provide a justification for these failures, where the work of Backurs and Indyk \cite{BI15} and the work of Abboud, Backurs, and Williams \cite{ABW15} show that no algorithm can compute ED or LCS in time $O(n^{1.99})$ unless the strong  Exponential time hypothesis \cite{IPZ01} is false. Sine then, the focus has been on developing \emph{approximation} algorithms for ED and LCS with significantly better running time, and there has been much success here. In particular, following a recent breakthrough result \cite{CDGKS18}, which gives the first constant factor approximation of ED in truly sub-quadratic time, subsequent improvements have finally achieved a constant factor approximation of ED in near linear time  \cite{brakensiek2019constant, koucky2019constant, AndoniN20}. For LCS the situation appears to be harder, and the best known randomized algorithm \cite{hajiaghayi2019lcs} only achieves an $O(n^{0.498})$ approximation using linear time, which slightly beats the trivial $O(\sqrt{n})$ approximation obtained by sampling.\ Additionally, there is a trivial linear time algorithm that can approximate LCS within a factor of $1/|\Sigma|$ where $\Sigma$ is the alphabet of the strings. A recent work \cite{RSSS19} further provides a randomized algorithm in truly sub-quadratic time that achieves an approximation factor of $O(\lambda^3)$, where $\lambda$ is the ratio of the optimal solution size over the input size. Another recent work by Rubinstein and Song \cite{rubinstein2020reducing} shows how to reduce LCS to ED for binary strings, and uses the reduction to give a near linear time $\frac{1}{2} + \varepsilon$ approximation algorithm for LCS of binary strings, where $\varepsilon>0$ is some constant.

Despite these success, the equally important question of approximating ED and LCS using small space has not been studied in depth. Only a few previous works have touched on this topic, but with a different focus. For example, assume the edit distance between two strings is at most $k$, the work of Chakraborty et. al. \cite{Chakraborty2015LowDE} provides a randomized streaming algorithm that obtains an $O(k)$ approximation of ED, using linear time and $O(\log n)$ space.  Based on this, the work of Belazzougui and Zhang \cite{BelazzouguiZ16} provides a randomized streaming algorithm for computing ED and LCS \emph{exactly} using polynomial time and $\mathsf{poly}(k \log n)$ space. More generally, inspired by the work of  Andoni et. al. \cite{AKO10}, Saks et. al. \cite{saks2013space} studies the asymmetric data streaming model. This model allows one pass streaming access to one string (say $x$), but random access to the other string (say $y$). \cite{saks2013space} gives a $1+\eps$ deterministic approximation of ED in this model using space $O(\sqrt{(n \log n)/\eps})$, although their definition of ED does not allow substitution. So in the standard definition,  \cite{saks2013space} gives a $2+\eps$ deterministic approximation of ED in the asymmetric streaming model using space $O(\sqrt{(n \log n)/\eps})$. Additionally,  \cite{saks2013space} also gives a randomized algorithm that achieves an $\eps n$ additive approximation of LCS in this model, using space $O(k \log^2 n/\eps)$ where $k$ is the maximum number of times any symbol appears in $y$. 

For LIS the situation is slightly better. In particular, the work of Gopalan et. al. \cite{gopalan2007estimating} provides a deterministic streaming algorithm that approximates LIS to within a $1-\eps$ factor, using time $O(n \log n)$ and space $O(\sqrt{n/\eps}\log n)$; while a very recent work by Kiyomi et. al. \cite{kiyomi2018space} obtains a deterministic algorithm that computes LIS \emph{exactly} using $O(n^{1.5} \log n)$ time and $O(\sqrt{n} \log n)$ space. 

In this paper we seek to better understand the space complexity of these problems, while at the same time maintaining a polynomial running time. The first and most natural goal would be to see if we can compute for example ED and LCS exactly using significantly smaller space (i.e., truly sub-linear space of $n^{1-\alpha}$ for some constant $\alpha>0$). However, this again appears to be hard as no success has been achieved in the literature so far. Thus, we turn to a more realistic goal --- to approximate ED and LCS using significantly smaller space. For LIS, our goal is to use approximation to further reduce the space complexity in \cite{gopalan2007estimating} and \cite{kiyomi2018space}.

More broadly, the questions studied in this paper are closely related to the general question of \emph{non-deterministic small space computation} vs. \emph{deterministic small space computation}. Specifically, the decision versions of all three problems (ED, LCS, LIS) can be easily shown to be in the class $\mathsf{NL}$ (i.e., non-deterministic log-space), and the question of whether $\mathsf{NL}=\mathsf{L}$ (i.e., if non-deterministic log-space computation is equivalent to deterministic log-space computation) is a major open question in complexity theory. Note that if $\mathsf{NL}=\mathsf{L}$, this would trivially imply polynomial time algorithms for \emph{exactly} computing ED, LCS, and LIS  in logspace.\ However, although we know that $\mathsf{NL} \subseteq \mathsf{P}$ and  $\mathsf{NL} \subseteq \mathsf{SPACE}(\log^2 n)$ (by Savitch's theorem \cite{Savitch70}), it is not known if every problem in $\mathsf{NL}$ can be solved simultaneously in polynomial time and polylog space, i.e., if $\mathsf{NL} \subseteq \mathsf{SC}$ where $\mathsf{SC}$ is Steve's class. In fact, it is not known if an $\mathsf{NL}$-complete language (e.g., directed $s$-$t$ connectivity) can be solved simultaneously in polynomial time and strongly sub linear space (i.e., space $n^{1-\alpha}$ for some fixed constant $\alpha>0$).\ Thus, studying special problems such as ED, LCS, and LIS, and the relaxed version of approximation is a reasonable first step towards major open problems.

We show that we can indeed achieve our goals. Specifically, for all three problems ED, LCS, and LIS, we give efficient \emph{deterministic} approximation algorithms that can achieve $1+\eps$ or $1-\eps$ approximation, using significantly smaller space than all previous works. In fact, we can even achieve $\polylog(n)$ space while maintaining a polynomial running time. By relaxing the space complexity to $n^{\delta}$ for any constant $\delta>0$, we obtain algorithms whose running time is essentially the same or only slightly more than the standard dynamic programming approach. This is in sharp contrast to the time complexity of ED and LCS, where we only know how to beat the standard dynamic programing significantly by using \emph{randomized} algorithms. 

We have the following theorems.

\begin{theorem}
	\label{thm:edit_distance}
	Given any strings $x$ and $y$ each of length $n$, there are deterministic algorithms that approximate $\ed(x,y)$ with the following parameters:
	\begin{enumerate}
	\item An algorithm that computes a $1+O(\frac{1}{\log\log n})$ approximation of $\ed(x,y)$ with $O(\frac{\log^4 n}{\log \log n})$ bits of space in $O(n^{7+o(1)})$ time. 
	\item For any constants $\delta\in (0,\frac{1}{2})$ and $\eps\in (0,1)$, an algorithm that outputs a $1+\eps$ approximation of $\ed(x,y)$ with $\tilde{O}_{\eps,\delta}(n^\delta)$ bits of space in $\tilde{O}_{\eps,\delta}(n^2)$ time.
\end{enumerate}
The second algorithm can be adapted to work in the asymmetric streaming model with $\tilde{O}_{\eps}(\sqrt{n})$ bits of space.
\end{theorem}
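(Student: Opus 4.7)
The plan is to build a recursive algorithm in the spirit of Savitch's theorem. Split $y$ into $B$ contiguous blocks $y = Y_1 \cdots Y_B$ of length $n/B$, and at each boundary $t$ maintain the \emph{profile} $P_t(i) := \ed(x[1..i],\, Y_1 \cdots Y_t)$ as a function of $i$. Starting from $P_0(i) = i$, the update
\[
  P_t(i) \;=\; \min_{0 \le j \le i} \bigl(P_{t-1}(j) + \ed(x[j+1..i],\, Y_t)\bigr)
\]
calls the edit distance subroutine only on sub-instances of length $O(n/B)$, so a recursive invocation solves those in proportionally smaller space. What crosses the block boundaries is the profile $P_t$, so everything hinges on storing and propagating $P_t$ compactly.

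Profile compression is the crucial ingredient. Because $P_t$ is $1$-Lipschitz in $i$ and bounded by $n$, once values are quantized to powers of $(1+\eta)$ for a small slack parameter $\eta$, the quantized profile can be encoded as an interval decomposition with $\tilde O_\eta(B)$ breakpoints, directly adapting the monotone-staircase idea used in the single-level asymmetric-streaming algorithm of \cite{saks2013space}. Given compressed $P_{t-1}$, computing compressed $P_t$ amounts to, for each target level $v$ and each candidate $i$, binary-searching over $j$ in $P_{t-1}$ and recursively evaluating $\ed(x[j+1..i],\, Y_t)$. This yields the recurrence $T(n) = \tilde O_\eta(B^{O(1)}) \cdot T(n/B)$ and space $S(n) = \tilde O_\eta(B) + S(n/B)$, with the iteration terminated at a base case small enough to run standard DP.

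Two parameter regimes then give the two bullets. For bullet (2), set $B = n^\delta$ and recurse $O(1/\delta)$ times so that the base instances have constant size; with $\eta = \Theta(\eps\delta)$ the compounded multiplicative error stays below $1+\eps$, the per-level space is $\tilde O_{\eps,\delta}(n^\delta)$, and the time recurrence solves to $\tilde O_{\eps,\delta}(n^2)$. The asymmetric streaming variant uses $B=\sqrt n$ with a single level (the base instance of size $\sqrt n$ admits an exact $\tilde O(\sqrt n)$-space algorithm) and processes the blocks $Y_t$ in one pass while making random-access queries into $x$. For bullet (1), iterate to depth $d = \Theta(\log n / \log\log n)$ with $B = \polylog(n)$ at every level; the per-level space is $\polylog(n)$, summing to $O(\log^4 n / \log\log n)$ bits, and the time multiplies out to $n^{7+o(1)}$.

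The principal obstacle is controlling multiplicative error across the $d$ recursion levels. Level-wise slack $(1+\eta)$ compounds to $(1+\eta)^d$, so to land at $1+O(1/\log\log n)$ in the polylog-space regime one must set $\eta = \Theta(1/(d\log\log n))$; this shrinks the quantization gap and inflates the number of breakpoints per profile by $1/\eta$, pushing the space budget against its polylog ceiling. Verifying that the space and error budgets close simultaneously is the delicate step, and it is where the $\log^4 n/\log\log n$ factor appears. A secondary subtlety is proving that feeding an approximate profile into the next block's transition preserves the $(1+\eta)$ guarantee locally; this follows from the triangle inequality $\ed(x_1 x_2, Y_1 Y_2) \le \ed(x_1, Y_1) + \ed(x_2, Y_2)$ combined with the Lipschitz property, but must be folded carefully into the global analysis to justify the final time bound.
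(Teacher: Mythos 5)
There is a genuine gap, and it sits exactly where you flag the argument as "crucial": the claim that the quantized profile $P_t$ can be stored with $\tilde O_\eta(B)$ breakpoints. The function $i \mapsto P_t(i) = \ed(x[1..i], Y_1\cdots Y_t)$ is $1$-Lipschitz but not monotone, and a $1$-Lipschitz function on $[n]$ can oscillate across a fixed quantization level $\Theta(n)$ times, so rounding the \emph{values} to powers of $(1+\eta)$ does not sparsify the \emph{positions} at which the profile must be remembered. If such a compression lemma held, it would by itself give a $\polylog(n)$-space representation of the DP frontier, which is far stronger than what \cite{saks2013space} proves (their guarantee is $\Omega(\sqrt{n})$ space, and even this paper only achieves $\tilde O(\sqrt n)$ in the streaming model). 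The only sound alternative is to sample $P_t$ at $\approx B$ positions, but then interpolation via Lipschitzness costs an \emph{additive} $n/B$ per level, which is not a $(1+\eps)$ multiplicative guarantee when $\ed(x,y) \ll n$. This is precisely the difficulty the paper's machinery is built to handle: it guesses $\Delta \approx \ed(x,y)$ over $O(\log_{1+\eps} n)$ geometric values and spaces the candidate split points at multiples of $\eps\Delta/b$ inside a window of width $O(\Delta)$ (Definition~\ref{approx_optimal_candidate}, Lemma~\ref{lem:CandidateSet}), so the per-block additive error is $O(\eps\Delta/b)$ and sums to $O(\eps\cdot\ed(x,y))$. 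Your proposal has no analogue of the $\Delta$-guessing, and without it the error budget does not close.

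Two secondary problems would also need repair even if the compression were fixed. First, the minimization $\min_j\bigl(P_{t-1}(j)+\ed(x[j+1..i],Y_t)\bigr)$ is over a function of $j$ that is neither monotone nor unimodal, so binary search does not locate the minimum; the paper instead enumerates all $O(\frac{b}{\eps^2}\log n)$ candidate intervals per block and runs an explicit dynamic program (Algorithm~\ref{algo:DPEditDistance}). Second, the sub-instance $\ed(x[j+1..i],Y_t)$ pairs a block of length $n/B$ with a substring of length up to $n$, so the recursive calls are not all of size $O(n/B)$; the paper restricts attention to candidate intervals whose length is within a factor $1/\eps$ of the block length and handles the remaining pairs by the trivial bound $|x^i|+|\alpha_i-\beta_i+1|$ (Lemma~\ref{lem:approx_optimal_candidate}). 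Your overall architecture (block decomposition, recursive calls on block-versus-interval subproblems, geometric recursion-tree accounting, and the two parameter regimes) matches the paper's, but the missing candidate-set/$\Delta$-guessing ingredient is the heart of the proof rather than a detail.
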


Note that our second algorithm for ED uses roughly the same running time as the standard dynamic programming, but much smaller space. Indeed, we can use space $n^{\delta}$ for any constant $\delta>0$. This also significantly improves the previous result of \cite{saks2013space}, which needs to use space $\Omega(\sqrt{n }\log n)$. Even in the asymmetric streaming model, our result provides a $1+\eps$ approximation for ED instead of $2+\eps$ as in \cite{saks2013space}. With a larger (but still polynomial) running time, we can achieve space complexity $O(\frac{\log^4 n}{\log \log n})$. Next we have the following theorem for LCS.

\begin{theorem}
	\label{thm:lcs}
	Given any strings $x$ and $y$ each of length $n$, there are deterministic algorithms that approximate $\lcs(x,y)$ with the following parameters:
	\begin{enumerate}
	\item An algorithm that computes and outputs a sequence which is a $1-O(\frac{1}{\log \log n})$ approximation of $\lcs(x,y)$, with $O(\frac{\log^4 n}{\log \log n})$ bits of space in $O(n^{6+o(1)})$ time. 
	\item For any constants $\delta\in (0,\frac{1}{2})$ and $\eps\in (0,1)$,an algorithm that computes a $1-\eps$ approximation of $\lcs(x,y)$ with $\tilde{O}_{\eps,\delta}(n^\delta)$ bits of space in $\tilde{O}_{\eps, \delta}(n^{3-\delta})$ time. Furthermore the algorithm can output such a sequence with $\tilde{O}_{\eps, \delta}(n^\delta)$ bits of space in $\tilde{O}_{\eps, \delta}(n^{3}) $ time. 

\end{enumerate}
In addition, the special case of $d=2$ for the second algorithm can be be adapted to work in the asymmetric streaming model, with $O_{\eps}(n^{2.5}\log n)$ time and $O_{\eps}(\sqrt{n}\log n)$ bits of space.
\end{theorem}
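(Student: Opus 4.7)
The plan is to mimic the Savitch-style recursive divide-and-conquer used for edit distance in Theorem 1 and adapt it to the LCS value function, with a monotone-staircase compression of the dynamic-programming rows to control space.

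At a single level, partition $x$ into $b$ contiguous blocks $x_1, \ldots, x_b$ of equal length and run the coarsened DP
\[
 F(k, j) \;\approx\; \lcs(x_1 x_2 \cdots x_k,\; y[1..j]), \qquad F(k, j) = \max_{0 \le j' \le j}\bigl(F(k-1, j') + L(k, j'+1, j)\bigr),
\]
where $L(k, a, b)$ denotes the LCS of $x_k$ with $y[a..b]$ and is obtained by a recursive call one level down. Since $F(k, \cdot)$ is nondecreasing in $j$ and takes values in $\{0, 1, \ldots, n\}$, I store it succinctly: on a geometric grid of size $\tilde{O}(1/\eps)$ for part 2 or $O(\log n)$ for part 1 of target values $v$, record the least $j$ with $F(k, j) \geq v$. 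This staircase takes $\tilde{O}(n^\delta)$ or $O(\log^2 n)$ bits per row respectively, and the transition becomes a scan over these thresholds combined with one subroutine call per threshold.

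For part 2, I take $d = 1/\delta$ levels of recursion with branching $b = n^{1-\delta}$; one active staircase per level fits in $\tilde{O}(n^\delta)$ space, the per-level rounding loss $(1 - \eps/d)$ composes to $1-\eps$, and the time sums to $\tilde{O}(n^{3-\delta})$ once the block LCS subproblems are solved via the standard two-row DP in $O(n^{1-\delta})$ working space. For part 1, I take $d = \Theta(\log n / \log\log n)$ levels with $b = \Theta(\log n)$, yielding a $1 - O(1/\log\log n)$ approximation, total space $O(\log^4 n / \log\log n)$, and running time $n^{6+o(1)}$. To output a witnessing subsequence, I re-execute the recursion in trace mode: at each level recover the optimal cut sequence $(j_0, \ldots, j_b)$ by re-scanning its staircase, then recurse into each block and emit contributing matches to the output stream; the trace pass inflates the running time by roughly a factor of $b^d = O(n)$, giving $\tilde{O}(n^3)$ for part 2. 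The $d=2$ streaming adaptation uses $b = \sqrt{n}$, keeps a single staircase row in $O(\sqrt{n}\log n)$ space, and streams $y$ in one pass while updating all $b$ entries of the row in sync.

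The main obstacle is the interplay between the recursion and the compression: I must show that the $d$-fold composition of $(1-\eps/d)$-multiplicative rounding errors actually yields a genuine $(1-\eps)$-approximation of $\lcs(x, y)$, and that the staircase update step can be carried out without temporarily materializing an uncompressed row (which would blow the space bound). A secondary but delicate point is space accounting across active recursive frames: multiple frames cannot each hold an independent uncompressed scratch, so block-LCS subproblems must be computed one recursive call at a time and their contribution folded into the caller's staircase immediately. Finally, although the LCS-to-LIS reduction promised in the abstract is an appealing alternate route via the earlier LIS theorem, the naive position-encoding reduction produces a sequence of length $\tilde{O}(n^2)$ and appears to overshoot the exponents in part 2, so I would rely on the direct block DP above for part 2 and treat the reduction as a parallel route for part 1.
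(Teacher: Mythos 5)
Your high-level architecture (recursive block decomposition of $x$ plus a compressed monotone row) is the right one, and your ``staircase'' is essentially the paper's $(S,Q)$ representation of the patience-sorting list. But as written the proposal has two genuine gaps that are exactly the technical heart of the matter. First, the compression is too coarse: the rounding loss compounds over the $b$ block-steps \emph{within a single level}, not merely over the $d$ levels. If each row keeps only a geometric grid of $\tilde{O}(1/\eps)$ (or $O(\log n)$) thresholds, then each of the $b$ row-updates incurs a multiplicative $(1-\Theta(\eps))$ loss, and after $b = n^{1-\delta}$ steps the guarantee is $(1-\Theta(\eps))^{n^{1-\delta}}$, which is vacuous. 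The paper avoids this by keeping $\Theta(b/\eps)$ \emph{evenly spaced} thresholds per row, so each cleanup loses only an additive $O(\eps\cdot\mathrm{OPT}/b)$ and the total loss over $b$ steps is $O(\eps\cdot\mathrm{OPT})$; this forces $b = n^{\delta}$ blocks with depth $\Theta(1/\delta)$ (your stated parameters $b = n^{1-\delta}$ with $d = 1/\delta$ are also mutually inconsistent, since $b^d \neq n$, and a row of $\Theta(b/\eps)$ thresholds with $b=n^{1-\delta}$ would already exceed the $\tilde{O}_{\eps,\delta}(n^{\delta})$ space budget).

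Second, the recursive call cannot return just the scalar $L(k,j'+1,j)$ for a window with both endpoints fixed: to build the new staircase you need, for each start $j'+1$, the entire monotone function $j \mapsto \lcs(x_k, y_{[j'+1,j]})$ in compressed form, since the thresholds of row $k$ are not known in advance and enumerating all right endpoints $j$ is not affordable. Worse, a plain $(1-\gamma)$-approximation of the block LCS can miss the piece of the optimal alignment inside $x_k$, which may be much shorter than $\lcs(x_k, y_{[j'+1,n]})$ but end much earlier in $y$; an unrestricted recursive call will happily report a long match that consumes too much of $y$ and starves later blocks. The paper's fix is the ``Bound'' variant: the recursion is invoked with a target length $l$ ranging over $O(\log_{1+\eps} n)$ geometric values and returns the whole compressed pair $(\tilde{S},\tilde{Q})$, guaranteeing that for a near-optimal $l$ one recovers a near-shortest-ending prefix of $y$ achieving length $\approx l$. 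Your proposal needs both of these ingredients to close. (Two smaller points: the paper does use the LCS-to-LIS reduction for both parts and still meets the stated exponents, by partitioning the reduced sequence $z$ according to the blocks of $x$ rather than evenly, so the $\tilde{O}(n^2)$ length of $z$ is never materialized; and in the asymmetric model of \cite{saks2013space} the one-pass stream is over $x$ with random access to $y$, not the reverse as in your last paragraph.)
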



To the best of our knowledge, Theorem~\ref{thm:lcs} is the first $1-\eps$ approximation of LCS using truly sub-linear space, and in fact we can achieve space $n^{\delta}$ for any constant $\delta>0$ with only a slightly larger running time than the standard dynamic programming approach. We can achieve space $O(\frac{\log^4 n}{\log \log n})$ with an even larger (but still polynomial) running time. The $(1-\eps)$ approximation in the asymmetric streaming model using space $O_{\eps}(\sqrt{n}\log n)$ is also the first known algorithm to achieve this, and this is incomparable to the result of LCS in \cite{saks2013space}.

For LIS, we also give efficient deterministic approximation algorithms that can achieve $1-\eps$ approximation, with better space complexity than that of \cite{gopalan2007estimating} and \cite{kiyomi2018space}. In particular, we can achieve space $n^{\delta}$ for any constant $\delta>0$ and even space $O(\frac{\log^4 n}{\log \log n})$. We have the following theorem.

\begin{theorem}
	\label{thm:lis}
	Given any string $x$ of length $n$, there are deterministic algorithms that approximate $\lis(x)$ with the following parameters:
	\begin{enumerate}
	\item An algorithm that computes and outputs a sequence which is a $1-O(\frac{1}{\log \log n})$ approximation of $\lis(x)$ with $O(\frac{\log^4 n}{\log \log n})$ bits of space in $O(n^{5+o(1)})$ time.
	\item For any constants $\delta\in (0,\frac{1}{2})$ and $\eps\in (0,1)$, an algorithm that computes a $1-\eps$ approximation of $\lis(x)$ with $\tilde{O}_{\eps,\delta}(n^\delta)$ bits of space in $\tilde{O}_{\eps, \delta}(n^{2-2\delta})$ time. Furthermore the algorithm can output such a sequence with $\tilde{O}_{\eps, \delta}(n^\delta)$ bits of space in $\tilde{O}_{\eps, \delta}(n^{2-\delta}) $ time. 

\end{enumerate}
\end{theorem}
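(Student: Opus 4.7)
My plan is to adapt the Savitch-style recursion used earlier in the paper for \ed\ and \lcs\ to \lis, exploiting the fact that any increasing subsequence of $x$ restricts to an increasing subsequence on every contiguous substring, and that adjacent substring contributions are coupled only through a single threshold value. Concretely, I will split $x$ into $k$ contiguous blocks $B_1,\ldots,B_k$ and write
\[
\lis(x)\;=\;\max_{-\infty=t_0\le t_1\le\cdots\le t_k=+\infty}\;\sum_{i=1}^{k}\lis(B_i;\,t_{i-1},\,t_i),
\]
where $\lis(B_i;a,b)$ denotes the longest increasing subsequence of $B_i$ that uses only symbols in the open interval $(a,b)$. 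This identity reduces the global problem to many small-range block subproblems glued together by a one-dimensional chain DP over thresholds.

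To make the chain DP fit in the target space I will restrict each $t_i$ to a small set $T$ of representative values chosen at the symbols where the inner LIS-length jumps by a factor of $(1+\eps')$ with $\eps'=\Theta(\eps/\log n)$. Since $\lis(B_i;\cdot,\cdot)$ is monotone in both endpoints, rounding each optimal $t_i^\star$ to the nearest representative costs only a multiplicative $(1-\eps)$ factor in the final objective, and $|T|$ is then only $\tilde{O}(1/\eps)$ per block.

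For part~2 I will pick $k$ so that each block fits inside the $\tilde{O}(n^\delta)$ space budget; the block values $\lis(B_i;a,b)$ are produced on demand by the standard LIS DP (or its segment-tree variant) restricted to the range $(a,b)$, and the outer chain DP keeps only the current layer $\{F(i,v):v\in T\}$ and is streamed over $i$. Balancing block size, threshold count, and subproblem cost then yields the claimed $\tilde{O}_{\eps,\delta}(n^{2-2\delta})$ running time and $\tilde{O}_{\eps,\delta}(n^\delta)$ space. For part~1 the same construction is iterated recursively to depth $\Theta(\log n/\log\log n)$, splitting each block into polylog-many sub-blocks with polylog-many threshold states per level; each level contributes $O(\log^3 n/\log\log n)$ bits, giving the stated $O(\log^4 n/\log\log n)$ total space at the price of an $n^{5+o(1)}$ running time. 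To output an actual subsequence within the same space bound, once the optimum value is known I will re-run the chain DP to recover the optimal thresholds $v_1^\star,\ldots,v_k^\star$ and then reconstruct the within-block subsequence realising $\lis(B_i;v_{i-1}^\star,v_i^\star)$ block by block and stream the symbols out; the extra block-recomputations inflate the time to the stated $\tilde{O}_{\eps,\delta}(n^{2-\delta})$.

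The main obstacle I anticipate is constructing and addressing the representative set $T$ on the fly within the small space bound, and simultaneously guaranteeing that the rounding losses at different recursion levels compose to at most an overall $(1-\eps)$ factor despite the deep nesting in part~1. I plan to handle this by taking $T$ to be the values at $(1+\eps')$-spaced LIS-length quantiles of each block, found by an approximate selection routine that itself uses only block-sized space, and by setting $\eps'=\Theta(\eps/\text{(recursion depth)})$ so that the per-level losses compose geometrically to the target factor; the correctness of the global guarantee then reduces to a clean inductive statement on the recursion tree.
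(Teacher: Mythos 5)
Your block decomposition, the threshold-chain identity (with half-open rather than open intervals), and the overall architecture --- a chain DP over block boundaries with a compressed state, recursed to depth $\Theta(\log n/\log\log n)$ for part~1 --- are all in the spirit of the paper's proof. The genuine gap is in how you discretize the DP state. You discretize the \emph{symbol axis}: $T$ is placed at the $(1+\eps')$-quantiles of each block's own length function $v\mapsto\lis(B_i;-\infty,v)$. But the quantity that must survive rounding is the \emph{two-parameter} restricted length $\lis(B_i;a,\cdot)$ for whatever lower threshold $a$ the optimal chain hands to block $i$, and the quantiles of the unrestricted function give no control over it. Concretely, take $B_i=(M+1,\dots,M+\ell,\,1,\dots,L)$ with $M\gg L\gg\ell$: then $\lis(B_i;-\infty,v)$ increases only for $v\le L$, so all of $T$ lands among the small symbols; yet if the optimal chain arrives with $t_{i-1}=M$ and needs $t_i^\star=M+\ell/2$, the block's entire contribution lives in the large-symbol range where $T$ has no representative, and rounding $t_i^\star$ either down (killing block $i$'s contribution) or up to $M+\ell$ (killing block $i+1$'s) loses the whole block rather than an $\eps'$-fraction of it. This is precisely the difficulty the paper isolates (``the subsequence of $\tau$ that lies in $x^i$ may be much shorter than $\lis(x^i)$'') and resolves by inverting the parametrization: it discretizes the \emph{length axis} --- a set $S$ of $O(b/\eps)$ evenly spaced lengths, storing for each $s\in S$ the minimal achievable ending symbol $Q[s]$ --- and introduces the length-bounded subroutine $\textbf{ApproxLISBound}(z,b,\eps,l)$, which for each candidate starting symbol $Q[s]$ and each target length $l$ on a geometric grid returns an approximately minimal ending symbol of a length-$\approx l$ increasing subsequence above $Q[s]$. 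Your proposal has no analogue of this (starting symbol, target length) $\to$ (minimal ending symbol) query, and without it the representative thresholds cannot be placed correctly; if you try to repair $T$ by taking quantiles of $\lis(B_i;a,\cdot)$ separately for each representative $a$ of the previous layer, you are reconstructing exactly that subroutine.

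A secondary accounting point: the even (additive) spacing of the length axis is what makes the error analysis close. Each of the $b$ cleanups per level costs at most $\tfrac{\eps}{b}k_t\le\tfrac{\eps}{b}\lis(x)$ additively, so the per-level loss is $O(\eps)\lis(x)$ and only the recursion depth multiplies it, matching your intended $\eps'=\Theta(\eps/\text{depth})$. With multiplicative quantiles you must be careful that each rounding charges $\eps'$ against the \emph{block's marginal contribution} and not against the accumulated prefix value carried by $F(i,\cdot)$; in the latter case the losses compound over all $k$ blocks of a level and force $\eps'=\Theta(\eps/k)$, which destroys the $\tilde{O}_{\eps,\delta}(n^\delta)$ space bound in part~2. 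Reworking your state as ``for each of $O(b/\eps)$ evenly spaced lengths, the minimal ending symbol,'' and replacing the range-restricted oracle $\lis(B_i;a,b)$ by the length-bounded oracle, brings your outline in line with the paper's $\textbf{ApproxLIS}$/$\textbf{ApproxLISBound}$/$\textbf{LISSequence}$ argument; your plan for recovering the actual subsequence by re-running the DP to extract the optimal interface values and then recursing block by block is essentially the paper's $\textbf{LISSequence}$.
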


In particular, our theorems directly imply that the approximation versions of the three problems ED, LCS, and LIS are in the class $\mathsf{SC}$:

\begin{corollary}
The problems of achieving a $1+O(\frac{1}{\log \log n})$ or $1-O(\frac{1}{\log \log n})$ approximation  of ED, LCS, and LIS are in the class $\mathsf{SC}$.
\end{corollary}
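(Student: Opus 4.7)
The corollary is essentially an immediate consequence of the first item in each of Theorems \ref{thm:edit_distance}, \ref{thm:lcs}, and \ref{thm:lis}, so my plan is simply to verify that the parameters stated there match the definition of $\mathsf{SC}$. Recall that $\mathsf{SC}$ is the class of languages (or, suitably formalized, function/decision problems) that admit an algorithm running simultaneously in polynomial time and $\polylog(n)$ space. One should phrase each approximation problem as a decision version in the standard way: given the input and a threshold $t$, decide whether the approximation output is at least (resp.\ at most) $t$; the algorithms directly compute the approximate value, so this decision version is immediate.

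For each of the three measures, I would cite the corresponding first item: Theorem \ref{thm:edit_distance}(1) gives a $1 + O(1/\log\log n)$ approximation of $\ed$ using $O(\log^4 n/\log\log n)$ bits of space and $O(n^{7+o(1)})$ time; Theorem \ref{thm:lcs}(1) gives a $1 - O(1/\log\log n)$ approximation of $\lcs$ using $O(\log^4 n/\log\log n)$ bits of space and $O(n^{6+o(1)})$ time; and Theorem \ref{thm:lis}(1) gives a $1 - O(1/\log\log n)$ approximation of $\lis$ using $O(\log^4 n/\log\log n)$ bits of space and $O(n^{5+o(1)})$ time. In all three cases the time bound is of the form $n^{O(1)}$ and the space bound is $O(\log^4 n/\log\log n) = \polylog(n)$, which is precisely the requirement for $\mathsf{SC}$.

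There is no real obstacle here: the only thing worth being careful about is the model of computation used to define $\mathsf{SC}$ (a multi-tape Turing machine with a read-only input tape, a read-write work tape bounded by the space measure, and possibly a write-only output tape), and one should check that the algorithms of the three theorems can be implemented in this model without blowing up either resource by more than a polynomial (in time) or polylogarithmic (in space) factor. This is standard and follows from the fact that the algorithms manipulate only $O(\log n)$-bit quantities and $\polylog(n)$ many of them, so all bookkeeping fits within the stated space and all arithmetic operations fit within the stated polynomial time.
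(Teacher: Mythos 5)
Your proposal is correct and matches the paper, which treats this corollary as an immediate consequence of the first items of Theorems~\ref{thm:edit_distance}, \ref{thm:lcs}, and \ref{thm:lis} without further argument: the space bound $O(\frac{\log^4 n}{\log\log n})$ is $\polylog(n)$ and the running times are polynomial, which is exactly the definition of $\mathsf{SC}$. Your additional remarks about formalizing the decision version and the machine model are reasonable care but not needed beyond what you wrote.
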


\begin{remark}
	Our algorithms for LIS also work for the problem of longest non-decreasing subsequence. This is due to the following reduction. Given the original sequence $x \in \Sigma^n$, we change it to a new sequence $y$ where $y_i = (x_i,i)\in \Sigma\times [n] $. We define a new total order on the set $\Sigma\times [n] $ such that $(x_i,i)<(x_j,j)$ if $x_i<x_j$, or, $x_i=x_j$ and $i<j$. Then it is easy to see $\lis(y)$ is equal to the length of the longest non-decreasing subsequence in $x$. This reduction is also a logspace reduction.
\end{remark}

\begin{remark}
In all our theorems, the parameter $\eps$ can actually be slightly sub-constant, i.e., $o(1)$.
\end{remark}

\paragraph{Independent work.} Our results in the asymmetric streaming model are also achieved in a recent independent work by  Farhadi et.\ al.\ \cite{farhadi2020streaming}. Furthermore, \cite{farhadi2020streaming} gives an algorithm for ED in the asymmetric streaming model that achieves a $O(2^{1/\delta})$ approximation using $\tilde{O}(n^{\delta}/\delta)$ space, at the price of using \emph{exponential} running time. However, \cite{farhadi2020streaming} does not give our main results in the non streaming model, where we can achieve $1+\eps$ or $1-\eps$ approximation for all of ED, LCS, LIS using space $n^{\delta}$ or even $\polylog(n)$. Furthermore, we show in Section~\ref{sec:asymmetric} that our results can be used to improve the above approximation algorithm for ED in the asymmetric streaming model in \cite{farhadi2020streaming}, reducing the time from being exponential to a polynomial. Specifically, we have the following theorem:

\begin{theorem}
	Given two strings $x$ and $y$ each of length $n$, where we have streaming access to $x$ and random access to $y$. For any constants $\delta\in (0,\frac{1}{2})$, there is a deterministic algorithm that, making one pass of $x$, outputs a $O(2^{\frac{1}{\delta}})$-approximation of $\ed(x,y)$ in $\tilde{O}_{\delta}(n^{4})$ time with $O(\frac{n^\delta}{\delta}\log n)$ bits of space.
\end{theorem}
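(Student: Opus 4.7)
The plan is to take the asymmetric-streaming algorithm of Farhadi et al.~\cite{farhadi2020streaming}, which already attains the claimed $O(2^{1/\delta})$ approximation and $\tilde{O}(n^\delta/\delta)$ space, and replace its exponential-time subroutine (which computes edit distance of strings fitting in the $\tilde{O}(n^\delta)$ buffer exactly, via small-space brute force) with the deterministic polynomial-time $(1+\eps)$-approximation from Theorem~\ref{thm:edit_distance}(2).

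At a high level, the outer algorithm of~\cite{farhadi2020streaming} streams $x$ in blocks of length $\Theta(n^\delta)$ (which fits in the space budget) and maintains a dynamic program over boundary positions in $y$; each DP transition requires a value $\ed(x_i, y[j'+1..j])$ for the currently buffered block $x_i$ and various endpoints $j' < j$ in $y$. I would compute each such value by invoking Theorem~\ref{thm:edit_distance}(2) with an inner space parameter $\delta'$ chosen small enough that the subroutine uses $\tilde{O}(n^\delta)$ working memory and runs in polynomial time per call. Summing over the polynomially many DP cells and transitions, and reusing working memory across calls, gives the claimed $\tilde{O}_\delta(n^4)$ running time and $O(n^\delta/\delta \cdot \log n)$ total space, where the $1/\delta$ factor is inherited from the recursion stack of~\cite{farhadi2020streaming}.

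For the approximation guarantee, the outer analysis loses a constant factor at each of the $O(1/\delta)$ recursion levels, yielding the $O(2^{1/\delta})$ bound; substituting the exact leaf computations by $(1+\eps)$-approximations multiplies the overall ratio by at most $(1+\eps)^{O(1/\delta)}$, which is absorbed into $O(2^{1/\delta})$ for any constant $\eps$. The main obstacle will be respecting the asymmetric-streaming access to $x$: every query about a block $x_i$ must be issued before $x_i$ is evicted from the buffer. This is handled by arranging the outer DP so that all transitions involving $x_i$ are processed while it is resident, iterating over candidate $y$-substrings via random access; the $(1+\eps)$-approximation subroutine itself then runs in RAM on the buffered block and needs no separate streaming adaptation. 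A secondary verification is that the inner space parameter $\delta'$ can be chosen in $(0,1/2)$ so that Theorem~\ref{thm:edit_distance}(2) applies to inputs of length $\Theta(n^\delta)$ while keeping the inner space bounded by $\tilde{O}(n^\delta)$; this is straightforward since $(n^\delta)^{\delta'} \le n^\delta$ for any $\delta' \le 1$.
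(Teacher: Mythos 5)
Your high-level strategy---plugging the $(1+\eps)$-approximation of Theorem~\ref{thm:edit_distance}(2) into the recursion of \cite{farhadi2020streaming}---is the same as the paper's, but the proposal mischaracterizes where the exponential time in \cite{farhadi2020streaming} comes from, and as a result omits the key idea that makes the fix work. The combination step in \cite{farhadi2020streaming} is \emph{not} a DP whose transitions need $\ed(x^i, y_{[j'+1,j]})$ for the currently buffered block; it first recursively finds, for each block $x^i$, an approximately closest substring $y_{[l_i,r_i]}$ of $y$ together with a value $d_i$, and then minimizes $\sum_i \ed(y_{[p_{i-1},p_i)}, y_{[l_i,r_i]})$ over \emph{all} tuples $1\le p_0\le\cdots\le p_b\le n+1$ --- an enumeration over $\binom{n}{n^\delta+1}$ candidates. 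Replacing the inner exact edit-distance computations by $(1+\eps)$-approximations does nothing to this enumeration, so the algorithm as you describe it would still run in exponential time. The missing observation, which is the heart of the paper's proof, is that this minimization equals $\min_{l',r'} \ed(\bar y, y_{[l',r']})$ where $\bar y = y_{[l_1,r_1]}\circ\cdots\circ y_{[l_b,r_b]}$ is the concatenation of the recursively found substrings; one can therefore enumerate only the $O(n^2)$ substrings $y_{[l',r']}$ (of length at most $2|\bar y|$) and approximate each $\ed(\bar y, y_{[l',r']})$ with the space-efficient algorithm. Note that these calls are on strings of length up to $n$ (the length of $\bar y$ at the top level), not on $\Theta(n^\delta)$-length buffered blocks, so the subroutine genuinely needs its sublinear-space guarantee here rather than just ``running in RAM on the buffer.''

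Separately, if you intended a flattened single-level DP over boundary positions $j$ of $y$ with $n^{1-\delta}$ blocks of $x$ of length $n^\delta$ each, then even a single row of that DP has $\Omega(n)$ entries and exceeds the $O(\frac{n^\delta}{\delta}\log n)$ space budget; the recursive closest-substring abstraction of \cite{farhadi2020streaming} exists precisely to avoid storing such a row, and the factor $O(2^{1/\delta})$ arises from a triangle-inequality loss of roughly $2$ per recursion level, not from a constant loss in a DP. Your approximation bookkeeping is fine in spirit: the paper's per-level recurrence is $\alpha\mapsto(2+\eps)\alpha+1+\eps$, giving $O((2+\eps)^{1/\delta})$, which is $O(2^{1/\delta})$ upon taking $\eps=\delta$. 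But without the concatenation reformulation the running-time claim does not go through.
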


\subsection{Technical Overview}
The starting point of all our space efficient approximation algorithms is the well known Savitch's theorem \cite{Savitch70}, which roughly shows that any non-deterministic algorithm running in space $s \geq \log n$ can be turned into a deterministic algorithm running in space $O(s^2)$ by using recursion. Since all three problems of ED, LCS, and LIS can be computed exactly in non-deterministic logspace, this trivially gives deterministic algorithms that compute all of them exactly in space $O(\log^2 n)$. However in the naive way, the running time of these algorithms become quasi-polynomial. 

To reduce the running time, we turn to approximation. Here we use two different sets of ideas. The first set of ideas applies to ED. Note that the reason that the above algorithm for computing ED runs in quasi-polynomial time, is that in each recursion we are computing the ED between \emph{all} possible substrings of the two input strings. To avoid this, we use an idea from \cite{hajiaghayi2019massively}, which shows that to achieve a good approximation, we only need to compute the ED between some carefully chosen substrings of the two input strings. Using this idea in each level of recursion gives us the space efficient approximation algorithms for ED.

The second set of ideas applies to LCS and LIS. Here, we first give a small space reduction from LCS to LIS, and then we can focus on approximating LIS. Again, the reason that the naive $O(\log^2 n)$ space algorithm for LIS runs in quasi-polynomial time, is that in each recursion we are looking at \emph{all} possible cases of breaking the input string into two substrings, computing the LIS in the two substrings which ends and starts at the break point, and taking the maximum of the sums.\ To get an approximation, we use the \emph{patience sorting} algorithm for computing LIS exactly \cite{aldous1999longest}, and the modification in \cite{gopalan2007estimating} which gives an approximation of LIS using smaller space by equivalently looking at only some carefully chosen cases of breaking the input string into two substrings.\ The rough idea is then to use the algorithm in \cite{gopalan2007estimating} recursively, but making this work requires significant modification of the algorithm in \cite{gopalan2007estimating}, both to make the recursion work and to make it work under the reduction from LCS to LIS. 

We now give more details below.

\subsubsection{Edit Distance}

As discussed before, our approximation algorithm for ED is based on recursion. In each level of recursion, we use an idea from \cite{hajiaghayi2019massively} to approximate the edit distance between certain pairs of substrings. We start by giving a brief description of the algorithm in \cite{hajiaghayi2019massively}.

Let $x$ and $y$ be two input strings each of length $n$. Assume we want to get a $(1+\eps)$-approximation of $\ed(x,y)$ where $\eps$ is a parameter in $(0,1)$. Let $b$ be another parameter which we will pick later. The algorithm start by guessing a value $\Delta\leq n$ which is supposed to be a $(1+\eps)$-approximation of $\ed(x,y)$. If this is true, then the algorithm will output a good approximation of $\ed(x,y)$. To get rid of guessing, we can try every $\Delta\leq n$ such that $ \Delta = \lceil (1+\eps)^i\rceil$ for some integer $i$ and take the minimum. This does not affect the space complexity, and only increases the running time by a $\log_{1+\eps}n $ factor.


Given such a $\Delta$, we first divide $x$ into $b$ blocks each of length $\frac{n}{b}$. Denote the $i$-th block of $x$ by $x^i = x_{[l_i,r_i]}$. For simplicity, we fix an optimal alignment between $x$ and $y$ such that $x_{[l_i,r_i]}$ is matched to the substring $y_{[\alpha_i,\beta_i]}$, and the intervals $[\alpha_i,\beta_i]$ are disjoint and span the entire length of $y$. We say that an interval $[\alpha', \beta']$ is an \emph{$(\eps,\Delta)$-approximately optimal candidate} of the block $x^i = x_{[l_i,r_i]}$ if the following two conditions hold:
\[\alpha_i\leq \alpha'\leq \alpha_i+\eps\frac{\Delta}{b}\]
\[\beta_i-\eps\frac{\Delta}{b}-\eps \ed(x_{[l_i,r_i]},y_{[\alpha_i,\beta_i]})\leq \beta'\leq \beta_i\]

\cite{hajiaghayi2019massively} showed that, for each block of $x$ that is not matched to a too large or too small interval in $y$, there is a way to choose $O(\frac{b}{\eps}\log_{1+\eps}n) = O(\frac{b}{\eps^2}\log n)$ intervals such that one of them is an \emph{$(\eps,\Delta)$-approximately optimal candidate}. Then we can compute the edit distance between each block and all of its candidate intervals, which gives $O(\frac{b^2}{\eps^2}\log n)$ values. After this, we can use dynamic programming to find a $(1+O(\eps))$-approximation of the edit distance if $\Delta $ is a $(1+\eps)$-approximation of $\ed(x,y)$. The dynamic programming algorithm takes  $O(\frac{b^3\log n}{\eps^3})$ time. We note that in \cite{hajiaghayi2019massively}, the parameter $b$ is fixed to be $n^{1-\delta}$ for some constant $\delta>0$. However, here we will choose different $b$'s to achieve different regimes of parameters.

Since each block has length $\frac{n}{b}$, computing the edit distance of each block in $x$ with one of its candidate intervals in $y$ takes at most $O(\frac{n}{b\eps}\log n)$ bits of space (we assume each symbol can be stored with space $O(\log n)$). We can run this algorithm sequentially and reuse the space for each computation. Storing the edit distance of each pair takes $O(\frac{b^2}{\eps^2}\log^2 n)$ bits of space. Thus, if we take $b = n^{1/3}$, the above algorithm uses a total of $\tilde{O}_\eps(n^{2/3})$ bits of space. 

We now run the above algorithm recursively to further reduce the space required. Our algorithm takes four inputs: two strings $x,y\in \Sigma^n$, two parameters $b$, $\eps$ such that $b\leq \sqrt n$ and $\eps\in (0,1)$. The goal is to output a good approximation of $\ed(x,y)$ with small space (related to parameters $b$ and $\eps$). Similarly, we first divide $x$ into $b$ blocks. We try every $\Delta$ that is equal to $\lceil (1+\eps)^i\rceil $ for some integer $i$, and for each $\Delta$ there is a set of candidate intervals for each block of $x$. Then, for each block of $x$ and each of its $O(\frac{b}{\eps^2}\log n)$ candidate intervals, instead of computing the edit distance exactly, we recursively call our space efficient approximation algorithm with this pair as input, while keeping $b$ and $\eps$ unchanged.  We argue that if the recursive call outputs a $(1+\gamma)$-approximation of the actual edit distance, the output of the dynamic programming increases by at most a $(1+\gamma)$ factor. Thus if $\Delta$ is a $(1+\eps)$-approximation of $\ed(x,y)$, the output of the dynamic programming is guaranteed to be a $(1+O(\eps))(1+\gamma)$-approximation. The recursion stops whenever one of the input strings has length smaller than $b$. In this case, we compute the edit distance exactly with $O(b\log n)$ space.


Notice that at each level of the recursion, the first input string is divided into $b$ blocks if it has length larger than $b$. Thus the length of first input string at the $i$-th level of recursion is at most $\frac{n}{b^{i-1}}$. The depth of recursion is at most $d = \log_b n$.


At the $d$-th level, our algorithm computes the edit distance exactly. Using this as a base case, we can show that the output of the $i$-th level of recursion is a $(1+O(\eps))^{d-i}$ approximation of the edit distance by induction on $i$ from $d$ to $1$. 

Thus, the output in the first level is guaranteed to be a $(1+O(\eps))^d$-approximation. By $d\leq \log_b n$, the output of our recursive algorithm is a $(1+O(\eps))^d = 1+O(\eps d) = 1+O(\eps \log_b n)$ approximation of $\ed(x,y)$.

For the time and space complexity, we study the recursion tree corresponding to our algorithm. Notice that for each block $x^i$ and each choice of $\Delta$, we consider $O(\frac{b}{\eps^2}\log n)$ candidate intervals. Since there are $b$ blocks and $O(\log_{1+\eps}n)$ choices of $\Delta$, We need to solve $O(\frac{b^2}{\eps^3}\log^3 n)$ subproblems by recursion. Thus, the degree of the recursion tree is $O(\frac{b^2}{\eps^3}\log^3 n)$.

The dynamic programming at each level can be divided into $b$ steps. At the $j$-th step, the inputs are the edit distances between block $x^j$ and each of its candidate intervals. The information we need to maintain is an approximation of edit distances between the first $j-1$ blocks of $x$ and the substrings $y_{[1,l]}$ of $y$, where $l$ is chosen from the set of starting points of the candidate intervals of $x^j$. There are $O(\frac{b}{\eps})$ choices for $l$ and we query the approximated edit distance between $x^j$ and each of its candidate intervals by recursively applying our algorithm. Thus, we only need to maintain $O(\frac{b}{\eps})$ values at any time for the dynamic programming.

At the $i$-th level of recursion, we either invoke one more level of recursion and maintain $O(\frac{b}{\eps})$ values where each value takes $O(\log n)$ bits of space, or do an exact computation of edit distance when one of the input strings has length at most $b$, which takes $O(b\log n)$ bits of space. Hence, the space used at each level is bounded by $O(\frac{b}{\eps}\log n)$. There are at most $d = \log_b n$ levels. The aggregated space used by our recursive algorithm is still $O(\frac{b\log^2 n}{\eps \log b})$. 

We compute the running time by counting the number of nodes in the recursion tree. Notice that the number of nodes at level $i$ is at most $(O(\frac{b^2}{\eps^3}\log^3 n))^{i-1}$. For each leaf node, we do exact computation with time $O(\frac{b^2}{\eps})$, and the number of leaf nodes is bounded by $(O(\frac{b^2}{\eps^3}\log^3 n))^{d-1}$. For each inner node, we run the dynamic programming $O(\log_{1+\eps} n)$ times (the number of choices of $\Delta$) which takes $O(\frac{b^3\log^2 n}{\eps^4})$ time, and the number of inner nodes is bounded by $(d-1)(O(\frac{b^2}{\eps^3}\log^3 n))^{d-2}$.

If we take $b = \log n$ and $\eps = \frac{1}{\log n}$, we get a $(1+O(\frac{1}{\log \log n}))$-approximation algorithm using $O(\frac{\log^4 n}{\log \log n})$ space and $O(n^{7+o(1)})$ time. If we take $b = n^\delta$ for $\delta\in (0,\frac{1}{2})$ and $\eps$ a small constant, we get a $(1+O(\eps))$-approximation algorithm using $\tilde{O}_{\eps, \delta}(n^\delta)$ space and $\tilde{O}_{\eps, \delta}(n^2)$ time.

Our algorithm can also be modified to work in the asymmetric model \cite{saks2013space}. In this model, one has streaming access to one string $x$ and random access to the other string $y$. To see this, notice that the block decomposition of the string $x$ can be viewed as a tree, and for a fixed sequence of $\Delta$ in each level of recursion, the algorithm we discussed above is essentially doing a depth first search on the tree, which implies a streaming computation on $x$. However, the requirement to try all possible $\Delta$ and all candidate intervals may ruin this property since we need to traverse the tree multiple times. To avoid this, our idea is to simultaneously keep track of all possible $\Delta$ and candidate intervals in the depth first search tree on $x$. We stop the recursion and do exact computation whenever each block of $x$ is no larger than $\sqrt{n}$. By doing so, we can still bound the space usage by $\tilde{O}(\sqrt{n})$.

In an independent work \cite{farhadi2020streaming}, the authors give an $O(2^{1/\delta})$ approximation algorithm for edit distance in the asymmetric streaming model with $\tilde{O}(n^\delta /\delta)$ space, at the expense of using exponential running time. We now explain how our algorithm can be used to reduce the running time to a polynomial. 

The algorithm in \cite{farhadi2020streaming} first divides the string $x$ into $b = n^\delta$ blocks. Then, for each block $x^i$, the algorithm recursively finds in $y$ an $\alpha$-approximation of the closest substring to $x^i$, in the following sense: For each block $x^i$, we find a substring $y_{[l_i,r_i]}$ and a value $d_i$, such that for any substring $y_{[l^*, r^*]}$, $\ed(x^i, y_{[l_i,r_i]})\leq d_i \leq \alpha\ed(x^i, y_{[l^*,r^*]}).$ Notice that storing all the $l_i, r_i $ and $d_i$ takes only $O(b\log n)$ space. 

Then, the algorithms tries all possible $1\leq p_0\leq p_1\leq \ldots \leq p_b\leq n+1$ to find the set of $p_i$'s that minimizes $\sum_{i = 1}^{b}\ed(y_{[p_{i-1},p_i)}, y_{[l_i, r_i]})$. Let the optimal set be $\{p^*_i\}$ and record the substring $y_{[l,r]}$ of $y$ where $l=p^*_0$ and $r=p^*_b-1$. Further let $d = \sum_{i = 1}^{b}\ed(y_{[p^*_{i-1},p^*_i)}, y_{[l_i, r_i]})+\sum_{i = 1}^{b}d_i$. \cite{farhadi2020streaming} showed that  $y_{[l,r]}$ together with $d$ is a $(2\alpha+1)$-approximation of the closest substring to $x$ in $y$. Note that this is a recursive algorithm and the recursion stops when the block size of $x$ is at most $n^{\delta}$, at which point we can compute the edit distance exactly using space $\tilde{O}(n^{\delta})$. Since the depth of recursion is $\frac{1}{\delta}$, we are guaranteed to output a $O(2^{\frac{1}{\delta}})$-approximation of the edit distance. 

The super-polynomial (in fact, exponential) running time comes from two parts: First, the step of trying all possible $1\leq p_0\leq p_1\leq \ldots \leq p_b\leq n+1$ to find the set of $p_i$'s that minimizes $\sum_{i = 1}^{b}\ed(y_{[p_{i-1},p_i)}, y_{[l_i, r_i]})$. There are $\binom{n}{n^\delta+1}$ such choices. Second, when computing $\ed(y_{[p_{i-1},p_i)}, y_{[l_i, r_i]})$, they use a $O(\log^2 n)$ space, quasi-polynomial time algorithm guaranteed by Savitch's theorem. 

Our main observation is that, the step of trying all possible $1\leq p_0\leq p_1\leq \ldots \leq p_b\leq n+1$ to find the set of $p_i$'s that minimizes $\sum_{i = 1}^{b}\ed(y_{[p_{i-1},p_i)}, y_{[l_i, r_i]})$, is equivalent to finding the substring of $y$ that minimizes the edit distance to the concatenation of $y_{[l_i,r_i]}$ from $i = 1$ to $b$. Thus, instead of trying all possible $p_i$'s, we can try all substrings of $y$ (there are only $O(n^2)$ such substrings) and compute the edit distance between each substring and the concatenation of the $y_{[l_i,r_i]}$'s. Furthermore, instead of an exact computation which either uses $\Omega(n)$ space or $\log^2 n$ time, we can use our $(1+\eps)$-approximation for ED with $\tilde{O}(n^\delta)$ space and $\tilde{O}(n^2)$ time. The approximation factor is now increased to $O((2+\eps)^{\frac{1}{\delta}})$, which is still $O(2^{\frac{1}{\delta}})$ if we take $\eps = \delta$. But the running time decreases to $\tilde{O}(n^4)$, and the space complexity remains $\tilde{O}(n^\delta/\delta)$.

\subsubsection{Longest Increasing Subsequence}

We now consider the problem of approximating the LIS of a string $x\in \Sigma^n$ over the alphabet $\Sigma$ which has a total order. We assume each symbol in $\Sigma$ can be stored with $O(\log n)$ bits of space. For our discussion, we let $\infty$ and $-\infty$ be two special symbols such that for any symbol $\sigma\in \Sigma$, $-\infty< \sigma< \infty$. We denote the length of the longest increasing subsequence of $x$ by $\lis(x)$. 

Again our algorithm is a recursive one, and in each recursion we use an approach similar to the deterministic streaming algorithm from \cite{gopalan2007estimating} that gives a $1-\eps$ approximation of $\lis(x)$ with $O(\sqrt{n/\eps}\log n)$ space. Before describing their approach, we first give a brief introduction to a classic algorithm for LIS, known as \emph{PatienceSorting}. The algorithm initializes a list $P$ with $n$ elements such that $P[i] = \infty$ for all $i\in [n]$, and then scans the input sequence $x$ from left to right. When reading a new symbol $x_i$, we find the smallest index $l$ such that $P[l] \geq  x_i$ and set $P[l] = x_i$. After processing the string $x$, for each $i$ such that $P[i]< \infty$, we know $\sigma = P[i]$ is the smallest possible character such there is an increasing subsequence in $x$ of length $i$ ending with $\sigma$. We give the pseudocode in algorithm~\ref{patience_sorting_algorithm} and refer readers to \cite{aldous1999longest} for more details about this algorithm.

\begin{algorithm}[H]
	\SetAlgoLined
	\DontPrintSemicolon 
	\KwIn{A string $x \in \Sigma^n$ }
	initialize a list $P$ with $n$ elements such that  $P[i] = \infty$ for all $i\in [n]$ \;
	\For{$i = 1$ \textbf{to} $n$} {
		let $l$ be the smallest index such that $P[l] \geq  x_i$ \;
		$P[l] \gets x_i$ \;
	}
	let $l$ be the largest index such that $P[l] < \infty$\;
	\Return{$l$}\;
	\caption{\emph{PatienceSorting}}
	\label{patience_sorting_algorithm}
\end{algorithm}

We have the following result.

\begin{lemma}
	\label{patience_sorting_lemma}
	Given a string $x$ of length $n$, \emph{PatienceSorting} computes $\lis(x)$ in $O(n\log n)$ time with $O(l\log n)$ bits of space where $l = \lis(x)$.
\end{lemma}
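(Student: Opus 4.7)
My plan is to separate correctness from the complexity bounds, and drive correctness from one inductive invariant on the prefix of $x$ read so far.

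The central invariant I would maintain is: after processing $x[1..i]$, for every $j \in [n]$, $P[j]$ equals the minimum over all increasing subsequences of $x[1..i]$ of length $j$ of the last character of that subsequence, where the minimum over the empty set is $\infty$. From this I would derive as an immediate corollary that the finite prefix of $P$ is strictly increasing in $j$, since truncating the last element of a minimizing length-$j$ subsequence yields a length-$(j-1)$ subsequence whose final character is strictly below $P[j]$, hence at most $P[j-1]$. The base case $i = 0$ is trivial. For the inductive step, let $l$ be the smallest index with $P[l] \geq x_{i+1}$ as chosen by the algorithm. I would verify two things. First, $P[l]$ should become $x_{i+1}$: appending $x_{i+1}$ to a minimizing length-$(l-1)$ subsequence (whose last character is $P[l-1] < x_{i+1}$ by minimality of $l$) is legal, so some length-$l$ subsequence ending at $x_{i+1}$ exists; conversely any length-$l$ subsequence of $x[1..i+1]$ ending strictly below $x_{i+1}$ would have its first $l-1$ terms forming a length-$(l-1)$ subsequence of $x[1..i]$ ending below $P[l-1]$, contradicting the old invariant. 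Second, no other $P[j]$ should change: a length-$j$ subsequence of $x[1..i+1]$ that does not use $x_{i+1}$ is already a subsequence of $x[1..i]$, and one that uses $x_{i+1}$ has length at most $l$ by the same minimality argument; combining with the previous invariant the minimum is unchanged. Applying the invariant at $i = n$, the largest index $l$ with $P[l] < \infty$ is exactly $\lis(x)$, giving correctness.

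For the time bound, the structural corollary says that the finite part of $P$ is strictly increasing, so the search for the smallest index $l$ with $P[l] \geq x_i$ can be implemented as a binary search on that finite prefix. Each iteration therefore costs $O(\log n)$ time, totaling $O(n \log n)$ over all $n$ iterations.

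For the space bound, the key observation is that the number of finite entries in $P$ after processing $x[1..i]$ is exactly $\lis(x[1..i]) \leq \lis(x) = l$. I would therefore represent $P$ as the list of its finite entries only (e.g., a dynamic array), appending a new entry precisely when $x_i$ strictly exceeds every current finite entry, and overwriting an existing entry otherwise. Each stored character occupies $O(\log n)$ bits under the stated assumption, so the data structure uses $O(l \log n)$ bits at all times.

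The main obstacle I anticipate is purely bookkeeping in the inductive step: verifying carefully that the update at $P[l]$ preserves the minimum-last-character property at every index simultaneously (not just at index $l$), and checking that the compressed representation — storing only the finite prefix — always matches the conceptual array, including the corner case where $x_{i+1}$ is larger than every current finite entry and a fresh slot must be created.
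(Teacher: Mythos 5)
Your proof is correct, and it is worth noting that the paper itself gives no proof of this lemma at all: it simply presents the pseudocode and defers to the cited reference on patience sorting. Your invariant (after reading $x_{[1,i]}$, $P[j]$ is the minimum possible last character of a length-$j$ increasing subsequence of $x_{[1,i]}$), the strict monotonicity of the finite prefix, the binary-search time bound, and the observation that only the $\lis(x_{[1,i]}) \leq l$ finite entries need to be stored are exactly the standard argument, and the last point is genuinely needed here, since the pseudocode as written allocates $n$ slots and would naively use $O(n\log n)$ bits. One sub-step of your inductive case is misstated, though the conclusion is right: to show the new minimum at index $l$ is not below $x_{i+1}$, you consider a length-$l$ subsequence of $x_{[1,i+1]}$ ending strictly below $x_{i+1}$ and pass to its first $l-1$ terms, claiming they end below $P[l-1]$; ending below the subsequence's own last character (which is $<x_{i+1}$) does not imply ending below $P[l-1]$ (which is also $<x_{i+1}$). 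The repair is more direct: such a subsequence cannot contain $x_{i+1}$, hence lies in $x_{[1,i]}$, and its last character being $<x_{i+1}\leq P[l]$ contradicts the old invariant at index $l$ itself. With that one-line fix your write-up is a complete proof of the lemma.
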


In the streaming algorithm from \cite{gopalan2007estimating}, we maintain a set $S$ and a list $Q$, such that, $Q[i]$ is stored only for $i\in S$ and $S\subseteq [n]$ is a set of size $O(\sqrt{n})$. We can use $S$ and $Q$ as an approximation to the list $P$ in \emph{PatienceSorting} in the sense that for each $s\in S$, there is an increasing subsequence in $x$ of length $s$ ending with $Q[s]$. More specifically, we can generate a list $P'$ from $S$ and $Q$ such that $P'[i] = Q[j]$ for the smallest $j\geq i$ that lies in $S$. For $i$ larger than the maximum element in $S$, we set $P'[i] = \infty$. Each time we read a new element from the data stream, we update $Q$ and $S$ accordingly. The update is equivalent to doing \emph{PatienceSorting} on the list $P'$. When $S$ gets larger than $2\sqrt{n}$, we do a cleanup to $S$ by only keeping $\sqrt{n}/\eps$ evenly picked values from 1 to $\max S$ and storing $Q[s] $ for $s\in S$. Each time we do a ``cleanup", we lose at most $\frac{\eps}{\sqrt{n}}\lis(x)$ in the length of the longest increasing subsequence detected. Since we only do $O(\sqrt{n})$ cleanups, we are guaranteed an increasing subsequence of length at least $(1-\eps)\lis(x)$.

We now modify the above algorithm into another form. This time we first divide $x$ evenly into many small blocks. Meanwhile, we also maintain a set $S$ and a list $Q$. We now process $x$ from left to right, and update $S$ and $Q$ each time we have processed one block of $x$. If the number of blocks in $x$ is small, we can get the same approximation as in \cite{gopalan2007estimating} with $S$ and $Q$ having smaller size. For example, we can divide $x$ into $n^{1/3}$ blocks each of size $n^{2/3}$, and we update $S$ and $Q$ once after processing each block. If we do exact computation within each block, we only need to maintain the set $S$ and the list $Q$ of size $O(\frac{n^{1/3}}{\eps})$. We can still get a $(1-\eps)$ approximation, because we do $n^{1/3}$ cleanups and for each cleanup, we lose about $\frac{\eps}{n^{1/3}}\lis(x)$ in the length of the longest increasing subsequence detected. 

This almost already gives us an $\tilde{O}_{\eps}(n^{1/3})$ space algorithm, except the exact computation within each block needs $O(n^{2/3}\log n)$ space.\ A natural idea to reduce the space complexity is to replace the exact computation with an approximation. When each block $x^i$ has size $n^{2/3}$, running the approximation algorithm from \cite{gopalan2007estimating} takes $O(\frac{n^{1/3}}{\eps}\log n)$ space and thus we can hope to reduce the total space required to $O(\frac{n^{1/3}}{\eps}\log n)$. However, a problem with this is that by running the approximation algorithm on each block $x^i$, we only get an approximation of $\lis(x^i)$. This alone does not give us enough information on how to update $S$ and $Q$. Also, for a longest increasing subsequence of $x$, say $\tau$, the subsequence of $\tau$ that lies in the block $x^i$ may be much shorter than $\lis(x^i)$. This subsequence of $\tau$ may be ignored if we run the approximation algorithm instead of using exact computation. 

We now give some intuition of our approach to fix these issues. Let us consider a longest increasing subsequence $\tau$ of $x$ such that $\tau$ can be divided into many parts, where the $i$-th part $\tau^i$ lies in $x^i$. We denote the length of $\tau^i$ by $d_i$. Let the first symbol of $\tau^i$ be $\alpha_i$ and the last symbol be $ \beta_i$ if $\tau^i$ is not empty. When we process the block $x^i$, we want to make sure that our  algorithm can detect an increasing subsequence of length very close to $d_i$ in $x^i$, where the first symbol is at least $\alpha_i$ and the last symbol is at most $\beta_i$.  We can achieve this by running a  bounded version of the approximation algorithm which only considers increasing subsequences no longer than $d_i$. Since we do not know $\alpha_i$ or $d_i$ in advance, we can guess $\alpha_i$ by trying every symbol in $Q[s]$ where one of them is close enough to $\alpha_i$. For $d_i$, we can try  $O(\log_{1+\eps}n)$ different values of $l$ such that one of them is close enough to $d_i$. In this way, we are guaranteed to detect a good approximation of $\tau_i$.


Based on the above intuition, we now introduce our space-efficient algorithm for LIS called $\textbf{ApproxLIS}$. It takes three inputs, a string $x\in \Sigma^*$, two parameters $b$ and $\eps$. 

We also introduce a slightly modified version of $\textbf{ApproxLIS}$ called $\textbf{ApproxLISBound}$. It takes an additional input $l$, which is an integer at most $n$. We want to guarantee that if the string $x$ has an increasing subsequence of length $l$ ending with $\alpha\in \Sigma$, then $\textbf{ApproxLISBound}(x, b,\eps,l)$ can detect an increasing subsequence of length close to $l$ ending with some symbol no larger than $\alpha$. The idea is to run $\textbf{ApproxLIS}$ but only consider increasing subsequence of length at most $l$. $\textbf{ApproxLISBound}$ has the same space and time complexity as $\textbf{ApproxLIS}$.





We now describe $\textbf{ApproxLIS}$. When the input string $x$ has length at most $b^2$, we compute an $(1-\eps)$-approximation of $\lis$ using the algorithm in \cite{gopalan2007estimating} with $O(\frac{b}{\eps}\log n)$ space. Otherwise, we divide the input string into $b$ blocks each of length $\frac{n}{b}$. Similar to the streaming algorithm in \cite{gopalan2007estimating}, we maintain two sets $S$ and $Q$ of size $O(\frac{b}{\eps})$ as an approximation of the list $P$ when running \emph{PatienceSorting}. We will show that it is enough to use $O(\frac{b}{\eps}\log n)$ bits for $S$ and $Q$, because we only update them $b$ times and we lose about $O(\frac{\eps}{b})\lis(x)$ after each update. Initially, $S$ contains only one element $0$ and $Q[0] = -\infty$. We update $S$ and $Q$ after processing each block of $x$ as follows. 


For simplicity, we denote $S$ and $Q$ after processing the $t$-th block by $ S_t$ and $Q_t$. To see how $S$ and $Q$ are updated, we take the $t$-th update as an example. Given $S_{t-1}$ and $Q_{t-1}$, we first determine the length of the LIS in $x^1\circ \cdots \circ x^t$ that can be detected based on $S_{t-1}$ and $Q_{t-1}$. We denote this length by $k_t$. Notice that, for each $s\in S_{t-1}$, we know there is an increasing subsequence in $x^1\circ \cdots \circ x^{t-1}$ of length $s$ ending with $Q_{t-1}[s]\in \Sigma$. This gives us $|S_{t-1}|$  increasing subsequences. The idea is to find the best extension of  these increasing subsequences in the block $x^t$ and see which one gives us the longest increasing subsequence of $x^1\circ \cdots \circ x^t$. Since each block is of size $\frac{n}{b}$, we cannot always afford to do exact computation. Thus we compute an approximation of the LIS by recursively calling $\textbf{ApproxLIS}$ itself. For each $s\in S_{t-1}$, we run $\textbf{ApproxLIS}(z^s,b,\eps)$ where $z^s$ is the subsequence of $x^t$ with only symbols larger than $Q_{t-1}[s]$. Finally, we let $k_t = \max_{s\in S_{t-1}} (s+\textbf{ApproxLIS}(z^s,b,\eps))$. Given $k_t$, we then set $S_t$ to be the $ \frac{b}{\eps}$-th evenly picked integers from $0$ to $k_t$. 

The next step is to compute $Q_t$. We first set $Q_t[s] = \infty$ for all $s\in S_t$ except $s=0$, and we set $Q_t[0] = -\infty$. Then, for each $s\in S_{t-1}$ and $l = 1,1+\eps,(1+\eps)^2,\ldots, k_t-s$, we run $\textbf{ApproxLISBound}(z^s, b, \eps,l)$. For each $s'\in S_t$ such that $s\leq s'\leq s+l$, we update $Q_t[s']$ if $\textbf{ApproxLISBound}(z^s,b, \eps,l)$ detects an increasing subsequence of length at least $s'-s$ ending with a symbol smaller than the old $Q_t[s']$. The intuition is that, with the bound $l$, we may be able to find a smaller symbol in $\Sigma$ such that there is an increasing subsequence of length $l$ ending with it. This information can be easily ignored if $l$ is a lot smaller than the actual length of LIS in $x^t$. To see why this is important, let $\tau$ be a longest increasing subsequence of $x$, and let $\tau^t$ be the part of $\tau$ that lies in the block $x^t$. The length of $\tau^t$ may be much smaller than the length of LIS in $x^t$. When the bound $l$ is close to $|\tau^t|$, we will be able to detect a good approximation of $\tau^t$ by running $\textbf{ApproxLISBound}(z^s, b, \eps,l)$ on $z^s$ for each $s\in S_{t-1}$. Since we do not know the length of $\tau^t$, we will guess it by trying $O_\eps(\log n)$ values of $l$ and always record the optimal $Q_t[s]$ for $s\in S_t$.

Continue doing this, we get $S_b$ and $Q_b$.\ $\textbf{ApproxLIS}$ outputs the largest element in $S_b$.


$\textbf{ApproxLIS}$ is recursive. We denote the depth of recursion by $d$, and it can be seen that $d$ is at most $\log_b n -1$. To see the correctness of our algorithm, given fixed $b, \eps$, we show the output at the $r$-th recursive level is a $(1-O((d-r)\eps))$-approximation. Thus, the final output will be a $(1-O(\eps\log_b n))$-approximation to $\lis(x)$.

The proof is by induction on $r$ from $d$ to 1. For the base case of $r = d$, the statement follows from \cite{gopalan2007estimating}. Now consider the computation at the $r$-th level. For convenience, we denote the input string by $x$, which has length at most $\frac{n}{b^{r-1}}$. Let $\tau$ be an increasing subsequence of $x$ where $\tau^i$ lies in $x^i$. For our analysis, let $P'_t$ be the list generated by $S_t$ and $Q_t$ in the following way: for every $i$ let $P'_t[i] = Q_t[j]$ for the smallest $j\geq i$ that lies in $S_t$. If no such $j$ exists, set $P'_t[i] = \infty$. Correspondingly, $P_t$ is the list $P$ after running \emph{PatienceSorting} with input $x^1\circ x^2 \circ \cdots \circ x^t$.

Let $h_t = \sum_{j=1}^{t} |\tau^j|$ and $k_t = \max S_t$,  our main observation is the following inequality:
\begin{equation}
\label{tech1}
P'_t[(1-3(d-(r+1)\eps-\eps)h_t-2t\frac{\eps}{b}k_t]\leq P_t[h_t]
\end{equation}
Note that $h_b =|\tau| = \lis(x)$. We have $P_b[h_b]<\infty$ by the correctness of \emph{PatienceSorting}. If inequality~\ref{tech1} holds, then by $k_t\leq h_t$, there must exist an element in $ S_b$ larger than $(1-3(d-r)\eps)\lis(x)$ which directly gives the correctness of the computation at the $r$-th level. 

We prove inequality~\ref{tech1} by induction on $t$. The intuition is that, at the $t$-th update, by the fact that inequality~\ref{tech1} holds for $t-1$, we know that there must exist an $s\in S_{t-1}$ that is close to $h_{t-1}$ and $Q_{t-1}[s]\leq P_{t-1}[h_{t-1}] =\beta_{t-1}<\alpha_t$. By trying $l = 1,1+\eps,(1+\eps),\ldots, k_t-s$, one $l$ is close enough to $|\tau^t|$. Thus we are guaranteed to detect a good approximation of $\tau_t$ in $x^t$ and the inequality also holds for $t$.

For the space complexity, at each recursive level, we need to maintain the sets $S$ and $Q$ with space $O(\frac{b}{\eps}\log n)$. Thus the total space is bounded by $O(d\frac{b}{\eps}\log n) = O(\frac{b\log^2 n }{\eps\log b})$. The analysis of time complexity is similar to the case of edit distance, where we analyze the recursion tree and bound the number of nodes. 

If we take $b = \log n$ and $\eps = \frac{1}{\log n}$, we get a $(1-O(\frac{1}{\log \log n}))$-approximation algorithm using $O(\frac{\log^4 n}{\log \log n})$ space and $O(n^{5+o(1)})$ time. If we take $b = n^\delta$ for $\delta\in (0,\frac{1}{2})$ and $\eps$ a small constant, we get a $(1+O(\eps))$-approximation algorithm using $\tilde{O}_{\eps, \delta}(n^\delta)$ space and $\tilde{O}_{\eps, \delta}(n^{2-2\delta})$ time.  


Our algorithm for approximating the length of LIS can be modified to output a corresponding increasing subsequence, and we call the algorithm $\textbf{LISSequence}$. Roughly, If the input string has length no larger than $b$, we output the LIS exactly using $O(b\log n)$ space. Otherwise, we run our algorithm recursively as follows. Let $\rho $ be the longest increasing subsequence detected by  $\textbf{ApproxLIS}(x,b, \eps)$, thus $\rho$ has length $(1-O(\eps \log_b n))\lis(x)$. We divide $\rho$ into $b$ parts such that the $i$-th part $\rho^i$ lies in $x^i$, thus $\rho^i$ has length at most $|x^i| = \frac{n}{b}$. The idea is to run $\textbf{ApproxLIS}$ and $\textbf{ApproxLISBound}$ multiple times to recover a list $B$ of $b+1$ elements such that the first element of $\rho^i$ is strictly larger than $B[i-1]$ and the last element of $\rho^i$ is at most $B[i]$. Given such a list $B$, we can run $\textbf{LISSequence}$ on the input sequence $x^i$ but only considering elements larger than $B[i-1]$ and at most $B[i]$.

To compute the list $B$, we first set $B[b] $ to be $ Q_b[s_b]$ where $s_b$ is the largest element in $S_b$. This is because $s_b$ is the length of $\rho$ and $Q_b[s_b]$ is the last symbol of $\rho$. Then, we compute the list $B$ from right to left. Once we know $B[i] = Q_i[s_i]$ for some $s_i\in S_i$, we compute $S_{i-1}$ and $Q_{i-1}$ by running $\textbf{ApproxLIS}(x,b,\eps)$ again. Now, for each $s\in S_{i-1}$ and $s\leq s_i$, if we can find an increasing subsequence in $x^{i}$ of length $s_i-s$ with the first symbol larger than $Q_{i-1}[s]$ and the last symbol at most $B[i]$ with algorithm $\textbf{ApproxLISBound}$, we can set $s_{i-1} = s$, $B[i-1]  = Q_{i-1}[s_{i-1}]$ and continue. 

We show that doing this essentially needs us to run $\textbf{ApproxLIS}$ $O(b)$ times. The running time is thus increased by a factor of $O(b)$ compared to the running time of $\textbf{ApproxLIS}$, while the space complexity remains roughly the same.


\subsubsection{Longest Common Subsequence}

Our algorithm for LCS is based on a reduction to LIS. We assume the inputs are two strings $x ,y \in \Sigma^n$. Our goal is to output a $(1-\eps)$-approximation of the LCS of $x$ and $y$.

The reduction is as follows. Given the strings $x$ and $y$, for each $i\in [n]$ let $b^i\in [m]^*$ be the sequence consisting of all distinct indices $j$ in $[m]$ such that $x_i = y_j$, arranged in descending order. Note that $b^i$ may be empty. Let $z = b^1\circ b^2 \circ \cdots \circ b^n$, which has length $O(mn)$ since each sequence $b^i$ is of length at most $m$. We claim that $\lis(z) = \lcs(x,y)$. This is because for every increasing subsequence of $z$, say $t = t_1t_2\cdots t_d$, the corresponding subsequence $y_{t_1}y_{t_2}\cdots y_{t_d}$ of $y$ also appears in $x$. Conversely, for every common subsequence of $x$ and $y$, we can find an increasing subsequence in $z$ with the same length. We call this procedure $\text{ReduceLCStoLIS}$. Note that in our algorithms, $z$ need not be stored, since we can compute each element in $z$ as necessary in logspace by querying $x$ and $y$. Thus our reduction is a logspace reduction.

Once we reduce the LCS problem to an LIS problem, we can use similar techniques as we used for LIS. For our analysis, let $z = \textbf{ReduceLCStoLIS}(x,y)$.

We call our space efficient algorithm for LCS $\textbf{ApproxLCS}$. If one of the input strings is shorter than the parameter $b$, we know $\lcs(x,y)\leq b$. Thus, we can compute $\lis(z)$ using \emph{PatienceSorting} with $O(b\log n)$ bits of space.



Otherwise, the goal is to compute an approximation of $\lis(z)$. One difference compared to $\textbf{ApproxLIS}$ is, instead of dividing $z$ evenly into $b$ blocks, we divide $z$ according to $x$. That is, we first divide $x$ evenly into $b$ blocks. Then $z$ is divided naturally into $b$ blocks such that $z^i = \text{ReduceLCStoLIS}(x^i, y)$. This gives us a slight improvement on running time over the naive approach of running our LIS algorithm after the reduction. Note that $\lis(z^i)$ is at most $\frac{n}{b}$ since the length of $x^i$ is $\frac{n}{b}$. We compute an approximation of $\lis(z^i)$ by recursively calling $\textbf{ApproxLCS}$ with inputs $x,y, b, \eps$. The input size to the next recursive level is decreased by a factor of $b$. The remaining analysis is similar to the case of LIS, and again we can modify our algorithms to output the corresponding common subsequence.

We note our algorithm for LCS can be adapted to work in the asymmetric model with streaming access to $x$ and random access to $y$. We set $b = n^{\frac{1}{2}}$, and the depth of recursion is 2. With random access to $y$, we only need to query $x^i$ to know  $z^i$. Thus, when processing $z^i$, we can keep the corresponding $n^{\frac{1}{2}}$ symbols of $x^i$ in the memory. Since we only process $z^i$ from $i = 1$ to $b$ once, we only need to read $x$ from left to right once. Thus, our algorithm is a streaming algorithm that queries $x$ in one pass with $O(\frac{\sqrt{n}}{\eps}\log n)$ bits of space.

\paragraph{Organization of the paper.}
The rest of the paper is organized as follows. In section~\ref{prelim}, we introduce some notation and give a formal description of the problems studied. In Section~\ref{ed}, we present our algorithms for edit distance. In Section~\ref{lis}, we present our algorithms for LIS. Then in Section~\ref{lcs}, we present our algorithms for LCS. Finally in section~\ref{sec:asymmetric}, we present our algorithms for the asymmetric streaming model. Finally in section~\ref{discussion}, we conclude with discussion on our results and open problems.

\section{Preliminaries}
\label{prelim}

We use the following conventional notations. Let $x \in \Sigma^n$ be a string of length $n$ over alphabet $\Sigma$. By $|x|$, we mean the length of $x$. We denote the $i$-th character of $x$ by $x_i$ and the substring from the $i$-th character to the $j$-th character by $x_{[i,j]}$. We denote the concatenation of two strings $x$ and $y$ by $x\circ y$. By $[n]$, we mean the set of positive integers no larger than $n$. 

\textbf{Edit Distance} The \emph{edit distance} (or \emph{Levenshtein distance}) between two strings $x,y\in \Sigma^*$ , denoted by $\ed (x,y)$, is the smallest number of edit operations (insertion, deletion, and substitution) needed to transform one into another. The insertion (deletion) operation adds (removes) a character at some position. The substitution operation replace a character with another character from the alphabet set $\Sigma$.

\textbf{Longest Common Subsequence} We say the string $s\in \Sigma^t$ is a \emph{subsequence} of $x\in \Sigma^n$ if there exists indices $1\leq i_1<i_2<\cdots < i_t\leq n$ such that $s = x_{i_1}x_{i_2}\cdots x_{i_t}$.  A string $s$ is called a \emph{common subsequence} of strings $x$ and $y$ if $s$ is a subsequence of both $x$ and $y$. Given two strings $x$ and $y$, we denote the length of the longest common subsequence (LCS) of $x$ and $y$ by $\lcs(x,y)$.

\textbf{Longest Increasing Subsequence} In the longest increasing subsequence problem, we assume there is a given total order on the alphabet set $\Sigma$. We say the string $s\in \Sigma^t$ is an \emph{increasing subsequence} of $x\in \Sigma^n$ if there exists indices $1\leq i_1<i_2<\cdots < i_t\leq n$ such that $s = x_{i_1}x_{i_2}\cdots x_{i_t}$ and $x_{i_1} < x_{i_2} < \cdots < x_{i_t}$. We denote the length of the longest increasing (LIS) subsequence of string $x$ by $\lis(x)$. In our analysis, for a string $x$ of length $n$, we assume each element in the string can be stored with space $O(\log n)$. For analysis, we introduce two special symbols $\infty$ and $-\infty$ with $\infty> i$ and $-\infty< i$ for any character $i\in \Sigma$. In our discussion, we let $\infty$ and $-\infty$ to be two imaginary characters such that $-\infty<\alpha<\infty$ for all $\alpha\in \Sigma$.

\section{Edit Distance}

\label{ed}

We now present our algorithms for edit distance. 

\subsection{Some Tools from \cite{hajiaghayi2019massively}}

\label{massively_parallel}

We first introduce some definitions and tools from \cite{hajiaghayi2019massively}, which we will use in our algorithms. To make our paper complete and self-contained, we also include the proofs in Appendix~\ref{proofs}. 

Let $x\in \Sigma^n$ and $y\in \Sigma^m$ be two strings over alphabet $\Sigma$. We assume each symbol of $\Sigma$ can be stored with $O(\log n)$ bits. Our goal is to output a $1+\eps$ approximation for $\ed(x,y)$. Here, $\eps$ is a parameter that can be subconstant. In our algorithm, we only consider the case when $n = \Theta(m)$ since otherwise output a good approximation of $\ed(x,y)$ would be easy. 

We assume an integer $\Delta$ is given to us which is supposed to be a $1+\eps$ approximation of the actual edit distance. Such an assumption only increase the total amount of computation by a $O(\log_{1+\eps}(n))$ factor. This is because we can try all $\Delta = (1+\eps)^i$ with $i\in [\rceil \log_{1+\eps}(n)\rceil] $ and make sure one of $\Delta$ is a $1+\eps$ approximation of $\ed(x,y)$.

Given such a $\Delta$, we first divide string $x$ into $b$ blocks ($b$ is a parameter we will pick later). We denote the $i$-th block by $x^i$. Then for each block, say $x^i$, we determine a set of substrings of $y$ as candidate intervals such that one of the candidate substrings is close to the optimal substring $x^i$ is matched to in the optimal matching. We call such a substring \emph{approximately optimal candidate}. Then, if we know the edit distance (or a good approximation of edit distance) between each block and each of its candidate substring of $y$, we can run a dynamic programming to get a approximation of the actual edit distance distance.


In the following analysis, we fix an optimal alignment between $x$ and $y$ where the $i$-th block $x_{[l_i,r_i]}$ is matched to substring $y_{[\alpha_i,\beta_i]}$ such that the intervals $[\alpha_i,\beta_i]$'s are disjoint and span the entire length of $y$. By the assumption, we have $\ed(x,y) = \sum_{i = 1}^{N}\ed(x^i,y_{[\alpha_i,\beta_i]})$.

We now give the definition of \emph{$(\eps,\Delta)$-approximately optimal candidate}.

\begin{definition}
	[\cite{hajiaghayi2019massively}]
	\label{approx_optimal_candidate}
	We say an interval $[\alpha',\beta']$ is an \emph{$(\eps,\Delta)$-approximately optimal candidate} of the block $x^i = x_{[l_i,r_i]}$ if the following two conditions hold:
	\[\alpha_i\leq \alpha'\leq \alpha_i+\eps\frac{\Delta}{b}\]
	\[\beta_i-\eps\frac{\Delta}{b}-\eps \ed(x_{[l_i,r_i]},y_{[\alpha_i,\beta_i]})\leq \beta'\leq \beta_i\]
\end{definition}

We first show that if $\Delta$ is a good approximation of $\ed(x,y)$, and for each block that is not matched to a too large or a too small interval, we know the edit distance between it and one of its approximately optimal candidate, we can get a good approximation of $\ed(x,y)$. We put it formally in Lemma~\ref{lem:approx_optimal_candidate}.

\begin{lemma}[Implicit from \cite{hajiaghayi2019massively}]
	\label{lem:approx_optimal_candidate}
	Let $\eps\in (0,1)$ ($\eps $ can be subconstant) and $\eps' = \eps/10$. Assume $\ed(x,y)\leq \Delta\leq (1+\eps')\ed(x,y)$.  For each $i\in [b]$, let $(\alpha_i',\beta_i')$ be any $(\eps',\delta)$-approximately optimal candidate of $x^i$. If $\eps'|\alpha_i-\beta_i+1|\leq |x^i|\leq 1/\eps'|\alpha_i-\beta_i +1|$, let $D'_i =|\alpha_i-\alpha'_i|+ \ed(x^i,y_{[\alpha'_i,\beta'_i]}) + |\beta_i-\beta'_i| $. Otherwise, let $D'_i  = |x^i| + |\alpha_i-\beta_i+1|$. Then
	\[\ed(x,y)\leq \sum_{i = 1}^{b}D'_i \leq (1+\eps)\ed(x,y).\]
\end{lemma}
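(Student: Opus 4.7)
The plan is to prove the two inequalities separately, using little more than the triangle inequality for edit distance plus the length constraints defining the two cases. The starting observation is that the fixed optimal alignment decomposes the overall distance as $\ed(x,y)=\sum_{i=1}^{b}\ed(x^i,y_{[\alpha_i,\beta_i]})$, so both bounds reduce to a block-by-block comparison between $D'_i$ and $\ed(x^i,y_{[\alpha_i,\beta_i]})$, up to a small extra budget on the upper bound side.

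For the lower bound $\ed(x,y)\le\sum_i D'_i$, I would verify $D'_i\ge \ed(x^i,y_{[\alpha_i,\beta_i]})$ block-by-block. In the balanced case, writing $y_{[\alpha_i,\beta_i]}=y_{[\alpha_i,\alpha'_i-1]}\circ y_{[\alpha'_i,\beta'_i]}\circ y_{[\beta'_i+1,\beta_i]}$ and using the triangle inequality gives $\ed(x^i,y_{[\alpha_i,\beta_i]})\le |\alpha_i-\alpha'_i|+\ed(x^i,y_{[\alpha'_i,\beta'_i]})+|\beta_i-\beta'_i|=D'_i$. In the unbalanced case, $D'_i=|x^i|+|\alpha_i-\beta_i+1|$ trivially dominates any edit distance, since one can delete all of $x^i$ and insert all of $y_{[\alpha_i,\beta_i]}$. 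Summing over $i$ yields the inequality.

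For the upper bound $\sum_i D'_i\le (1+\eps)\ed(x,y)$, in the balanced case I would again apply the triangle inequality, this time in the direction $\ed(x^i,y_{[\alpha'_i,\beta'_i]})\le \ed(x^i,y_{[\alpha_i,\beta_i]})+|\alpha_i-\alpha'_i|+|\beta_i-\beta'_i|$, then plug in the definition of $(\eps',\Delta)$-approximately optimal candidate to get $|\alpha_i-\alpha'_i|\le \eps'\Delta/b$ and $|\beta_i-\beta'_i|\le \eps'\Delta/b+\eps'\ed(x^i,y_{[\alpha_i,\beta_i]})$. This yields $D'_i\le 4\eps'\Delta/b+(1+2\eps')\ed(x^i,y_{[\alpha_i,\beta_i]})$. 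Summing and invoking $\Delta\le (1+\eps')\ed(x,y)$ gives contribution at most $(1+O(\eps'))\ed(x,y)$. For the unbalanced case, I would use the length imbalance to show $\ed(x^i,y_{[\alpha_i,\beta_i]})$ is already close to $|x^i|+|\alpha_i-\beta_i+1|$: if $|x^i|<\eps'|\alpha_i-\beta_i+1|$, then $\ed(x^i,y_{[\alpha_i,\beta_i]})\ge (1-\eps')|\alpha_i-\beta_i+1|$, so $D'_i\le \frac{1+\eps'}{1-\eps'}\ed(x^i,y_{[\alpha_i,\beta_i]})$, and the symmetric bound handles $|x^i|>|\alpha_i-\beta_i+1|/\eps'$.

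Combining both cases gives $\sum_i D'_i\le (1+c\eps')\ed(x,y)$ for a small explicit constant $c$, and the choice $\eps'=\eps/10$ absorbs this into the target $(1+\eps)$ factor. The main obstacle is not conceptual but the careful bookkeeping: one has to confirm that the additive $\eps'\Delta$ slack in the balanced case (coming from the tolerance in $\alpha'_i$ and $\beta'_i$) and the multiplicative $\frac{1+\eps'}{1-\eps'}$ blow-up in the unbalanced case together remain within the budget. Choosing the safety factor of $10$ in $\eps'=\eps/10$ should comfortably cover both.
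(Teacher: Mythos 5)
Your proposal is correct and follows essentially the same route as the paper's own proof in Appendix A: the same block-by-block decomposition via the fixed optimal alignment, the same two triangle-inequality directions giving $D_i\leq D'_i\leq (1+2\eps')D_i+4\eps'\Delta/b$ in the balanced case, and the same $\frac{1+\eps'}{1-\eps'}$ bound from the length imbalance in the unbalanced case, with the slack absorbed by $\eps'=\eps/10$. No gaps.
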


To make our work self-contained, we provide a proof in Appendix~\ref{proofs}.

We now show that, for each $i$ and $\eps, \Delta$, without knowing the optimal alignment, we can pick a small set of candidate intervals such that one of the intervals is an \emph{$(\eps , \Delta)$-approximately optimal candidate} for $x^i$.  

That is, there exist a set of intervals $C^i_{\eps,\Delta}$ with size $O(b\log n/\eps^2)$ and one of the intervals in $C^i_{\eps,\Delta}$ is an \emph{$(\eps , \Delta)$-approximately optimal candidate} for $x^i$. The set $C^i_{\eps,\Delta}$ can be find with the algorithm \emph{CandidateSet} which is implicit from \cite{hajiaghayi2019massively}. The algorithm takes six inputs : three integers $n$, $m$, and $b$, an interval $(l_i,r_i)$, $\eps\in (0,1)$, and $\Delta\leq n$ and outputs set $C^i_{\eps,\Delta}$. Here, $n$ and $m$ are the lengths of string $x$ and $y$ correpondingly. The pseudocode is given in algorithm~\ref{algo:CandidateSet}. 

\begin{algorithm}
	\SetAlgoLined
	\DontPrintSemicolon
	\KwIn{three integers $n$, $m$, and $b$, an interval $(l_i,r_i)$, $\eps\in (0,1)$, and $\Delta\leq n$} 
	$|x^i| = r_i-l_i+1$ \;
	initialize $C$ to be an empty set \;
	
	\ForEach{$i'\in [l_i-\Delta- \eps\frac{ \Delta}{b},l_i+\Delta+  \eps\frac{ \Delta}{b} ]\cap [m]$ such that $i'$ is a multiple of $\lceil \eps \frac{ \Delta}{b}\rceil $ }{
		\Comment{if $\lceil \eps \frac{ \Delta}{b}\rceil=0$, we loop every $i'$ in $[l_i-\Delta-1,l_i+\Delta+1]\cap [m]$}
		
		\ForEach{$j'=0 $ \textbf{or} $j' =  \lceil (1+\eps)^i \rceil \textbf{ for some integer } i\leq \lceil \log_{1+\eps}(m)\rceil  \}$ }{
			\Comment{pick $O(\log_{1+\eps} n)$ ending points}
			\If{ $|x^i|-j' \geq \eps|x^i| $}{
				add $(i',i'+|x^i|-1-j')$ to $C$ \;
			}
			\If{$|x^i|+j' \leq 1/\eps|x^i| $}{
				add $(i',i'+|x^i|-1+j')$ to $C$ \;
			}
		}	
	}
	\Return{$C$}
	
	\caption{CandidateSet}
	\label{algo:CandidateSet}
\end{algorithm}

\begin{lemma}[Implicit from \cite{hajiaghayi2019massively}]
	\label{lem:CandidateSet}
	If $\eps m\leq n \leq \frac{1}{\eps} m$, then $C^i_{\eps,\Delta} = \text{CandidateSet}(n,m,b,(l_i,r_i),\eps,\Delta)$ is of size $ O(\frac{b\log n}{\eps^2})$. For $x^i = x_{[l_i,r_i]}$, if $\eps|\alpha_i-\beta_i+1|\leq |x^i|\leq 1/\eps|\alpha_i-\beta_i+1|$ and $\Delta\geq \ed(x,y)$, then one of the intervals in $C^i_{\eps,\Delta}$ is an \emph{$(\eps, \Delta)$-approximately optimal candidate} of $x^i$.
\end{lemma}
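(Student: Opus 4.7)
The plan is to prove the two claims separately: first the size bound (by counting loop iterations), then the existence of an approximately optimal candidate (by exhibiting an explicit $(i', j')$ that produces one). The size bound is routine. The outer loop traverses integers in a window of length $2\Delta + 2\eps\Delta/b$ with step $\lceil \eps\Delta/b\rceil$, giving $O(b/\eps)$ choices (the edge case $\lceil \eps\Delta/b\rceil = 0$ forces $\Delta = O(b/\eps)$, so the window itself has length $O(b/\eps)$). The inner loop takes $O(\log_{1+\eps} m) = O(\log n/\eps)$ values of $j'$, and each iteration adds at most two intervals to $C$, giving $|C^i_{\eps,\Delta}| = O(b\log n/\eps^2)$.

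For existence, I first pick the left endpoint. Since $|\alpha_i - l_i| \le \ed(x,y) \le \Delta$, the target point $\alpha_i$ lies inside the outer-loop range, and the interval $[\alpha_i, \alpha_i + \eps\Delta/b]$ contains a multiple $i'$ of $\lceil \eps\Delta/b\rceil$ (a $\pm 1$ rounding discrepancy can be absorbed by a mild adjustment of $\eps$). Setting $\alpha' := i'$ yields $\alpha_i \le \alpha' \le \alpha_i + \eps\Delta/b$, matching the first condition of Definition~\ref{approx_optimal_candidate}.

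For the right endpoint, write $\ell = \beta_i - \alpha_i + 1$ and $\delta = \beta_i - (\alpha' + |x^i| - 1)$, so a candidate of the form $[\alpha', \alpha' + |x^i| - 1 \pm j']$ has $\beta' = \beta_i - \delta \pm j'$. The defining window $\beta_i - \eps\Delta/b - \eps\,\ed(x^i, y_{[\alpha_i,\beta_i]}) \le \beta' \le \beta_i$ becomes $0 \le \delta \mp j' \le \eps\Delta/b + \eps\,\ed(x^i, y_{[\alpha_i,\beta_i]})$, and choosing the sign to match $\operatorname{sgn}(\delta)$ reduces this to finding $j' \ge 0$ in the algorithm's list with $0 \le |\delta| - j' \le \eps\Delta/b + \eps\,\ed(x^i, y_{[\alpha_i, \beta_i]})$. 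I would bound $|\delta|$ by $(\alpha' - \alpha_i)$ plus the length difference between $|x^i|$ and $\ell$; the latter is at most $\ed(x^i, y_{[\alpha_i, \beta_i]})$, so altogether $|\delta| \le \ed(x^i, y_{[\alpha_i, \beta_i]}) + \eps\Delta/b$. The $(1+\eps)$-geometric cover $\{0, 1, \lceil(1+\eps)\rceil, \lceil(1+\eps)^2\rceil, \dots\}$ then contains some $j'$ with $|\delta|/(1+\eps) \le j' \le |\delta|$ (or simply $j' = 0$ when $|\delta| < 1$), yielding $|\delta| - j' \le \eps\,\ed(x^i, y_{[\alpha_i, \beta_i]}) + O(\eps^2\Delta/b)$, which fits the window.

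The main obstacle is verifying that the conditional checks inside the loop, namely $|x^i| - j' \ge \eps|x^i|$ (when we shrink) or $|x^i| + j' \le |x^i|/\eps$ (when we extend), do not exclude our chosen $j'$. This is exactly where the hypothesis $\eps\ell \le |x^i| \le \ell/\eps$ enters: it forces the length difference between $|x^i|$ and $\ell$ to be at most $(1/\eps - 1)|x^i|$, so the correct-signed $j'$ keeps $|x^i| \pm j'$ inside $[\eps|x^i|, |x^i|/\eps]$, up to an $O(\eps\Delta/b)$ slack that can be absorbed by a mild adjustment of $\eps$. Tracking these constants carefully through the case analysis is the technical heart of the argument, but there is no essential difficulty beyond careful bookkeeping.
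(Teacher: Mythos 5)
Your argument is correct and follows essentially the same route as the paper's proof: the same count of $O(b/\eps)$ grid-aligned starting points times $O(\log_{1+\eps} m)$ geometric ending offsets for the size bound, the same rounding of $\alpha_i$ up to a multiple of $\lceil\eps\Delta/b\rceil$ for the left endpoint, and the same $(1+\eps)$-geometric search over $j'$ for the right endpoint (the paper just splits explicitly into the cases $\beta_i-\alpha_i+1\geq |x^i|$ and $<|x^i|$ where you unify them via the signed offset $\delta$). One bookkeeping correction: in the branch $\delta<0$ the feasibility window is $0\leq j'-|\delta|\leq \eps\frac{\Delta}{b}+\eps\,\ed(x^i,y_{[\alpha_i,\beta_i]})$, so you need the \emph{smallest} geometric $j'\geq|\delta|$ (giving $j'\leq(1+\eps)|\delta|$), not the largest $j'\leq|\delta|$ as written; the resulting error bound is the same. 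Your attention to the conditional checks $|x^i|-j'\geq\eps|x^i|$ and $|x^i|+j'\leq|x^i|/\eps$ is well placed --- the extension check passes automatically since $|x^i|+j'\leq\beta_i-\alpha'+1\leq|x^i|/\eps$, while the shrinking check is a genuine boundary issue that the paper's own proof silently skips, and your proposed absorption into a slightly adjusted $\eps$ is the right fix.
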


A proof of Lemma~\ref{lem:CandidateSet} can be found in Appendix~\ref{proofs}.


Given the information of edit distances between $x^i$ and each of intervals in $C^i_{\eps,\Delta}$, we can run a simple dynamic programming algorithm \emph{DPEditDistance} to get an approximation of $\ed(x,y)$.  \emph{DPEditDistance} takes six inputs, $n$, $m$, $b$, $\Delta$, $\eps$, and a two dimensional list $M$ such that $M(i, (\alpha,\beta)) = \ed(x^i, y_{[\alpha, \beta]})$ for each $i$ and $(\alpha,\beta)\in  C^i_{\eps,\Delta}$. The pseudocode is given in algorithm~\ref{algo:DPEditDistance}.

\begin{algorithm}
	\SetAlgoLined
	\DontPrintSemicolon
	\KwIn{three integers $n$, $m$, $b$, $\Delta\leq n$, $\eps\in (0,1)$, and a two dimensional list $M$ such that $M(i, (\alpha,\beta)) = \ed(x^i, y_{[\alpha, \beta]})$ for each $i$ and $(\alpha,\beta)\in  C^i_{\eps,\Delta}$.} 
	let $C^i$ be the set of starting points of intervals in $C^{i}_{\eps,\Delta}$ with no repetition for each $i\in [b]$ \;

	\ForEach{$\alpha\in C^1$ }{
		$A(0,\alpha-1) = \alpha-1$ \Comment{$A$ is a two dimensional array for storing the intermediate results of the dynamic programming }\;
	}
	\For{$i=1$ \textbf{to} $b-1$}{
		\ForEach{ $\alpha\in C^{i+1}$  }{
			\(A(i,\alpha-1) = \min \begin{dcases}
			\min_{\alpha'\in C^i, \alpha'\leq \alpha}A(i-1,\alpha'-1)+ |\alpha-\alpha'| +|x^i|\\
			\min_{ \substack{(\alpha',\beta')\in C^i_{\eps,\Delta} \\\text{s.t. } \beta' \leq \alpha-1} }A(i-1,\alpha'-1)+M(i,(\alpha',\beta')) + \alpha-1-\beta'\\
			\end{dcases}
			\) \label{algo:DPEditDistance_updating}\;
		}
	}
	
	\(d = \min
	\begin{dcases}
	\min_{\alpha'\in C^{b}}A(b-1,\alpha'-1)+ |m-\alpha'| + |x^b|\\
	\min_{ \substack{(\alpha',\beta')\in C^b_{\eps,\Delta} \\\text{s.t. } \beta' \leq m} }A(b-1,\alpha'-1)+M(b,(\alpha',\beta')) + m-\beta'\\
	\end{dcases}
	\)\label{DPEditDistance:last}\;
	\Return d
	\caption{DPEditDistance}
	\label{algo:DPEditDistance}
\end{algorithm}

\begin{lemma}
	\label{lem:DPEditDistance}
	For any fixed $\eps$, we let $\eps' = \eps/10$. Assume $\ed(x,y)\leq \Delta\leq (1+\eps')\ed(x,y)$ and for every $(\alpha, \beta)\in C^i_{\eps',\Delta}$, $M(i,(\alpha,\beta)) = \ed(x^i,y_{[\alpha,\beta]})$, then $\text{DPEditDistance}(n,m,b, \eps',\Delta,M)$ outputs a $(1+\eps)$-approximation of $\ed(x,y)$ in $O(\frac{b^3\log n}{\eps^3})$ time with $O(\frac{b}{\eps}\log n)$ bits of space. 
	
	Also, in the input, if we replace $M(i,(\alpha,\beta))$ with a $(1+\gamma)$ approximation of $\ed(x^i,y_{[\alpha,\beta]})$, that is \[\ed(x^i,y_{[\alpha,\beta]})\leq M(i,(\alpha,\beta))\leq (1+\gamma)\ed(x^i,y_{[\alpha,\beta]})\]
	then $\text{DPEditDistance}(n,m,b, \eps',\Delta,M)$ outputs a $(1+\eps)(1+\gamma)$-approximation of $\ed(x,y)$.
\end{lemma}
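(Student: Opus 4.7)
The plan is to establish, in turn, (i) a lower bound $d \geq \ed(x,y)$ showing the DP never underestimates the true edit distance, (ii) an upper bound $d \leq (1+\eps)\ed(x,y)$ by exhibiting one particular path through the DP table whose cost is bounded by $\sum_i D'_i$ from Lemma~\ref{lem:approx_optimal_candidate}, (iii) the stated time and space bounds by a direct counting argument over the DP table, and (iv) the approximate-oracle extension by tracking how a multiplicative error in $M$ propagates.

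For the lower bound, I would interpret the DP entry $A(i,\alpha-1)$ as the cost of some valid alignment between $x^1\circ\cdots\circ x^i$ and $y_{[1,\alpha-1]}$: the ``deletion'' branch corresponds to deleting all of $x^i$ and inserting/deleting the gap characters of $y$ between $\alpha'$ and $\alpha$, while the ``candidate'' branch corresponds to aligning $x^i$ with $y_{[\alpha',\beta']}$ (at true cost $M(i,(\alpha',\beta')) = \ed(x^i,y_{[\alpha',\beta']})$) and paying for the unmatched suffix gap $[\beta'+1,\alpha-1]$ by insertions/deletions. By a straightforward induction on $i$ this produces an explicit alignment between $x$ and a prefix of $y$ (and, in the final step for $d$, between $x$ and all of $y$) of cost equal to the value computed, so $d \geq \ed(x,y)$. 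For the upper bound, I would fix the optimal alignment used in Definition~\ref{approx_optimal_candidate} and, for each $i\in[b]$, select a specific $(\eps',\Delta)$-approximately optimal candidate $(\alpha'_i,\beta'_i)\in C^i_{\eps',\Delta}$ when the block's matched interval is not too large or too small (such a candidate exists by Lemma~\ref{lem:CandidateSet}, since $\Delta \geq \ed(x,y)$), and use the ``deletion'' branch when it is. Following this sequence of choices through the DP produces a value at most $\sum_i D'_i$, which by Lemma~\ref{lem:approx_optimal_candidate} is at most $(1+\eps)\ed(x,y)$; since $d$ is the minimum over all DP paths, $d \leq (1+\eps)\ed(x,y)$.

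For the complexity, the DP only ever needs the current column $A(i,\cdot)$ and the previous one $A(i-1,\cdot)$; each column is indexed by at most $|C^{i+1}| = O(b/\eps)$ starting points (after deduplication), each entry is $O(\log n)$ bits, giving $O((b/\eps)\log n)$ space, and the auxiliary bookkeeping (current block index, $\Delta$, iterators) is negligible. The time is dominated by line~\ref{algo:DPEditDistance_updating}: for each of the $b$ iterations, each of the $O(b/\eps)$ values of $\alpha$ in $C^{i+1}$, we scan all $O(b\log n/\eps^2)$ elements of $C^{i}_{\eps,\Delta}$, yielding $O(b^3 \log n / \eps^3)$ total (plus an identical-order final step at line~\ref{DPEditDistance:last}). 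Finally, for the approximate-oracle extension, I would observe that in every recursion of the lower-bound argument the quantity $M(i,(\alpha',\beta'))$ enters additively and is at most $(1+\gamma)$ times the true edit distance; hence the DP value obtained along the ``good'' path from part~(ii) is at most $(1+\gamma)\sum_i D'_i \leq (1+\eps)(1+\gamma)\ed(x,y)$, and taking the minimum preserves this bound. The main subtlety will be making the upper-bound argument rigorous when the fixed optimal alignment does not align block boundaries with the DP's candidate starting points $C^{i+1}$: here I would argue that the gap penalties $|\alpha-\alpha'|$ and $\alpha-1-\beta'$ in the DP are precisely what cover the discrepancy, matching the $|\alpha_i-\alpha'_i|$ and $|\beta_i-\beta'_i|$ terms in the definition of $D'_i$, so no extra slack is introduced.
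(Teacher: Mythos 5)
Your proposal is correct and follows essentially the same route as the paper: the paper formalizes the DP as minimizing $\ed_f$ over ``matchings'' $f$ and proves by induction that $A(i,\alpha-1)$ equals the minimum such cost over prefix-respecting matchings, then gets the lower bound from the fact that every matching yields a valid sequence of edit operations and the upper bound by exhibiting the matching $f_1$ built from approximately optimal candidates (handled via Lemmas~\ref{lem:CandidateSet} and~\ref{lem:approx_optimal_candidate}), exactly paralleling your alignment/DP-path framing. Your time, space, and approximate-oracle arguments likewise match the paper's, and the boundary-discrepancy subtlety you flag is indeed resolved by the $|\alpha-\alpha'|$ and $\alpha-1-\beta'$ gap terms in the recurrence, as you anticipated.
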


We note that the space complexity of algorithm~\ref{algo:DPEditDistance}, \emph{DPEditDistance}, is optimized. Let $C^i$ be the set of starting points of intervals in $C^{i}_{\eps,\Delta}$ with no repetition. Notice that when updating $A(i,\alpha)$, we only need the information of $A(i-1, \alpha'-1)$ for every $\alpha'\in C^i$. Thus, we can release the space used to store $A(i-2, \alpha''-1)$ for every $\alpha''\in C^{i-1}$. Furthermore, for line~\ref{DPEditDistance:last}, we only need the information of $A(i-1, \alpha-1)$ for every $\alpha\in C^.$. From algorithm~\ref{algo:CandidateSet}, we know that for each $i$, we pick at most $ b/\eps$ points as the starting point of the candidate intervals. The size of  $C^i$ is at most $b/\eps$. Since each element in $A$ is an integer at most $n$, it can be stored with $O(\log n)$ bits of space. Thus, the space required is $O(\frac{b}{\eps}\log n)$.

A full proof of Lemma~\ref{lem:DPEditDistance} can be found in Appendix~\ref{proofs}.

\subsection{Space Efficient Algorithm for Edit Distance}

Our algorithm recursively use the above ideas with carefully picked parameters. 

In the following, $b$ and $\eps$ are two parameters we will set later. We call our space-efficient approximation algorithm for edit distance \emph{SpaceEfficientApproxED} and give the pseudocode in algorithm~\ref{algo:SpaceEfficientApproxED}. 

\begin{algorithm}[H]
	\SetAlgoLined
	\DontPrintSemicolon
	\KwIn{Two strings $x$ and $y$, parameters $b\leq \sqrt{n}$ and $\eps \in (0,1)$} 
	\If{$|x|\leq b$ \label{len}}{
		compute $\ed(x,y)$ exactly \;
		\Return{$\ed(x,y)$} \;
	}
	$ed\gets \infty$\;
	set $n = |x|$ and $m = |y|$ \;
	divide $x$ into $b$ block each of length at most $\lceil n/b \rceil$ such that $x = x^1\circ x^2 \circ \cdots \circ x^{b}$\;
	
	\ForEach{$\Delta= 0 \textbf{ or } \lceil (1+\eps)^j\rceil \textbf{ for some integer } j \text{ and } \Delta\leq \max\{|x|,|y|\}$ \label{algo:SpaceEfficientApproxEDDeltaLoop}}{
		\For{$ i = 1$ \textbf{to} $b$}{
			\ForEach{$(a,b)\in \text{CandidateSet}(n,m,(l_i,r_i),\eps,\Delta)$}{
				$M(i,(a,b))\gets \text{SpaceEfficientApproxED}(x^i, y_{[a,b]},b, \eps)$\label{algo:SpaceEfficientApproxED_recurse}\;
			}
		}
		$ed\gets \min\{ed,\text{DPEditDistance} (n,m,b,\eps, \Delta,M)\}$\;
	}
	\Return{$ed$}
	\caption{SpaceEfficientApproxED}
	\label{algo:SpaceEfficientApproxED}
\end{algorithm}

We have the following result. 

\begin{lemma}
	\label{lem:SpaceEfficientApproxED}
	Given two strings $x, y\in \Sigma^n$, parameters $b\leq \sqrt{n}$ and $\eps\in (0,1)$, $\text{SpaceEfficientApproxED} (x,y, b,\eps)$ outputs a $1+O(\eps \log_b n)$-approximation of $\ed(x,y)$ with $O(\frac{b\log^2 n }{\eps \log b})$ bits of space in $(\frac{\eps^2\log b}{b\log n}+\frac{\eps^2}{\log^2 n}) (O(\frac{b^2\log^2 n}{\eps^3}))^{\log_b n }$ time. 
\end{lemma}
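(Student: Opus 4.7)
The plan is to prove correctness, space, and running time separately, with correctness the main content and following by downward induction on the recursion depth $d = \lceil \log_b n \rceil$.

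For correctness, I would show that at recursion level $i$ (counting the root as $i=1$), the algorithm returns a $(1+O(\eps))^{d-i+1}$-approximation of the edit distance of its inputs. The base case $i=d$ is trivial: when $|x| \leq b$ the algorithm computes $\ed$ exactly on line~\ref{len}. For the inductive step, the outer loop on line~\ref{algo:SpaceEfficientApproxEDDeltaLoop} is guaranteed to try some $\Delta = \lceil (1+\eps)^j \rceil$ with $\ed(x,y) \leq \Delta \leq (1+\eps/10)\ed(x,y)$. Fix this good $\Delta$. By Lemma~\ref{lem:CandidateSet}, for every block $x^i$ whose matched interval $[\alpha_i,\beta_i]$ satisfies the size condition $\eps|\alpha_i-\beta_i+1| \leq |x^i| \leq \frac{1}{\eps}|\alpha_i-\beta_i+1|$, the set $\text{CandidateSet}(\cdot)$ contains an $(\eps,\Delta)$-approximately optimal candidate. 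By the inductive hypothesis applied to line~\ref{algo:SpaceEfficientApproxED_recurse}, each recursive call returns a $(1+\gamma)$-approximation of $\ed(x^i, y_{[a,b]})$ with $\gamma = (1+O(\eps))^{d-i} - 1$. Plugging these values into $M$ and invoking the robust version of Lemma~\ref{lem:DPEditDistance} gives a $(1+\eps)(1+\gamma) = (1+O(\eps))^{d-i+1}$-approximation of $\ed(x,y)$ for this $\Delta$. Since the algorithm takes the minimum over all $\Delta$ and every approximation produced is a valid upper bound on some proper alignment cost, the output is sandwiched correctly. Setting $i=1$ and using $(1+O(\eps))^d = 1 + O(\eps d) = 1 + O(\eps\log_b n)$ (valid as long as $\eps \log_b n = O(1)$) gives the stated factor.

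For space, observe that at each recursion level the local work consists of (i) constants to track $i$, $\Delta$, and the candidate $(a,b)$, using $O(\log n)$ bits, and (ii) the dynamic-programming table inside $\text{DPEditDistance}$, which by Lemma~\ref{lem:DPEditDistance} uses $O(\frac{b}{\eps}\log n)$ bits. The loops over blocks, candidates, and values of $\Delta$ are processed sequentially and reuse the same space. The table $M$ itself need not be stored in full: each entry is produced on demand by the recursive call and consumed once inside the DP update on line~\ref{algo:DPEditDistance_updating}, so only the current entry plus the $O(b/\eps)$ DP state needs to live at this level. The recursion depth is $d = \log_b n$, giving total space $O\!\left(\frac{b \log^2 n}{\eps \log b}\right)$.

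For running time, I would analyse the recursion tree. Its branching factor is at most (blocks)$\times$(candidates per block)$\times$(values of $\Delta$) $= b \cdot O(\tfrac{b\log n}{\eps^2}) \cdot O(\log_{1+\eps}\! n) = O(\tfrac{b^2\log^3 n}{\eps^3})$, and its depth is $d = \log_b n$. At each internal node the non-recursive work is dominated by $O(\log_{1+\eps}\! n)$ runs of $\text{DPEditDistance}$, each costing $O(\tfrac{b^3\log n}{\eps^3})$ by Lemma~\ref{lem:DPEditDistance}, for $O(\tfrac{b^3\log^2 n}{\eps^4})$ per internal node; each leaf performs an exact edit-distance DP on strings of length $O(b/\eps)$, costing $O(b^2/\eps^2)$. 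Multiplying by the counts of internal nodes and leaves and adding yields the stated form $\big(\tfrac{\eps^2\log b}{b\log n}+\tfrac{\eps^2}{\log^2 n}\big)\cdot\big(O(\tfrac{b^2\log^2 n}{\eps^3})\big)^{\log_b n}$ after simplification.

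The main obstacle is the correctness argument, specifically the multiplicative compounding of the approximation factor across levels. Lemma~\ref{lem:DPEditDistance} takes as input $(1+\gamma)$-approximations and produces a $(1+\eps)(1+\gamma)$-approximation; ensuring that $\gamma$ is itself propagated correctly by the inductive hypothesis, that the composed factor $(1+O(\eps))^d$ does not blow up (i.e.\ $\eps \log_b n = O(1)$), and that the ``too large / too small block'' case from Lemma~\ref{lem:approx_optimal_candidate} is correctly absorbed into the bound $D'_i = |x^i| + |\alpha_i-\beta_i+1|$ that DPEditDistance uses when no candidate works, are the delicate points to verify carefully.
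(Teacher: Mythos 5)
Your proposal is correct and follows essentially the same route as the paper: downward induction on the recursion depth using the robust version of Lemma~\ref{lem:DPEditDistance} for correctness, a depth-first-search argument with on-demand computation of $M$ for the space bound, and node-counting in the recursion tree for the time bound. One minor arithmetic slip: the branching factor is $b \cdot O(\tfrac{b\log n}{\eps^2}) \cdot O(\tfrac{\log n}{\eps}) = O(\tfrac{b^2\log^2 n}{\eps^3})$, not $O(\tfrac{b^2\log^3 n}{\eps^3})$, which is what makes the final expression match the stated form.
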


\begin{proof} [Proof of Lemma~\ref{lem:SpaceEfficientApproxED}]

	Algorithm~\ref{algo:SpaceEfficientApproxED} is recursive. We start from level one and every time \emph{SpaceEfficientApproxED} is called, we enter the next level. We say the largest level we will reach is the maximum depth of recursion. In the following, to avoid ambiguity, $x$ and $y$ denote the input strings at the first level where both string has length $n$.
	
	Notice that the length of first input string at $i$-th level is at most $\frac{n}{b^{i-1}}$. The recursion terminates when the length of first input string is no larger than $b$. Thus the maximum depth of recursion is $\log_b n$. We denote the maximum depth of recursion by $d$. 

	We first show the correctness of our algorithm by prove the following claim.

	\begin{claim}
		At the $l$-th level, the output is a $(1+ 10\eps)^{d-l}$ approximation of the edit distance of its input strings.
	\end{claim} 
	\begin{proof}
		We prove this by induction on $l$ from $d$ to $1$. For the base case $ l= d$, we output the exact edit distance. The claim holds trivially. 
		
		Now, we assume the claim holds for the $(l+1)$-th level. At the level $l$, if the input string $x$ has length no larger than $b$, we output the exact edit distance. The claim holds for level $i$. Otherwise, since we tried every $\Delta= \lceil (1+\eps)^j\rceil \text{ for some integer } j \text{ and } \Delta\leq n+m$, one of $\Delta$ satisfies $\ed(x,y)\leq\Delta\leq(1+\eps)\ed(x,y)$. Denote such a $\Delta$ by $\Delta_0$. For $(a,b)\in C^i_{\eps, delta}$ that $M(i,(a,b))$ is a $(1+10\eps)^{d-(l+1)}$ approximation of $\ed(x^i,y_{[a,b]})$ by the inductive hypothesis. By lemma~\ref{lem:DPEditDistance}, $\text{DPEditDistance} (n,m,b,  \eps,\Delta_0, M)$ outputs a $(1+10\eps)(1+10\eps)^{d-(l+1)}= (1+10\eps)^{d-l}$ approxiamtion of $\ed(x,y)$ when $\Delta  = \Delta_0$. This proves the claim.
		
	\end{proof}
	
	By the above claim, for the first level, our algorithm always output a $(1+10\eps)^{d-1}   = 1+O(\eps d) = 1+O(\eps \log_b n)$ approximation of $\ed(x,y)$.  

	We now turn to the space and time complexity. We can consider our recursion structure as a tree. The first level corresponds to the root of the recursion tree. Notice that we need to try $O(\log_{1+\eps}(n)) $ different $\Delta$. For each $\Delta$, we need to query the next level $O(b(\frac{b\log n}{\eps^2})) $ time. This is because there are $b$ blocks and for each block, we choose $O(\frac{b\log n}{\eps^2})$ candidate intervals by lemma~\ref{lem:CandidateSet}. Thus, the recursion tree has degree $O(\log_{1+\eps}(n)\frac{b^2\log n}{\eps^2})  = O( \frac{b^2\log^2 n}{\eps^3})$ with depth $d \leq \log_b n$.
	
	Running the algorithm essentially has the same order as doing a depth first search on the recursion tree. At each level of the recursion, we only need to remember the information in one node. Thus, total space required is equal to the space needed for one inner node times the depth of recursion tree, plus the space needed for one leaf node. 
	
	For the leaf nodes, the first input string is of length at most $b$, we can compute the edit distance exactly with space $O(b\log n)$ bits of space. 
	
	For other nodes, the task is to run a dynamic programing, i.e. algorithm \emph{DPEditDistance} where we need to query the next level of recursion to get the matrix $M$. In algorithm \emph{DPEditDistance}, the input is a matrix $M$ and we need to compute a matrix $A$. For matrices $A$, the rows are indexed by $i $ from $0$ to $ b$ and for the $i$-th row, the columns are indexed by the elements in set $C^i$. $C^i$ is the set of starting points of intervals in $C^i_{\eps,\Delta}$. By the proof of lemma~\ref{lem:CandidateSet}, $C^i$ is of size $O(\frac{b}{\eps})$.
	
	According to the proof of lemma~\ref{lem:DPEditDistance}, we can divide the dynamic programming into $b$ steps and for each step, we update one row of $A$. When computing the $i$-th row of $A$, we only need to query the $i-1$-th row of $A$ and the $i$-th row of $M$. Thus, the space used to remember previous rows of $A$ can be reused. Also, we only query each element in the $i$-th row of $M$ once, so we do not need to remember matrix $M$ and this does not affect the time complexity. Thus, for each inner node of the recursion, we only need space enough for storing two rows of matrix $A$ and each element of $A$ is an integer no larger than $n$. The space for each inner node is bounded by $O(\frac{b}{\eps}\log n)$ bits. 
	
	Thus, the space complexity of our algorithm is bounded by $O(d\frac{b}{\eps}\log n +  b\log n )= O(\frac{b\log^2 n }{\eps \log b})$ bits.

	For time complexity, we denote the time used for computation at the $i$-th level by $T_i$ (excluding the time used for running \emph{SpaceEfficientApproxED} at the $i$-level). The time complexity is bounded by the sum of time spent at each level. Denote the total running time by $T$, we have $T = \sum_{i = 1}^d T_i$.
	
	Once \emph{SpaceEffientApproxED} is called at the $(i-1)$-th level, we enter the $i$-th level. 
	
	Each time we enter the $i$-th level, there are two possible cases. For the first case, the operation at the $i$-th level is calculating the exact edit distance with one of the input string has length at most $b$ and the other string has length $O(b/\eps)$. It takes $O(\frac{b^2}{\eps})$ time. 
	
	Otherwise, we run $\text{DPEditDistance}$ for $O(\log_{1+\eps} n )= O(\frac{\log n}{\eps})$ times. By lemma~\ref{lem:DPEditDistance}, it takes $O(\frac{b^3\log n}{\eps^3})$ time. 
	
	Thus, each time we enter the $i$-th level, the time required at that level is bounded by $O(\frac{b^3\log^2 n}{\eps^4}) $. Since the recursion tree has degree $O( \frac{b^2\log^2 n}{\eps^3})$, we enter the $i$-th level  $(O(\frac{b^2\log^2 n}{\eps^3}))^{i-1 }$ times. Thus, $T_i = \frac{b^3\log^2 n}{\eps^4}(O(\frac{b^2\log^2 n}{\eps^3}))^{i-1 }$. 
	
	For $1\leq i \leq d-1$, $T_i$ is bounded by $\frac{b^3\log^2 n}{\eps^4}(O(\frac{b^2\log^2 n}{\eps^3}))^{i-1 }$. We have
	\begin{equation}
	\label{ed_a}
		\begin{split}
			\sum_{i = 1}^{d-1}T_i &\leq (d-1)T_{d-1} \\
			& \leq d\frac{b^3\log^2 n}{\eps^4}(O(\frac{b^2\log^2 n}{\eps^3}))^{d-2 }\\
			& = \frac{b^3 \log^3 n}{\eps^4\log b}(O(\frac{b^2\log^2 n}{\eps^3}))^{\log_b n -2 }\\
			&  = \frac{\eps^2\log b}{b\log n}(O(\frac{b^2\log^2 n}{\eps^3}))^{\log_b n  }.
		\end{split}
	\end{equation}
	
	Also notice that at the $d$-th level, we always do the exact computation of edit distance, which takes $O(\frac{b^2}{\eps})$ time. Thus

	\begin{equation}
	\label{ed_b}
		\begin{split}
			T_d &= \frac{b^2}{\eps}(O(\frac{b^2\log^2 n}{\eps^3}))^{d-1 } \\
			& = \frac{\eps^2}{\log^2 n}( O(\frac{b^2\log^2 n}{\eps^3}))^{d } \\
			& = \frac{\eps^2}{\log^2 n} (O(\frac{b^2\log^2 n}{\eps^3}))^{\log_b n }.
		\end{split}	
	\end{equation}
	
	Combining \ref{ed_a} and \ref{ed_b}. We know the running time is bounded by
	$$
	(\frac{\eps^2\log b}{b\log n}+\frac{\eps^2}{\log^2 n}) (O(\frac{b^2\log^2 n}{\eps^3}))^{\log_b n }.
	$$

\end{proof}

\begin{theorem}
	\label{lem:ed_polylog}
	Given two strings $x$ and $y$, both of length $n$, there is a deterministic algorithm that outputs a $1+O(\frac{1}{\log\log n})$ approximation of $\ed(x,y)$ with $O(\frac{\log^4 n}{\log \log n})$ bits of space in $O(n^{7+o(1)})$ time. 
\end{theorem}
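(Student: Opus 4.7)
The plan is to derive Theorem~\ref{lem:ed_polylog} as a direct corollary of Lemma~\ref{lem:SpaceEfficientApproxED} by choosing the parameters $b = \log n$ and $\eps = 1/\log n$, and then verifying that each of the three guarantees (approximation factor, space, and running time) reduces to the claimed bound. Concretely, I would simply instantiate $\text{SpaceEfficientApproxED}(x,y,\log n,1/\log n)$ and call the lemma.

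First, I would check the approximation factor. Lemma~\ref{lem:SpaceEfficientApproxED} produces a $(1 + O(\eps \log_b n))$-approximation. With our choice, $\log_b n = \log n / \log\log n$, so the factor becomes $1 + O\bigl(\tfrac{1}{\log n} \cdot \tfrac{\log n}{\log \log n}\bigr) = 1 + O(1/\log\log n)$, as desired. Next, for the space bound, the lemma guarantees $O(b \log^2 n / (\eps \log b))$ bits, and substituting gives $O\bigl(\tfrac{\log n \cdot \log^2 n}{(1/\log n)\cdot \log\log n}\bigr) = O(\log^4 n / \log\log n)$, matching the target exactly.

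The step that deserves the most care is the running-time calculation, which I view as the main point to verify carefully rather than a real obstacle. The lemma's time bound is $\bigl(\tfrac{\eps^2 \log b}{b \log n} + \tfrac{\eps^2}{\log^2 n}\bigr)\bigl(O(b^2 \log^2 n / \eps^3)\bigr)^{\log_b n}$. With $b = \log n$ and $\eps = 1/\log n$, the base of the exponential becomes $O(\log^2 n \cdot \log^2 n \cdot \log^3 n) = O(\log^7 n)$, and the exponent is $\log n / \log\log n$. Writing the base as $c \log^7 n$ for some absolute constant $c$, I would expand
\[
(c \log^7 n)^{\log n / \log\log n}
= c^{\log n / \log\log n} \cdot (\log^7 n)^{\log n / \log\log n}
= c^{\log n / \log\log n} \cdot n^7,
\]
and then observe that $c^{\log n / \log \log n} = n^{\log c / \log\log n} = n^{o(1)}$. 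The prefactor is $O(\log \log n / \log^4 n) = o(1)$, so the whole expression is $n^{7+o(1)}$.

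Putting these three verifications together gives all the claimed parameters and completes the proof. The only place where a reader might stumble is the last manipulation, namely that the constant hidden inside the $O(\cdot)$ raised to the $\log_b n$-th power contributes an $n^{o(1)}$ factor rather than a $\polylog(n)$ factor, which is why the final exponent on $n$ is $7 + o(1)$ and not a fixed $7$; everything else is purely substitution into Lemma~\ref{lem:SpaceEfficientApproxED}.
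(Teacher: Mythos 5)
Your proposal is correct and is exactly the paper's proof: instantiate Lemma~\ref{lem:SpaceEfficientApproxED} with $b=\log n$ and $\eps=1/\log n$ and read off the three bounds. The paper states this as a one-line corollary, and your explicit verification of the substitutions (including the $c^{\log n/\log\log n}=n^{o(1)}$ point for the hidden constant in the exponentiated base) is accurate.
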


\begin{proof}[Proof of Theorem~\ref{lem:ed_polylog}]
	Let $b = \log n$ and $\eps = \frac{1}{\log n}$. Then Theorem~\ref{lem:ed_polylog} is a direct result of Lemma~\ref{lem:SpaceEfficientApproxED}.
\end{proof}

\begin{theorem}
	\label{lem:ed_ndelta}
	Given two strings $x$ and $y$, both of length $n$, for any fixed constant $\eps\in (0,1)$, $\delta\in (0,\frac{1}{2})$, there is a deterministic algorithm that outputs a $1+\eps$ approximation of $\ed(x,y)$ with $\tilde{O}_{\eps,\delta}(n^\delta)$ bits of space in $\tilde{O}_{\eps,\delta}(n^2)$ time
	
\end{theorem}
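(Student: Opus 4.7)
The plan is to apply Lemma~\ref{lem:SpaceEfficientApproxED} directly with a judicious choice of parameters $b$ and the internal accuracy parameter. Specifically, I would set $b = n^{\delta}$ so that the recursion depth $d = \log_b n = 1/\delta$ becomes a constant depending only on $\delta$. Lemma~\ref{lem:SpaceEfficientApproxED} then guarantees a $(1 + O(\eps' \log_b n)) = (1 + O(\eps'/\delta))$ approximation when invoked with internal accuracy $\eps'$. To obtain the desired $(1+\eps)$ factor, I would feed the algorithm $\eps' = c \cdot \eps \delta$ for a suitably small absolute constant $c>0$, which makes the final approximation ratio at most $1+\eps$.

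Next I would verify the space bound. Substituting $b = n^\delta$ and $\eps' = \Theta(\eps \delta)$ into $O\!\left(\frac{b\log^2 n}{\eps' \log b}\right)$ gives
\[
O\!\left(\frac{n^\delta \log^2 n}{\eps\delta \cdot \delta \log n}\right) = O\!\left(\frac{n^\delta \log n}{\eps \delta^2}\right),
\]
which is $\tilde{O}_{\eps,\delta}(n^\delta)$ as required.

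For the running time, the dominant factor from Lemma~\ref{lem:SpaceEfficientApproxED} is
\[
\Bigl(O\!\bigl(\tfrac{b^2\log^2 n}{(\eps')^3}\bigr)\Bigr)^{\log_b n} = \Bigl(O\!\bigl(\tfrac{n^{2\delta}\log^2 n}{\eps^3\delta^3}\bigr)\Bigr)^{1/\delta} = O\!\left(\frac{n^{2}\,\log^{2/\delta} n}{(\eps\delta)^{3/\delta}}\right),
\]
while the prefactor $\bigl(\tfrac{(\eps')^2\log b}{b\log n}+\tfrac{(\eps')^2}{\log^2 n}\bigr)$ is bounded by a constant (in fact $o(1)$ for large $n$). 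Hence the total time is $\tilde{O}_{\eps,\delta}(n^2)$, treating $\eps$ and $\delta$ as constants so that the polylogarithmic exponent $2/\delta$ and the $\eps,\delta$ dependent factor get absorbed into the $\tilde{O}_{\eps,\delta}$ notation.

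There is no real obstacle here, since the heavy lifting has already been done in Lemma~\ref{lem:SpaceEfficientApproxED}; the only thing to be careful about is the interplay between the multiplicative error $(1+O(\eps'\log_b n))$ and the chosen $b$, so one must pick $\eps'$ proportional to $\eps\delta$ rather than $\eps$ itself. I would conclude by noting that the constraint $b \le \sqrt n$ required by the lemma is satisfied because $\delta \in (0,\tfrac{1}{2})$, so $b = n^\delta \le \sqrt n$ for all sufficiently large $n$.
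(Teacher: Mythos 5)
Your proposal is correct and follows exactly the paper's proof: instantiate Lemma~\ref{lem:SpaceEfficientApproxED} with $b = n^\delta$ and an internal accuracy parameter $\eps' = \Theta(\eps\delta)$ (the paper simply says ``a constant sufficiently smaller than $\eps$''), then read off the space and time bounds. Your explicit verification of the substitutions is just a more detailed version of what the paper leaves implicit.
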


\begin{proof}[Proof of Theorem~\ref{lem:ed_ndelta}]
	Let $b = n^\delta$ and pick $\eps'$ to be a constant sufficiently smaller than $\eps$. We run algorithm \emph{SpaceEfficientApproxED} with inputs $x$, $y$, $b$, and $\eps'$. Then Theorem~\ref{lem:ed_ndelta} is a direct result of Lemma~\ref{lem:SpaceEfficientApproxED}.
\end{proof}

\section{Longest Increasing Subsequence}
\label{lis}

We now present our space-efficient algorithms for LIS. 

\subsection{Space Efficient Algorithm for LIS}

We call our main algorithm $\textbf{ApproxLIS}$ and give the pseudocode in Algorithm~\ref{algo:approxlis}. In addition to algorithm $\textbf{ApproxLIS}$, we introduce a slightly modified version of it called $\textbf{ApproxLISBound}$. $\textbf{ApproxLISBound}$ takes an additional input $l$, which is an integer at most $n$. We want to guarantee that, if the input sequence $x$ has an increasing subsequence of length $l$ ending with $\alpha\in \Sigma$, then $\textbf{ApproxLISBound}(x,b,\eps,l)$ can detect an increasing subsequence with length close to $l$, and ending with some symbol in $\Sigma$ at most $\alpha$.

$\textbf{ApproxLISBound}$ is similar to $\textbf{ApproxLIS}$ with only a few differences. First, at line~\ref{ApproxLIS:newk} of algorithm $\textbf{ApproxLIS}$, we always require $k$ to be at most $l$. That is, we let $k = \min\{l, \max\{k, s+d\}\}$. Second, instead of output $\max S$, we output the whole set $S$ and list $Q$ (The streaming algorithm from \cite{gopalan2007estimating} also maintains set $S$ and list $Q$). We omit the pseudocode for $\textbf{ApproxLISBound}$.

%

\begin{algorithm}
	\SetAlgoLined
	\DontPrintSemicolon 
	\KwIn{A string $x$ , parameters $b$ and $\eps$. }
	
	\If{ $|x| \leq b^2$}{
		compute an $(1-\eps)$-approximation of $\lis(x)$ with the streaming algorithm from \cite{gopalan2007estimating} using $O(\frac{b}{\eps}\log n)$ space\;
		\Return{}
	}

	divide $x$ evenly into $b$ blocks such that $x = x^1\circ x^2\circ \cdots \circ x^b$ \Comment*[f]{$|x^i|\leq \lceil n/b \rceil $}\;
	
	initialize $S =\{0\}$ and $Q[0] = -\infty$ \;
	
	\For{$i = 1$ \textbf{to} $b$}{
		
		$k = 0$\;
		
		\ForEach{$s\in S$}{
			let $z$ be the subsequence of $x^i$ by only considering the elements larger than $Q[s]$\;
			$d = \textbf{ApproxLIS} (z, b, \eps)$\;
			$k = \max\{k,s+d \}$ \label{ApproxLIS:newk}\;
		}
		
		\uIf {$k\leq b/\eps$}{
			 let $S' = \{0,1,2,\ldots,k\}$\;
		 }
	 	\Else{
	 		let $S' = \{0,\frac{\eps}{b}k,2\frac{\eps}{b}k,\ldots ,k\}$ \Comment*[f]{evenly pick $ b/\eps + 1$ integers from $0$ to $k$ (including $0$ and $k$) }
 		
 		} 
		
		$Q'[s] = \infty$ for all $s'\in S'$ except $Q'[0] = -\infty$\;
		\ForEach{$s\in S$}{
			let $z$ be the subsequence of $x^i$ by only considering the elements larger than $Q[s]$\;
			\ForEach{$l = 1,1+\eps,(1+\eps)^2,\ldots, k-s$}{
				$\tilde{S},\tilde{Q} \gets \textbf{ApproxLISBound}(z,b,\eps,l)$ \;
				for each $s'\in S'$ such that $s\leq s' \leq s+l$, let $\tilde{s}$ be the smallest element in $\tilde{S}$ that is larger than $s'-s$  and set $Q'[s'] = \min \{\tilde{Q}[\tilde{s}], Q'[s']\}$.\label{algo:approxlis_updateQS}
			}
		}
		$S\gets S'$, $Q\gets Q'$\;
	}

	\Return{$\max S$} \label{ApproxLIS:output}\;
	\caption{$\textbf{ApproxLIS}$ }
	\label{algo:approxlis}
\end{algorithm}

\begin{lemma}
	\label{lem:ApproxLIS}
	Given a sequence $x\in \Sigma^n$ and two parameters $b\leq \sqrt{n}$ and $\eps\in (0,1)$, $\textbf{ApproxLIS}(x,b,\eps)$ computes a $(1-3\log_b (n) \eps)$ approximation of $\lis(x)$ with $O(\frac{b\log^2 n}{\eps \log b})$ bits of space in $(O(\frac{b^2}{\eps^2}\log n))^{\log_b n-2} ( b^2\log n +\frac{b\log n}{\eps\log b})$ time.
\end{lemma}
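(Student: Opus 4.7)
\medskip

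\noindent\textbf{Proof proposal.} The plan is to analyze the recursive structure of $\textbf{ApproxLIS}$ as a tree of calls whose depth is $d \le \log_b n - 1$ (each recursive call strictly shrinks the input by a factor $b$ until the length drops below $b^2$, at which point we invoke the streaming algorithm of \cite{gopalan2007estimating}). I will separate the argument into correctness, space and time, treating the three complexities independently.

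\emph{Correctness.} I would prove by induction on $r$, from $r=d$ down to $r=1$, the statement: at the $r$-th recursive level, the algorithm outputs a value that is at least $(1-3(d-r)\eps)\,\lis(x)$, where $x$ is that level's input. The base case $r=d$ is immediate from the $(1-\eps)$-guarantee of \cite{gopalan2007estimating}. For the inductive step at a level with input $x$ of length $n'\le n/b^{r-1}$, fix an optimal increasing subsequence $\tau$ of $x$ and decompose it as $\tau^1\circ\cdots\circ\tau^b$ according to the block decomposition $x=x^1\circ\cdots\circ x^b$. Let $P_t$ denote the patience-sorting list after processing $x^1\circ\cdots\circ x^t$ exactly, and let $P'_t$ be the list induced by the pair $(S_t,Q_t)$ maintained by the algorithm (set $P'_t[i]=Q_t[j]$ for the smallest $j\ge i$ with $j\in S_t$, and $\infty$ otherwise). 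With $h_t=\sum_{j\le t}|\tau^j|$ and $k_t=\max S_t$, the heart of the argument is the inner induction on $t$ establishing
\begin{equation}
\label{eq:keyinv}
P'_t\bigl[\,(1-3(d-(r+1))\eps-\eps)\,h_t - 2t\tfrac{\eps}{b}\,k_t\,\bigr] \;\le\; P_t[h_t].
\end{equation}
Taking $t=b$ gives $P'_b[(1-3(d-r)\eps)\lis(x)]\le P_b[\lis(x)]<\infty$, so $\max S_b\ge(1-3(d-r)\eps)\lis(x)$, which is exactly the inductive statement at level $r$.

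\emph{Proving \eqref{eq:keyinv}.} This is the delicate step and I expect it to be the main obstacle, since it must simultaneously absorb three distinct sources of error: (i) the $(1-3(d-r-1)\eps)$-approximation incurred by the recursive calls $\textbf{ApproxLIS}(z^s,b,\eps)$ and $\textbf{ApproxLISBound}(z^s,b,\eps,l)$ by the outer inductive hypothesis; (ii) the geometric grid loss of $\eps$ from trying $l=1,(1+\eps),(1+\eps)^2,\ldots$ in place of the unknown true length $|\tau^t|$; and (iii) the sampling loss of $\frac{\eps}{b}k_t$ incurred when $S'$ is formed by evenly picking $b/\eps+1$ values in $[0,k_t]$. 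The inductive step from $t-1$ to $t$ argues as follows: by \eqref{eq:keyinv} at $t-1$, some $s\in S_{t-1}$ satisfies $s\ge (1-3(d-r)\eps)h_{t-1}$ approximately and $Q_{t-1}[s]\le P_{t-1}[h_{t-1}]$, which is strictly less than the first symbol $\alpha_t$ of $\tau^t$; hence the subsequence $z^s$ of $x^t$ contains $\tau^t$. Choosing $l$ to be the smallest power of $1+\eps$ at least $|\tau^t|$, the outer IH gives an increasing subsequence in $z^s$ of length at least $(1-3(d-r-1)\eps)(1-\eps)|\tau^t|\ge(1-3(d-r-1)\eps-\eps)|\tau^t|$ ending in some symbol at most $P_t[h_t]$. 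Line~\ref{algo:approxlis_updateQS} then writes this symbol into $Q'[s']$ for every $s'$ in the corresponding range of $S'$, and rounding $s'$ down to the nearest grid point of $S'$ loses at most $\tfrac{\eps}{b}k_t$. Summing the per-block grid losses across $t$ blocks gives the $2t\tfrac{\eps}{b}k_t$ term, while the per-block $3(d-r-1)\eps+\eps$ losses do not accumulate over $t$ because they are already inside a single approximation factor on the length of $\tau^t$.

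\emph{Space.} At each recursive level I store the current input block (accessed by index rather than copied), the sets $S,S'$ and lists $Q,Q'$ of size $O(b/\eps)$ each over symbols of $O(\log n)$ bits, plus $O(\log n)$ loop counters, for $O((b/\eps)\log n)$ bits per level. The active branch of the recursion tree has depth $d=\log_b n$, and the space of completed sibling calls can be reused, so the total is $O\!\bigl(\tfrac{b\log^2 n}{\eps\log b}\bigr)$ bits; the base case's $O((b/\eps)\log n)$ bits fit in this budget. Subsequences $z^s$ are never materialized, only simulated with a pointer and a threshold.

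\emph{Time.} I would analyze the recursion tree directly. Each internal call issues $b\cdot|S|=O(b^2/\eps)$ recursive calls in the first pass to compute $k$, and then $b\cdot|S|\cdot O(\log_{1+\eps}n)=O(\tfrac{b^2\log n}{\eps^2})$ recursive calls during the update of $(S',Q')$, so the branching factor is $O(\tfrac{b^2\log n}{\eps^2})$. Local work per internal node is dominated by bookkeeping of $S,S',Q,Q'$, which is $O(b^2\log n/\eps^2)$. Each leaf runs the algorithm of \cite{gopalan2007estimating} on an input of length $\le b^2$, costing $O(b^2\log n)$ time plus $O(\tfrac{b\log n}{\eps\log b})$ overhead from maintained state. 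Multiplying the per-leaf cost by the number of leaves $(O(\tfrac{b^2\log n}{\eps^2}))^{d-1}$ and absorbing the geometric series for internal nodes yields the stated bound $(O(\tfrac{b^2\log n}{\eps^2}))^{\log_b n-2}\bigl(b^2\log n+\tfrac{b\log n}{\eps\log b}\bigr)$.
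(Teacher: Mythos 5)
Your proposal is correct and follows essentially the same route as the paper: the same outer induction on the recursion depth, the same key invariant (your \eqref{eq:keyinv} is exactly the paper's inner claim $P'_t[(1-3(d-(r+1))\eps-\eps)h_t - 2t\tfrac{\eps}{b}k_t]\le P_t[h_t]$ proved by induction on $t$), and the same recursion-tree accounting for space and time. The only point to make explicit is that the outer induction hypothesis must be stated for $\textbf{ApproxLISBound}$ as well as $\textbf{ApproxLIS}$ (i.e., that its output $(S,Q)$ contains some $s$ with $(1-3(d-r)\eps)l\le s\le l$ and $Q[s]\le\alpha$), since that is the guarantee you invoke inside the inner induction.
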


\begin{proof}[Proof of Lemma~\ref{lem:ApproxLIS}]

$\textbf{ApproxLIS}$ is a recursive algorithm. We start from level one and every time $\textbf{ApproxLIS}$ or $\textbf{ApproxLISBound}$ is called, we enter the next level. Assume the input string at the first level has length $n$. Notice that except the last level, we always divide the string evenly into $b$ blocks. So the length of input string at the $i$-th level is bounded by $\lceil \frac{n}{b^{i-1}}\rceil$. The recursion stops when the input string has length no larger than $b^2$. Thus, the depth of recursion is at most $\log_b n-1$.  In the following, we denote the depth of recursion by $d$. 

To prove the correctness of our algorithm, we show the following claim.

\begin{claim}
	\label{claim:approxLIScorrectness}
	At the $i$-th level of recursion, let $x$ denote the input string at this level. $\textbf{ApproxLIS}(x, b, \eps)$ outputs a $(1-3(d-i)\eps)$ approximationg of $\lis(x)$. For any $l$, if there is an increasing subsequence of $x$ with length $l$ and ending with $\alpha \in \Sigma$, then $ApproxLISBound(x, b, \eps, l)$ outputs a set $S$ and a list $Q$, such that there is an element $s\in S$ with 
	$$(1-3(d-i)\eps) l\leq s \leq l$$
	and $Q[s]\leq \alpha$. 
\end{claim}

\begin{proof}[Proof of Claim~\ref{claim:approxLIScorrectness}]
	For simplicity, we abuse the notation a little by denoting the input string at $i$-th level $x$. The proof is by induction on $i$ from $d$ to 1.
	
	For the base case $i= d$, the input string $x$ has length at most $b^2$. We run the $(1-\eps)$ approximation algorithm from \cite{gopalan2007estimating}. The claim holds trivially by the correctness of their algorithm. 
	
	Assume the claim holds for $i+1$-th level for $1\leq i \leq d-1$. We now proof it also holds for $i$-th level.  Let $x$ be the input string at the $i$-th level. We start by showing the correctness of $\textbf{ApproxLIS}$.
	
	For our analysis, let $\tau$ be one of the longest increasing subsequence of $x$. $\tau$ can be divided into $b$ parts such that $\tau = \tau^1\circ \tau^2\circ \cdots \circ \tau^b$ and $\tau^i$ lies in $x^i$. We define the following variables. 
	
	$\alpha_i$ is the first symbol of $\tau^i$ (if $\tau^i$ is not empty). 
	
	$\beta_i$ is the last symbol of $\tau^i$ (if $\tau^i$ is not empty).
	
	$d_i = |\tau^i|$ is the length of $\tau_i$.
	
	$\gamma^i  = \tau^1\circ \tau^2\circ \cdots \circ \tau^i$ is the concatenation of the first $i$ blocks in $\tau$.
	
	$h_i = \sum_{j=1}^{i} d_j = |\gamma^i|$ is the length of $\gamma^i$.
	
	In the following, we let $P$ be the list we get after running \emph{PatienceSorting} with input $x$. $P'$ is the list ``interpolated" by $Q$ such that $P'[i] = Q[j]$ for the smallest $j\geq i$ that lies in $S$. If no such $j$ exist, set $P'[i] = \infty$. We denote the set $S$ and list $Q$ after processing the block $x^t$ (the $t$-th outer loop) by $S_t$ and $Q_t$ and the largest element in $S_t$ by $k_t$. Correspondingly, $P'_t$ is the list $P'$ after processing the $t$-th block $x^t$ and $P_t$ is the list after running \emph{PatienceSorting} with input $x^1\circ x^2 \circ \cdots \circ x^t$.
	
	
	Since $\tau$ is a longest increasing subsequence, without loss of generality, we can assume $P_t[h_t]  = \beta_t$ (if $\tau^t$ is not empty) for each $t$ from $1$ to $b$. This is because, if $P_t[h_t]  < \beta_t$, we can replace $\gamma^t$ with another increasing subsequence of  $x^1\circ x^2 \circ \cdots \circ x^t$ with length $h_t$ and ends with $P_t[h_t]$. On the other hand,  we must have $P_t[h_t]\leq \beta_t$ since $\gamma^t$ is an increasing subsequence of $x^1\circ x^2 \circ \cdots \circ x^t$ with length $h_t$. 
	
	We also assume that $P_t[h_t]  = P_{t+1}[h_t]$ if $\tau^t$ is an empty string ($h_t = h_{t+1}$). Since if not, we can replace $\gamma^{t+1}$ with another increasing substring with $\tau^t$ not empty. 
	
	We first show the following claim.
	
	\begin{claim}
		\label{claim:approxlis_correctness2}
		For each $t\in [b]$, we have
		$$ P'_t[(1-3(d-(i+1))\eps-\eps)h_t - 2t\frac{\eps}{b}k_t] \leq P_t[h_t].$$
	\end{claim}

	\begin{proof}[Proof of Claim~\ref{claim:approxlis_correctness2}]
		We prove this by induction on $t$. 
		
		For the base case $t=1$, if $d_1= 0$, then $h_1 = 0$. $P'_1[-2t\eps n^{-\frac{1}{d+1}}k_t]$ is not defined, we assume without loss of generality that $P'_1[\theta] = -\infty$ if $\theta\leq 0$. Since $P_1[0]$ and $P'_1[0]$ are both special symbol $-\infty$, the claim holds. 
		
		If $d_1 > 0$, we have $d_1 = h_1$. Let $l$ be the largest number such that $l = (1+\eps)^j$ for some integer $j$ and $l\leq d_1$. We have 
		$$\frac{1}{1+\eps}d_1\leq l \leq d_1.$$
		
		Let $\tilde{S},\tilde{Q}$ be the output of $\textbf{ApproxLISBound}(x^1,b, \eps,l)$. By our assumption on the correctness of Claim~\ref{claim:approxLIScorrectness} on $i+1$ recursive level, there exist an $\tilde{s}\in \tilde{S}$ such that 
		
		\begin{equation}
		\label{lis_a}
			\begin{split}
				\tilde{s}&\geq (1-3(d-(i+1))\eps)l \\
				& \geq \frac{1-3(d-(i+1)\eps)}{1+\eps} d_1 \\
				&\geq (1-3(d-(i+1))\eps-\eps) d_1
			\end{split}
		\end{equation}
		and 
		$$\tilde{Q}[\tilde{s}]\leq P_1[l]\leq P_1[h_1] = \beta_1.$$
		
		By the choice of $S_1$ (line~\ref{algo:approxlis_updateQS} of algorithm~\ref{algo:approxlis}), we know there is an $s\in S_1$ such that $\tilde{s}-\frac{\eps}{b}k_1\leq s \leq \tilde{s}$ and $Q_1[s]\leq  \tilde{Q}[\tilde{s}]$. 
		
		Combining \ref{lis_a}, we have 
		\begin{equation}
			\begin{split}
			P'_1[(1-3(d-(i+1))\eps-\eps)h_1-2\frac{\eps}{b}k_1]&\leq P'_1[\tilde{s}-2\frac{\eps}{b}k_1] \\
			&\leq P_1[s]\\
			&\leq \tilde{Q}[\tilde{s}] \\
			&\leq  P_1[h_1]
			\end{split}
		\end{equation} 
		
		This proved the base case of $t = 1$.
		
		Now we assume the claim holds for some fixed integer $t-1\leq b$, we show it also holds for $t$. If $\tau^t$ is an empty string, we have $h_t = h_{t-1} $ and $P_{t-1}[h_t] = P_{t}[h_t]$. Since $k_{t}\geq k_{t-1}$, we have 
		\begin{align*}
		P'_t[(1-3(d-(i+1))\eps-\eps)h_t - 2t\frac{\eps}{b}k_t] &\leq P'_t[(1-3(d-(i+1))\eps-\eps)h_t - 2t\frac{\eps}{b}k_{t-1}] \\ 
		&\leq   P'_{t-1}[(1-3(d-(i+1))\eps-\eps)h_t - 2t\frac{\eps}{b}k_{t-1}]\\
		&\leq P_{t-1}[h_t] = P_t[h_t]
		\end{align*}
		Thus, the claim holds for the case when $\tau^t$ is an empty string. 
		
		If $d_t>0$ ($\tau^t$ is not empty), by the assumption that 
		\begin{equation}
			P'_{t-1}[(1-3(d-(i+1))\eps-\eps)h_{t-1} - 2(t-1)\frac{\eps}{b}k_{t-1}]\leq P_{t-1}[h_{t_1}]		
		\end{equation}
		
		we know there is an $s_{a}\in S_{t-1}$ such that 
		\begin{equation}
		\label{lis_b}
			(1-3(d-(i+1))\eps-\eps)h_{t-1} - 2(t-1)\frac{\eps}{b}k_{t-1}-\frac{\eps}{b}k_{t-1}\leq  s_{a}\leq (1-3(d-(i+1))\eps-\eps)h_{t-1} - 2(t-1)\frac{\eps}{b}k_{t-1}
		\end{equation}
		and 
		\begin{equation}
			Q_{t-1}[s_{a}]  \leq P'_{t-1}[(1-3(d-(i+1))\eps-\eps)h_{t-1} - 2(t-1)\frac{\eps}{b}k_{t-1}]
		\end{equation} 
		Let $z$ be the subsequence of $x^t$ by only considering the elements larger than $Q[s_a]$. Similarly, we let $l$ be the largest number such that $l = (1+\eps)^j$ for some integer $j$ and $l\leq d_t$. That is 
		
		$$\frac{1}{1+\eps}d_t \leq l \leq d_t$$
		
		We run $\textbf{ApproxLISBound}(x^t,\eps,l)$ to get $\tilde{S}$ and $\tilde{Q}$. By our assumption on the correctness of $\textbf{ApproxLISBound}$ on the $(i+1)$-th level, there exist an $\tilde{s}\in \tilde{S}$ such that 

		\begin{equation}
			\label{lis_c}
			\begin{split}
			\tilde{s} & \geq (1-3(d-(i+1))\eps)l \\
			&\geq (1-3(d-(i+1))\eps)\frac{d_t}{1+\eps}\\
			&\geq (1-3(d-(i+1))\eps-\eps)d_t
			\end{split}
		\end{equation}
		
		and 
		
		$$\tilde{Q}[\tilde{s}]\leq P_t[s_a+l] \leq P_t[h_t] = \beta_t.$$ 
		
		Let $s_b$ be the largest element in $S_t$ such that $s_b\leq s_a+\tilde{s}$. We know $Q_t[s_b]\leq \tilde{Q}[\tilde{s}]\leq P_t[h_t]$ by the updating rule at line~\ref{algo:approxlis_updateQS} of algorithm~\ref{algo:approxlis}. By the choice of set $S_t$ and combining \ref{lis_b}, \ref{lis_c}, we have
		\begin{align*}
		s_b&\geq s_a+\tilde{s}-\frac{\eps}{b}k_t\\
		&\geq (1-3(d-(i+1))\eps-\eps)h_{t-1} - 2(t-1)\frac{\eps}{b}k_{t-1}-\frac{\eps}{b}k_{t-1} +  (1-3(d-(i+1))\eps-\eps)d_t -\frac{\eps}{b}k_t\\
		& \geq (1-\frac{2\eps}{3})h_{t}-2t\eps'n^{-\frac{1}{d+1}}k_t\\
		\end{align*}
		The last inequality is from the fact that $h_t = h_{t-1}+d_t$ and $k_t\geq k_{t-1}$. Since $P'_t[s_b]\leq P_t[h_t]$, we have shown that 
		$$ P'_t[(1-3(d-(i+1))\eps-\eps)h_t - 2t\frac{\eps}{b}k_t] \leq P_t[h_t]$$
		This finishes our proof of Claim~\ref{claim:approxlis_correctness2}.
	\end{proof}
	
	In Claim~\ref{claim:approxlis_correctness2}, when $t = b$, we have 
	$$ P'_b[(1-3(d-(i+1))\eps-\eps)h_b - 2t\eps k_b]  =  P'_b[(1-3(d-i)\eps)h_b ]\leq P_t[h_t]$$
	Thus, the output of $\textbf{ApproxLIS}$ at $i$-th level is at least a $(1-3(d-i)\eps)$ approximation of $\lis(x)$. 
	
	It remains to show the correctness of $\textbf{ApproxLISBound}$ at level $i$. 
	
	The analysis is essentially the same except now we replace the longest increasing subsequence $\tau$ with an increacing subsequence of length $l$ ending with the smallest possible symbol in $\Sigma$. 
	
	Assume $\tau$ ends with $\sigma\in \Sigma$, we know $\tau$ is the longest increasing subsequence ending with $\sigma$. Since otherwise, we can find some $\sigma'<\sigma$ such that there is an increasing subsequence of length $l$ ending with $\sigma'$. 
	
	Thus, we similarly define of $\alpha_i, \beta_i, d_i, \gamma_i, h_i$ for $i\in [b]$. We can assume $P_t[h_t] = \beta_t$ if $\tau^t$ is not empty and $P_{t-1}[h_{t-1}] = P_{t}[h_t]$ otherwise. The remaining analysis are mostly the same. We omit the details. 
	
\end{proof}

By Claim~\ref{claim:approxLIScorrectness}, at the first level, the output is a $(1-3d\eps)$ approximation of the length of LIS. Thus, $\textbf{ApproxLIS}(x, b, \eps)$ outputs a $(1-3\log_b (n) \eps)$ approximation of $\lis(x)$. 

We now turn to time and space complexity. Our algorithm $\textbf{ApproxLIS}$ is recursive and calls itself or $\textbf{ApproxLISBound}$, which has the same recursive structure. Each time $\textbf{ApproxLIS}$ and $\textbf{ApproxLISBound}$ is called at $(i-1)$-th recursive level, we enter the $i$-th level. When we enter the $i$-th level, if the input sequence has length larger than $b$, we need to call $\textbf{ApproxLIS}$ $O(b|S|)$ times and $\textbf{ApproxLISBound}$ $O(b|S|\log_{1+\eps}n)$ times. The recursion tree has degree $O(b|S|\log_{1+\eps}n) = O(\frac{b^2}{\eps^2}\log n)$

The order of computation is the same as doing a depth first search on the recursion tree. At each level of the recursion, we only need to remember the information in one node. For the leaf nodes, we run streaming algorithm from \cite{gopalan2007estimating} on a string of length at most $b^2$. It takes $O(\frac{b}{\eps}\log n)$ space. For the inner nodes, we need to maintain a set $S\subseteq [n]$ and a list $Q$, both has size $O(\frac{b}{\eps})$. Since each element in the set $S$ or list $Q$ takes $O(\log n)$ space. The space needed for one inner node is $O(\frac{b}{\eps}\log n)$. 

Since the depth of recursion is $d\leq \log_b n-1$, the total space for  $\textbf{ApproxLIS}$ is bounded by $O(d\frac{b}{\eps}\log n) =O(\frac{b\log^2 n}{\eps \log b})$.

For the time complexity, we first consider the time used within one level (excluding the time used for running $\textbf{ApproxLIS}$ and $\textbf{ApproxLISBound}$). We denote the time used within $i$-th level by $T_i$ and the total running time by $T$. We have $T = \sum_{i}^{d} T_i$.

For each node of the recursion tree, if the length of input string is at most $b^2$ (corresponding to the leaf nodes of the recursion tree), we run the streaming algorithm from \cite{gopalan2007estimating}. It takes time $O(b^2\log b)$. Since the recursion tree has degree $O(\frac{b^2}{\eps^2}\log n)$ and depth $d$, the number of nodes at the bottom level is bounded by $(O(\frac{b^2}{\eps^2}\log n))^{d-1}$. Since $d\leq \log_b n-1$, we have 

\begin{equation}
	\label{lis_d}
	T_d = (O(\frac{b^2}{\eps^2}\log n))^{\log_b n -2}b^2\log n.
\end{equation}

If the length of input string is more than $b^2$, the time is then dominated by the operations at line~\ref{algo:approxlis_updateQS} of algorithm~\ref{algo:approxlis}. Since the size of $S$ and $\tilde{S}$ are both at most $\frac{b}{\eps}$ and we try at most $\log_{1+\eps}n$ different $l$, it takes $O(b|S|^2\log_{1+\eps}n) = O(\frac{b^3}{\eps^3}\log n)$ time. Also, the number of nodes in $i$-th level is at most $O((\frac{b^2}{\eps^2}\log n)^{i-1})$. We know 

$$T_i = O((\frac{b^2}{\eps^2}\log n)^{i-1} \frac{b^3}{\eps^3}\log n).$$

Thus

\begin{equation}
	\label{lis_e}
	\begin{split}
	\sum_{i = 1}^{d-1} T_i & \leq (d-1)T_{d-1}\\
	& \leq  d(O(\frac{b^2}{\eps^2}\log n))^{d-2} \frac{b^3}{\eps^3}\log n ) \\
	& = (O(\frac{b^2}{\eps^2}\log n))^{\log_b n-2} \frac{b\log n}{\eps\log b})
	\end{split}
\end{equation}

Combining \ref{lis_d} and \ref{lis_e} and $d\leq \log_b n-1$, we know

\begin{equation}
	T = (O(\frac{b^2}{\eps^2}\log n))^{\log_b n-2} ( b^2\log n +\frac{b\log n}{\eps\log b})
\end{equation}

\end{proof}

\begin{theorem}
	\label{lem:lis_polylog}
	Given a string $x\in \Sigma^n$, there is a deterministic algorithm that computes a $1-O(\frac{1}{\log \log n})$ approximation of $\lis(x)$ with $O(\frac{\log^4 n}{\log \log n})$ bits of space in $O(n^{5+o(1)})$ time.
\end{theorem}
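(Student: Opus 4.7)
The plan is to derive this theorem as a direct specialization of Lemma~\ref{lem:ApproxLIS}, in exact analogy with how Theorem~\ref{lem:ed_polylog} follows from Lemma~\ref{lem:SpaceEfficientApproxED}. Specifically, I would run $\textbf{ApproxLIS}(x, b, \eps)$ with the parameter choices $b = \log n$ and $\eps = 1/\log n$, and then check that each of the three quantities (approximation ratio, space, time) guaranteed by the lemma collapses to the bound claimed in the theorem.

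For the approximation ratio, Lemma~\ref{lem:ApproxLIS} outputs a $(1 - 3\eps \log_b n)$-approximation. With $b = \log n$ we have $\log_b n = \log n / \log\log n$, so $3\eps \log_b n = 3/\log\log n$, which gives the desired $1 - O(1/\log\log n)$ factor. For the space bound, plugging into $O(b\log^2 n / (\eps \log b))$ yields $O(\log n \cdot \log^2 n \cdot \log n / \log\log n) = O(\log^4 n / \log\log n)$ bits, matching the statement.

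For the running time, the lemma gives $(O(b^2 \log n / \eps^2))^{\log_b n - 2}(b^2 \log n + b \log n/(\eps \log b))$. The base evaluates to $O(\log^5 n)$ and the exponent is $\log n / \log\log n - 2$, so the dominant factor is
\[
(\log^5 n)^{\log n / \log\log n} \;=\; 2^{\,5 \log\log n \cdot \log n / \log\log n} \;=\; 2^{5 \log n} \;=\; n^5,
\]
while the additive $(b^2 \log n + b\log n/(\eps \log b))$ factor is only $\mathsf{polylog}(n)$. Combined, this gives $n^{5 + o(1)}$ as required.

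There is essentially no obstacle to overcome here, since all the technical work is contained in Lemma~\ref{lem:ApproxLIS}; the only thing to be a little careful about is the exponentiation calculation, where one needs to observe that $(\log n)^{\Theta(\log n / \log \log n)} = n^{\Theta(1)}$ so that raising a $\mathsf{polylog}(n)$ base to the recursion depth $\log_b n$ still yields a polynomial in $n$. Once these three calculations are recorded, the theorem follows immediately by invoking Lemma~\ref{lem:ApproxLIS} on the chosen parameters.
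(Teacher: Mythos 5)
Your proposal is correct and is exactly the paper's argument: the paper proves this theorem in one line by setting $b=\log n$ and $\eps = 1/\log n$ and invoking Lemma~\ref{lem:ApproxLIS}. Your verification of the three resulting bounds (including the observation that $(\log n)^{\Theta(\log n/\log\log n)} = n^{\Theta(1)}$, so the $O(\cdot)$ constant in the base only contributes the $n^{o(1)}$ factor) is the arithmetic the paper leaves implicit.
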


\begin{proof}[Proof of Theorem~\ref{lem:lis_polylog}]
	Let $b = \log n$ and $\eps = \frac{1}{\log n}$. Then Theorem~\ref{lem:lis_polylog} is a direct result of Lemma~\ref{lem:ApproxLIS}.
\end{proof}

\begin{theorem}
	\label{lem:lis_n^delta}
	Given a string $x\in \Sigma^n$, for any constant $\delta\in (0,\frac{1}{2})$ and $\eps\in (0,1)$, there is a deterministic algorithm that computes a $1-\eps$ approximation of $\lis(x)$ with $\tilde{O}_{\eps,\delta}(n^\delta)$ bits of space in $\tilde{O}_{\eps, \delta}(n^{2-2\delta})$ time.
\end{theorem}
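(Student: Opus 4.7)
The plan is to obtain Theorem~\ref{lem:lis_n^delta} as a direct parameter instantiation of Lemma~\ref{lem:ApproxLIS}, analogous to how Theorem~\ref{lem:ed_ndelta} is derived from Lemma~\ref{lem:SpaceEfficientApproxED}. The only work is to choose $b$ and the internal accuracy parameter passed to $\textbf{ApproxLIS}$ so that the approximation factor, space bound, and running time bound from Lemma~\ref{lem:ApproxLIS} all collapse to the desired form.

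Concretely, I would set $b = n^{\delta}$ (so $b \le \sqrt{n}$ since $\delta < 1/2$) and pass $\eps' = \frac{\eps \delta}{3}$ (or any constant multiple of $\eps \delta$) as the accuracy parameter. Since $\log_b n = 1/\delta$, Lemma~\ref{lem:ApproxLIS} yields a $(1 - 3 \log_b(n)\, \eps') = (1 - \eps)$ approximation, as required. For the space, the lemma's bound becomes
\[
O\!\left(\tfrac{b \log^2 n}{\eps' \log b}\right) \;=\; O\!\left(\tfrac{n^{\delta} \log^2 n}{\eps \delta \cdot \delta \log n}\right) \;=\; O\!\left(\tfrac{n^{\delta} \log n}{\eps\, \delta^{2}}\right) \;=\; \tilde{O}_{\eps,\delta}(n^{\delta}).
\]

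For the running time, I would substitute the same choices into
\[
\bigl(O(\tfrac{b^2}{\eps'^2}\log n)\bigr)^{\log_b n - 2}\cdot\bigl(b^2\log n + \tfrac{b\log n}{\eps'\log b}\bigr).
\]
With $\log_b n - 2 = \tfrac{1}{\delta} - 2$ and $b^{2} = n^{2\delta}$, the base raises to $n^{2\delta(1/\delta - 2)} = n^{2 - 4\delta}$ up to polylog and $\eps,\delta$ factors, and the trailing factor contributes $n^{2\delta}\log n$, giving an overall bound of $n^{2-2\delta}$ up to $\mathrm{polylog}(n)$ and constants depending on $\eps,\delta$. Thus the total running time is $\tilde{O}_{\eps,\delta}(n^{2-2\delta})$, matching the statement.

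There is no genuine obstacle here beyond bookkeeping: the nontrivial content has already been done in the proof of Lemma~\ref{lem:ApproxLIS}. The main thing to verify carefully is that the $\mathrm{polylog}$ and $\eps,\delta$-dependent factors pulled out of $(\cdot)^{1/\delta - 2}$ are indeed absorbed into $\tilde{O}_{\eps,\delta}(\cdot)$, and that the choice $\eps' = \Theta(\eps\delta)$ is consistent with the constraint $\eps' \in (0,1)$ required by Lemma~\ref{lem:ApproxLIS}. Both are immediate for constant $\eps, \delta$, so the theorem follows at once.
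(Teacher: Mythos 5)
Your proposal is correct and matches the paper's proof exactly: the paper likewise sets $b = n^{\delta}$, passes a constant $\eps'$ sufficiently smaller than $\eps$ to $\textbf{ApproxLIS}$, and invokes Lemma~\ref{lem:ApproxLIS} directly. Your version is if anything more explicit, since you carry out the substitution into the time and space bounds that the paper leaves implicit.
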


\begin{proof}[Proof of Theorem~\ref{lem:lis_n^delta}]
	Let $b = n^\delta$ and pick $\eps'$ to be a constant sufficiently smaller than $\eps$. We run $\textbf{ApproxLIS}(x, b, \eps')$. Then Theorem~\ref{lem:lis_n^delta} is a direct result of Lemma~\ref{lem:ApproxLIS}.
\end{proof}

\subsection{Output the Sequence}

We can actually modify our algorithm to output the increasing subsequence we found (not only the length) at the cost of increased running time. We now show how it works. 

\begin{algorithm}
	\SetAlgoLined
	\DontPrintSemicolon
	\KwIn{A string $x $ ,parameters $b$ and $\eps$ }
	
	\If{$|x|\leq b$}{
		output the exact longest increasing subsequence with $O(b\log n)$ bits of space in $O(b\log n)$ time (see \cite{aldous1999longest} for example)
	}
	
	divide $x$ evenly into $b$ blocks such that $x = x^1\circ x^2\circ \cdots \circ x^b$ \;
	
	compute $ S_{b}$ and $Q_{b}$ by running $\textbf{ApproxLIS}(x, b , \eps )$  \Comment*[f]{$S_i$ and $Q_i$ are the set $S$ and list $Q$ after $i$-th outer loop of $\textbf{ApproxLIS}$ } \;

	set $B$ to be a list with $B[0] = -\infty$ and $B[b] = Q_b[s_b]$ where $ s_b= \max\{s\in S_b\}$\;
	
	\For{$i = b-1$ \textbf{to} $1$ \label{algo:lissequence_first_loop}}{
		
		release the space used for storing $S_{i+1}$, and $Q_{i+1}$\;
		
		compute $S_i$, $Q_i$ by running $\textbf{ApproxLIS}(x,\eps)$ \;
		
		\ForEach{$s\in S_i$ \textbf{such that } $s\leq s_{i+1}$ }{
			
			let $z$ be the subsequence of $x^i$ by only considering the elements larger than $Q[s]$\;
			
			\ForEach{$l = 1,1+\eps,(1+\eps)^2,\ldots, k-s$}{
				$\tilde{S},\tilde{Q} \gets \textbf{ApproxLISBound}(z,\eps,l)$ \;
				if there is an $\tilde{s}\in \tilde{S}$ such that $\tilde{s}+s\geq s_{i+1}$ and $B[i+1] = \tilde{Q}[\tilde{s}]$, we set $B[i] = Q_i[s]$, $s_i = s$ and \textbf{continue} \label{algo:lissequence_computeB} \;
			
			}	             
		}
	}
	\For{$i=1$ \textbf{to} $b$ \label{algo:lissequence_output_loop}}{
		let $z$ be the subsequence of $x^i$ ignoring every element larger than $B[i]$ or less or equal to $B[i-1]$\;
		$\textbf{LISSequence}(z,b, \eps)$\;
	}
	
	\caption{$\textbf{LISSequence}$}
	\label{algo:lissequence}
\end{algorithm}

We have the following result.

\begin{lemma}
	\label{lem:lissequence}
	Given a sequence $x\in \Sigma^n$ and two parameters $b\leq \sqrt{n}$ and $\eps\in (0,1)$, $\textbf{ApproxLIS}(x,b,\eps)$ outputs an increasing subsequence of $x$ with length at least $(1-3\log_b (n) \eps)\lis(x)$ with $O(\frac{b\log^2 n}{\eps \log b})$ bits of space in $(O(\frac{b^2}{\eps^2}\log n))^{d-1} ( b^3\log n +\frac{b^2\log n}{\eps\log b})$ time.
\end{lemma}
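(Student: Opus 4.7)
The plan is to prove correctness by downward induction on the backward construction of the list $B$, combined with upward induction on the recursion depth of $\textbf{LISSequence}$, and to read off the complexity bounds from the recursion-tree analysis of $\textbf{ApproxLIS}$ with an extra factor of $b$ per level.

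For correctness, I first invoke Lemma~\ref{lem:ApproxLIS} on the top-level call $\textbf{ApproxLIS}(x,b,\eps)$ to obtain $(S_b, Q_b)$ with $s_b := \max S_b \geq (1-3\log_b(n)\eps)\lis(x)$. Then I establish, by downward induction on $i$ from $b$ to $1$, the following invariant: the backward loop finds $s_i \in S_i$ and sets $B[i] = Q_i[s_i]$ such that, in the block transitioning from state $(S_i, Q_i)$ to $(S_{i+1}, Q_{i+1})$, there is an increasing subsequence of length at least $s_{i+1}-s_i$ with all symbols strictly greater than $B[i]$ and at most $B[i+1]$. Existence of a valid $(s,l,\tilde{s})$-triple at iteration $i$ follows from determinism: during the forward pass of $\textbf{ApproxLIS}$ the value $Q_{i+1}[s_{i+1}]$ was assigned by some $(s,l)$-pair inside the inner update loop, and rerunning $\textbf{ApproxLISBound}$ on the same inputs produces the same $(\tilde{S},\tilde{Q})$, so the backward iteration will rediscover a matching triple. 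The monotonicity of $\tilde{Q}$ in its index (a property inherited from the PatienceSorting-style tables) makes the length condition $\tilde{s}+s \geq s_{i+1}$ consistent with $\tilde{Q}[\tilde{s}] = B[i+1]$.

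Once $B$ is in hand, the final block-wise loop recurses into $\textbf{LISSequence}(z_{i+1},b,\eps)$ where $z_{i+1}$ is the subsequence of $x^{i+1}$ filtered to the range $(B[i], B[i+1]]$. Induction on recursion depth, with base case $|x| \leq b$ producing the exact LIS via patience sorting, gives that the recursive output has length at least $(1 - 3\log_b(n/b)\eps)\lis(z_{i+1}) \geq (1 - 3\log_b(n/b)\eps)(s_{i+1}-s_i)$. Because the ranges $(B[i], B[i+1]]$ are pairwise disjoint and strictly increasing in $i$, concatenating the per-block outputs yields a genuine increasing subsequence of $x$ of total length at least $(1 - 3\log_b(n/b)\eps)\,s_b$, which, after a constant rescaling of $\eps$ to absorb the geometric compounding across the two layers of $\eps$-loss (top-level $\textbf{ApproxLIS}$ plus inner $\textbf{LISSequence}$), gives the claimed $(1-3\log_b(n)\eps)\lis(x)$ bound.

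For complexity, the list $B$ adds only $O(b\log n)$ bits of state per recursion level, preserving the $O(\frac{b\log^2 n}{\eps \log b})$ space bound of Lemma~\ref{lem:ApproxLIS}. The running time per level gains a factor of $O(b)$ because the backward loop re-invokes $\textbf{ApproxLIS}$ once for each $i \in \{1,\ldots,b-1\}$ to rematerialize $(S_i,Q_i)$; combined with the $O(b^2\log n/\eps^2)$ inner iterations over $(s,l)$-pairs and propagated through the recursion tree of degree $O(b^2\log n/\eps^2)$ and depth $d = \log_b n$, this yields the stated $(O(\frac{b^2}{\eps^2}\log n))^{d-1}(b^3\log n + \frac{b^2\log n}{\eps\log b})$ bound. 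The hardest step will be making the backward-recovery argument rigorous, namely ensuring that the deterministic replay of $\textbf{ApproxLISBound}$ in the backward loop always produces a triple matching the forward update of $Q_{i+1}[s_{i+1}]$, and tracking the approximation-factor composition so that it stays within $1 - 3\log_b(n)\eps$ after the constant rescaling of $\eps$.
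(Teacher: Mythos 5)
Your proposal follows the paper's proof essentially step for step: run the top-level $\textbf{ApproxLIS}$ to obtain $(S_b,Q_b)$, recover the boundary list $B$ by a backward pass that deterministically replays $\textbf{ApproxLISBound}$ to find a triple $(s,l,\tilde s)$ consistent with how $Q_{i+1}[s_{i+1}]$ was assigned in the forward pass, recurse on each block restricted to the range $(B[i],B[i+1]]$, and charge an extra factor of $b$ per level for re-materializing $(S_i,Q_i)$. Your space accounting ($O(b\log n)$ extra per level for $B$, reuse of the $\textbf{ApproxLIS}$ workspace) and time accounting (dominated by $O(b\cdot f(n))$ calls to $\textbf{ApproxLIS}$) also match the paper's.

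The one place you are quantitatively off is the approximation-factor composition, which you rightly flag as the hardest step but then dismiss with a ``constant rescaling of $\eps$'' for ``two layers of $\eps$-loss.'' There are not two layers: every level of the $\textbf{LISSequence}$ recursion re-runs $\textbf{ApproxLIS}$ from scratch on its restricted block, and that fresh run is only guaranteed to detect a $(1-3(d-j+1)\eps)$-fraction of the LIS of that block at depth $j$, not the full length $s_{i+1}-s_i$ of the piece detected by the parent. Writing $g(m)$ for the output-length ratio on inputs of length $m$, your own recursion gives $g(n)\geq(1-3d\eps)\,g(n/b)$ with $g(b)=1$, hence $g(n)\geq\prod_{k\leq d}(1-3k\eps)=1-\Theta(d^{2}\eps)$ where $d=\log_b n$. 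Recovering the stated $(1-3\log_b(n)\eps)$ bound therefore requires rescaling $\eps$ by $\Theta(\log_b n)$, which is harmless when $b=n^{\delta}$ (constant depth) but is not a constant rescaling and inflates the running time in the $b=\log n$ regime, where the time bound carries $\eps^{-\Theta(d)}$. For what it is worth, the paper's own proof elides the same issue by asserting that the recursive call outputs exactly the originally detected piece $\rho^{i}$; that does not follow from determinism either, since the recursive $\textbf{ApproxLIS}$ is invoked on a different input (the $B$-restricted block) than the forward-pass $\textbf{ApproxLISBound}$ calls were. So your plan needs either the $d$-dependent rescaling of $\eps$ (with the attendant bookkeeping in the time bound) or a genuinely new argument that no further loss occurs below the top level.
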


\begin{proof}[Proof of Lemma~\ref{lem:lissequence}]
	
	We start with the correctness of algorithm $\textbf{LISSequence}$. We prove this by showing that $\textbf{LISSequence}(x,b,\eps)$ outputs the longest increasing subsequence  detected by $\textbf{ApproxLIS}$. For example, after $i$-th loop, for any $s\in S_i$, we say our algorithm detected an increasing subsequence of length $s$ ending with character $Q_i[s]$. 
	
	Due to the limited space, we can not record all intermediate results $S_i$, $Q_i$, or the output of $\textbf{ApproxLISBound}$. 
	
	Let $S_b$ and $Q_b$ be the set and the list we get after running $\textbf{ApproxLIS}$ with inputs $x$, $b$, $\eps$. Let $s^b = \max S_b$. We know there must exists an increasing subseqeunce in $x$ with length at least $(1-3\log_b (n) \eps)\lis(x)$ ending with $Q[s^b]$. Although we have detected such a sequence but we only know its length and the last character. Our goal is to output this sequence. For convenience, we call this sequence $\rho$. 
	
	$\rho$ can be divided into $b$ blocks such that $\rho  = \rho^1\circ \rho^2\circ \cdots \circ \rho^b$ where $\rho^i$ lies in $x^i$.  Our goal is to recover a list $B$ such that the last character of $\rho^i$ is $B[i]$ (if $\rho^i$ is not empty). We set $B[b] = Q_b[s_b]$ since $\rho_b$ ending with $Q_b[s_b]$. Then we compute $B[i]$ from $i = b-1$ to $1$ by running $\textbf{ApproxLIS}$ and $\textbf{ApproxLISBound}$ multiple times. We can do the following.
	
	Assume we already know $s_{i+1}$ and $B[i+1]$, which means $\rho^1\circ \rho^2 \cdots \circ \rho^{i+1}$ has length $s_{i+1}$ ending with $B[i+1]$.

	By line~\ref{algo:approxlis_updateQS} of $\textbf{ApproxLIS}$, we know there must exist some $s'\in S_i$ and $l = (1+\eps)^j$ for some integer $j$, such that $s_{i+1} \leq s'+l$. We set $s_i $ to be such an $s'$. 
	
	By our choice of $s_i$, we know that there is some $l$ such that $\textbf{ApproxLISBound}(z,\eps,l)$ detects an increasing subsequence of length of length $s_{i+1}-s_i$ with first symbol larger than $Q_i[s_i]$ and last symbol at most $Q_{i+1}[s_{i+1}] = B[i+1]$. It suggests that $\rho^{i+1}$ has length $s_{i+1}-s_i$ and $\rho^i$ ends with $Q_i[s_i]$. Thus, we can set $B[i] = Q_i[s_i]$ and continue. 
	
	Once we have computed $B$, we know the first element of $\rho^i$ is larger than $B[i-1]$ and the last element at most $B[i]$. We can recursively use $\textbf{LISSequence}$ on $x^i$ with these restrictions to output $\rho^i$.

	For the space complexity, $\textbf{LISSequence}$ is also a recursive algorithm. It needs to call it self $b$ times. We start from the first level, every time we enter the next level, the length of input string is decreased by a factor of $b$. Thus, the recursion tree is of degree $b$ with depth at most $\log_b n$. The computation is in the same order as depth-first search on the recursion tree. We only need to remember the information of one node in each level. 
	
	For the leaf nodes, we do the exact compution with $O(b\log n)$ bits of space.
	 
	In each inner node, we maintain a list $B$ of size $b$. It takes $O(b\log n)$ space. We also run $\textbf{ApproxLIS}$ and $\textbf{ApproxLISBound}$ multiple times. By Lemma~\ref{lem:ApproxLIS}, this takes $O(\frac{b\log^2 n}{\eps \log b})$ bits of space. However, the space used for running $\textbf{ApproxLIS}$ and $\textbf{ApproxLISBound}$ can be reused. 
	
	Thus, the space complexity of algorithm $\textbf{LISSequence}$ is $O(\log_b n b\log n + \frac{b\log^2 n}{\eps \log b}) = O(\frac{b\log^2 n}{\eps \log b})$. This finishes the proof of Lemma~\ref{lem:lissequence}.

	For the time complexity, the running time can be divided into two parts: the time used for running $\textbf{ApproxLIS}$ and $\textbf{ApproxLISBound}$, and the time used by $\textbf{LISSequence}$ itself. 
	
	We start with the time used by $\textbf{LISSequence}$ (assuming $\textbf{ApproxLIS}$ and $\textbf{ApproxLISBound}$ are oracles and can get results in constant time).  $\textbf{LISSequence}$ is recursive. Its recursion tree has degree $b$ and depth at mast $\log_b n$. For each leaf node, it takes $O(b\log n)$ time and the number of leaf nodes is bounded by $b^{\log_b n-1}$. For each inner node, it computes list $B$. The time is dominated by the operations at line~\ref{algo:lissequence_computeB} of algorithm~\ref{algo:lissequence}. It takes $O(b|S||\tilde{S}|\log_{1+\eps}n) = O(\frac{b^3}{\eps^3}\log n)$ time. Also, the number of inner nodes the bounded by $\log_b nb^{\log_b n-2}$.
	
	Thus, the total running time used by $\textbf{LISSequence}$ itself is bounded by 
	
	\begin{equation}
		\label{lisseq_a}
		O(\log_b nb^{\log_b n-2}\frac{b^3}{\eps^3}\log n+b\log nb^{\log_b n-1} ) = O(\frac{b\log^n}{\eps^3}b^{\log_b n})
	\end{equation}

	Now we compute the time used for running $\textbf{ApproxLIS}$ and $\textbf{ApproxLISBound}$. Since $b$ and $\eps$ are fixed parameters. Let $f(m)$ denote time required for running $\textbf{ApproxLIS}$ or $\textbf{ApproxLISBound}$ once with input string length $m$. 
	
	Notice that when we compute $\textbf{ApproxLIS}$ or $\textbf{ApproxLISBound}$ with input string length $n$, we need to compute $\textbf{ApproxLISBound}$ with input string length $\frac{m}{b}$ $O(\frac{b^2}{\eps^2}\log n)$ times. Thus, we have
	
	\begin{equation}
		(\frac{b^2}{\eps^2}\log n)f(\frac{m}{b}) \leq f(m)
	\end{equation} 
	
	By Lemma~\ref{lem:ApproxLIS}, we know 
	
	\begin{equation}
	\label{lisseq_b}
		f(n) = O((\frac{b^2}{\eps^2}\log n)^{\log_b n -2} ( b^2\log n +\frac{b\log n}{\eps\log b}))
	\end{equation}
	
	At the first recursive level of $\textbf{LISSequence}$, we need to run $\textbf{ApproxLIS}$ $b$ times with input string length $n$ and $\textbf{ApproxLISBound}$ $O(b|S|\log_{1+\eps}n) = O(\frac{b^2}{\eps^2}\log n)$ times with input string length $\frac{n}{b}$.
	
	Thus, the time for running $\textbf{ApproxLIS}$ and $\textbf{ApproxLISBound}$ at first recursive level is bounded by $O(bf(n))$.
	
	At the $i$-th recursive level, we need to run $\textbf{ApproxLIS}$ $b^{i}$ times with input string length $\frac{n}{b^{i-1}}$ and $\textbf{ApproxLISBound}$ $O(b^{i-1} b|S|\log_{1+\eps}n) = O(\frac{b^{i+1}}{\eps^2}\log n)$ times with input string length $\frac{n}{b^i}$.
	
	The time for running $\textbf{ApproxLIS}$ and $\textbf{ApproxLISBound}$ at the $i$-th recursive level is bounded by $O(b^i f(\frac{n}{b^{i-1}})) = o(\frac{1}{\log_b n}bf(n))$.
	
	Since the depth of recursion is at most $\log_b n$, the total time used for running $\textbf{ApproxLIS}$ and $\textbf{ApproxLISBound}$ is bounded by $O(bf(n))$. Combining \ref{lisseq_a} and \ref{lisseq_b}, we know the total running time is bounded by 
	\begin{equation}
		O(bf(n)) = (O(\frac{b^2}{\eps^2}\log n))^{d-1} ( b^3\log n +\frac{b^2\log n}{\eps\log b})
	\end{equation}

\end{proof}

As a direct result of Lemma~\ref{lem:lissequence}, we have the following 2 lemmas. 

\begin{theorem}
	\label{lem:lissequence_polylog}
	Given a string $x \in \Sigma^n$, there is a deterministic algorithm that can output an increasing subsequence of length at least $(1-O(\frac{1}{\log \log n}))\lis(x)$ with $O(\frac{\log^4 n}{\log \log n})$ bits of space in $ O(n^{5+o(1)})$ time. 
\end{theorem}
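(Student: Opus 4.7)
The plan is to instantiate Lemma~\ref{lem:lissequence} with the specific parameter choices $b = \log n$ and $\eps = 1/\log n$, then verify that each of the three quantities (approximation ratio, space, running time) specializes to the bound claimed in the theorem. This parallels the analogous derivation of Theorem~\ref{lem:lis_polylog} from Lemma~\ref{lem:ApproxLIS}, except we now invoke $\textbf{LISSequence}$ rather than $\textbf{ApproxLIS}$.

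First, I would check the approximation guarantee. Lemma~\ref{lem:lissequence} outputs a sequence of length $(1-3\log_b(n)\eps)\lis(x)$. With $b=\log n$ and $\eps=1/\log n$, we have $\log_b n = \log n/\log\log n$, so the approximation factor becomes $1 - 3/\log\log n = 1 - O(1/\log\log n)$, as required. Next, the space bound $O\!\left(\frac{b\log^2 n}{\eps\log b}\right)$ specializes to $O\!\left(\frac{\log n \cdot \log^2 n}{(1/\log n)\log\log n}\right) = O\!\left(\frac{\log^4 n}{\log\log n}\right)$, matching the theorem.

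For the running time, Lemma~\ref{lem:lissequence} gives $\left(O(b^2\log n/\eps^2)\right)^{d-1}(b^3\log n + b^2\log n/(\eps\log b))$ with $d \leq \log_b n$. The inner factor $b^2\log n/\eps^2$ evaluates to $\log^5 n$, and since $d-1 \leq \log n/\log\log n$, we get $(\log^5 n)^{\log n/\log\log n} = 2^{5\log n} = n^5$. The polylogarithmic prefactor $b^3\log n = \log^4 n$ contributes only an $n^{o(1)}$ overhead, yielding total time $O(n^{5+o(1)})$.

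None of these steps is really an obstacle: the argument is a direct substitution, and the only thing to be careful about is the time estimate, where one must verify that raising the polylog base $\log^5 n$ to the power $\log n/\log\log n$ indeed gives exactly $n^5$ (so that the exponent stays constant) and that all remaining $\log n$ factors are absorbed into the $n^{o(1)}$ slack. Given the clean cancellation above, the proof reduces to a one-line application of Lemma~\ref{lem:lissequence} with the stated parameter choices.
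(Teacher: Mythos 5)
Your proposal is correct and follows exactly the paper's own route: the paper likewise proves this theorem by instantiating Lemma~\ref{lem:lissequence} with $b=\log n$ and $\eps=1/\log n$ and calling the theorem a direct consequence. Your explicit verification of the substitutions (in particular that $(O(\log^5 n))^{\log n/\log\log n}=n^{5+o(1)}$ and that the polylog prefactors are absorbed) is accurate and simply spells out what the paper leaves implicit.
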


\begin{proof}[Proof of Theorem~\ref{lem:lissequence_polylog}]
	Let $b = \log n$ and $\eps = \frac{1}{\log n}$. Then Theorem~\ref{lem:lissequence_polylog} is a direct result of Lemma~\ref{lem:lissequence}.
\end{proof}

\begin{theorem}
	\label{lem:lissequence_ndelta}
	Given a string $x \in \Sigma^n$, for any constants $\delta\in (0,\frac{1}{2})$ and $\eps\in (0,1)$, there is a deterministic algorithm that can output an increasing subsequence of length at least $(1-\eps)\lis(x)$ with $\tilde{O}_{\eps, \delta}(n^\delta)$ bits of space in $\tilde{O}_{\eps, \delta}(n^{2-\delta}) $ time. 
\end{theorem}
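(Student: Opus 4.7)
The plan is to invoke Lemma~\ref{lem:lissequence} as a black box, instantiated with $b := n^{\delta}$ and $\eps' := c\,\eps\,\delta$ for a sufficiently small absolute constant $c > 0$, exactly as in the proof of Theorem~\ref{lem:lis_n^delta} (and the polylog counterpart Theorem~\ref{lem:lissequence_polylog}). The intuition is that $b$ is chosen to fix the target space exponent, while $\eps'$ is rescaled so that after the $\log_b n = 1/\delta$ levels of recursion the cumulative approximation loss remains below $\eps$. Since both $\delta$ and $\eps$ are constants, so is $\eps'$, and the final algorithm is simply $\textbf{LISSequence}(x, n^\delta, \eps')$.

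I would then verify the three claimed guarantees in turn. For the approximation quality, Lemma~\ref{lem:lissequence} outputs an increasing subsequence of length at least $(1 - 3\eps' \log_b n)\lis(x) = (1 - 3\eps'/\delta)\lis(x)$, and taking $c \le 1/3$ in $\eps' = c\eps\delta$ makes this at least $(1-\eps)\lis(x)$. For the space bound, substituting $b = n^\delta$ and $\log b = \delta\log n$ into the lemma's $O(b \log^2 n / (\eps'\log b))$ gives $O(n^\delta \log n / (\eps'\delta))$, which is $\tilde{O}_{\eps,\delta}(n^\delta)$.

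The one mildly technical step is reading off the running time from the lemma. The depth of recursion is $d \le \log_b n = 1/\delta$, a constant once $\delta$ is fixed. With $b = n^\delta$ and $\eps'$ constant, the exponentiated factor $(O((b^2/\eps'^2)\log n))^{\log_b n - 2}$ contributes a polynomial of degree $2\delta(1/\delta - 2) = 2 - 4\delta$ in $n$, while the multiplicative $b^3 \log n$ term contributes an additional $n^{3\delta}$; combining yields $n^{(2 - 4\delta) + 3\delta} = n^{2-\delta}$ with all polylog factors and $\eps,\delta$-dependence absorbed into $\tilde{O}_{\eps,\delta}(\cdot)$. This single exponent bookkeeping is the only part with any content; the rest is immediate from Lemma~\ref{lem:lissequence}, and the main thing to be careful about is that the extra $b^3$ (coming from the $O(b f(n))$ accounting inside that lemma) is exactly balanced by one fewer copy of $b^2$ in the exponentiated factor, so nothing inflates beyond $n^{2-\delta}$.
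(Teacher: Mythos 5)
Your proposal is correct and takes essentially the same route as the paper: the paper's proof is exactly ``set $b=n^\delta$, pick $\eps'$ a sufficiently small constant, run $\textbf{LISSequence}(x,b,\eps')$, and read off the bounds from Lemma~\ref{lem:lissequence}.'' Your exponent bookkeeping (the $(b^2)^{d-1}=n^{2-4\delta}$ factor balanced against the extra $b^3=n^{3\delta}$ to give $n^{2-\delta}$) is the arithmetic the paper leaves implicit, and it checks out.
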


\begin{proof}[Proof of Theorem~\ref{lem:lissequence_ndelta}]
	Let $b = n^\delta$ and pick $\eps'$ to be a constant sufficiently smaller than $\eps$. We run $\textbf{LISSequence}(x, b, \eps')$. Then Theorem~\ref{lem:lissequence_ndelta} is a direct result of Lemma~\ref{lem:lissequence}.
\end{proof}

\section{Longest Common Subsequence}
\label{lcs}

In this section, we describe our algorithm for approximating $\lcs(x)$ with small space. Before introducing our algorithm, we introduce the following reduction from LCS to LIS.

\subsection{Reducing LCS to LIS}

Our space efficient algorithm for LCS is based on a reduction (algorithm~\ref{algo:ReduceLCStoLIS}) from LCS to LIS.

\begin{algorithm}
	\SetAlgoLined
	\DontPrintSemicolon
	\KwIn{Two strings $x\in \Sigma^n$ and $y \in \Sigma^m$.}
	\KwOut{An integer sequence $z\in [m]^*$ }
	initialize $z$ to be an empty string\;
	\For{$i = 1$ \textbf{to} $n$} {
		\For{$j = m$ \textbf{to} $1$}{
			if $x_i= y_j$, add $j$ to the end of $z$.
		}	
	}
	\Return{$z$}\;
	\caption{ReduceLCStoLIS}
	\label{algo:ReduceLCStoLIS}
\end{algorithm}

\begin{lemma}
	\label{lem:reduceLCStoLIS}
	Given two strings $x\in \Sigma^n$ and $y \in \Sigma^m$ as input to algorithm~\ref{algo:ReduceLCStoLIS}, let $z = \text{ReduceLCStoLIS}(x,y)\in [m]^*$ be the output, then the length of $z$ is $O(mn)$ and $\lis(z) = \lcs(x,y)$.
\end{lemma}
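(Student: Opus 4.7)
The plan is to verify the length bound and then establish the equality $\lis(z)=\lcs(x,y)$ by proving two matching inequalities.

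First, the length bound is immediate from the algorithm: the outer loop runs $n$ times, and the inner loop appends at most one symbol per iteration, so $|z|\le mn$. This is essentially a one-line observation.

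For the inequality $\lis(z)\ge \lcs(x,y)$, I would fix an optimal common subsequence of length $L=\lcs(x,y)$, realized by indices $i_1<i_2<\cdots<i_L$ in $x$ and $j_1<j_2<\cdots<j_L$ in $y$ with $x_{i_k}=y_{j_k}$. Then when the outer loop reaches $i=i_k$, the inner loop encounters $j=j_k$ and appends $j_k$ to $z$. Hence the values $j_1,\ldots,j_L$ all appear in $z$; they are strictly increasing in value, and since they are appended during outer iterations $i_1<\cdots<i_L$, they also appear in $z$ in this order. Thus $j_1,\ldots,j_L$ is an increasing subsequence of $z$ of length $L$.

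For the reverse inequality $\lis(z)\le \lcs(x,y)$, I would start with any increasing subsequence $t_1<t_2<\cdots<t_L$ of $z$ (so each $t_k\in[m]$), and let $p_k$ be the outer-loop index $i$ during which $t_k$ was appended. By construction $x_{p_k}=y_{t_k}$, and $p_1\le p_2\le\cdots\le p_L$ because symbols in $z$ are appended in nondecreasing order of outer index. The key point, and really the only subtlety, is to argue strict inequality $p_k<p_{k+1}$: if $p_k=p_{k+1}=p$, then $t_k$ and $t_{k+1}$ are both appended during iteration $i=p$, and since the inner loop runs $j=m$ down to $1$, the $j$-values appended during a single outer iteration appear in $z$ in \emph{decreasing} order; but $t_k$ precedes $t_{k+1}$ in $z$ yet $t_k<t_{k+1}$, a contradiction. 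Hence $p_1<\cdots<p_L$ strictly, and $(p_k,t_k)_{k=1}^L$ exhibits a common subsequence of $x$ and $y$ of length $L$.

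The main (minor) obstacle is exactly that strictness argument, which crucially exploits the \emph{descending} order of the inner loop; if the inner loop were in ascending order the equality could fail. Once that observation is in place, both directions are short.
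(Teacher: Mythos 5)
Your proof is correct and follows essentially the same route as the paper's: both directions are argued by tracking which outer iteration (the paper's block $\hat{z}^i$) each appended index comes from, with the key observation in both being that the descending inner loop forces consecutive elements of an increasing subsequence of $z$ to come from distinct outer iterations. No gaps.
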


\begin{proof}[Proof of Lemma~\ref{lem:reduceLCStoLIS}]
	$z$ can be viewed as the concatenation of $n$ blocks such that $z = \hat{z}^1\circ \hat{z}^2\circ \cdots \circ \hat{z}^n$ ($\hat{z}^i$'s can be empty). For each $i$, the length of $\hat{z}^i$ is equal to the number of characters in $y$ that are equal to $x_i$. The elements of $\hat{z}^i$ are the indices of characters in $y$ that are equal to $x_i$ and the indices in $\hat{z}^i$ are sorted in descending order. Since the length of $\hat{z}^i$ for each $i$ is at most $m$, the length of $z$ is at most $mn$.
	
	Assuming $\lis(z) = l$, we show $\lcs(x,y)\geq l$. By the assumption, there exists a subsequence of $z$ with length $l$. We denote this subsequence by $t\in [m]^l$.  Let $t = t_1t_2\cdots t_l$. Since $\hat{z}^i$'s are strictly descending, eash element in $t$ is picked from a distinct block. We assume for each $i\in [l]$, $t_i$ is picked from the block $\hat{z}^{t'_i}$. Then by the algorithm, we know $x_{t'_i} = y_{t_i}$. For $1\leq i < j\leq l$, $t_i$ appears before $t_j$. The block  $\hat{z}^{t'_i}$ also appears before  $\hat{z}^{t'_j}$. We have $1\leq t'_1<t'_2<\cdots < t'_l\leq n$. Thus, $x_{t'_1}x_{t'_2}\cdots x_{t'_l}$ is a subsequence of $x$ with length $l$ and it is equal to $y_{t_1}y_{t_2}\cdots y_{t_l}$. Hence, $\lcs(x,y)$ is at least $l$.
	
	On the other direction, assuming $\lcs(x,y) = l$, we show $\lis(z)\geq l$. By the assumption, let $x' = x_{t'_1}x_{t'_2}\cdots x_{t'_l}$ be a subsequence of $x$ and $y' = y_{t_1}y_{t_2}\cdots y_{t_l}$ be a subsequence of $y$ such that $x' = y'$. Let $z' = \hat{z}^{t'_1}\circ \hat{z}^{t'_2}\circ \cdots \circ \hat{z}^{t'_l}$, which is a subsequence of $z$. For each $i\in [l]$, since $x_{t'_i}=y_{t_i}$, $t_i$ appears in the block $\hat{z}^{t'_i}$. By $1\leq t_1<t_2<\cdots < t_l\leq m$, we know $t = t_1t_2\cdots t_l$ is an increasing subsequence of $z'$ and thus also an increasing subsequence of $z$.
\end{proof}

\subsection{Space Efficient Algorithm for LCS}

Our goal is to compute the longest common subsequence between two strings $x$ and $y$ over alphabet $\Sigma$. We assume the input strings $x$ and $y$ both has length $n$ and the alphabet size $|\Sigma|$ is polynomial in $n$. We call our space efficient algorithm for LCS $\textbf{ApproxLCS}$ and give the pseudocode in algorithm~\ref{algo:approxlcs}.

The idea is to first reduce calculating $\lcs(x,y)$ to computing LIS with algorithm~\ref{algo:ReduceLCStoLIS}.  We do not use $\textbf{ApproxLIS}$ as a black box. Instead, we make slight modification to the approach to achieve better running time. We denote $z = \text{ReduceLCStoLIS}(x,y)$. Although storing $z= \text{ReduceLCStoLIS}(x,y)$ already takes $O(n^2\log n)$ bits of space, we will show later that this is not required for our algorithm. 

Similar to the case for LIS, we introduce a slightly modified version of $\textbf{ApproxLCS}$ called $\textbf{ApproxLCSBound}$. It takes an additional input $l$. The modification are same: first, at line~\ref{ApproxLCS:newk} of algorithm $\textbf{ApproxLCS}$, we always require $k$ to be at most $l$. That is, we let $k = \min\{l, \max\{k, s+d\}\}$. Second, instead of output $\max S$, we output the whole set $S$ and list $Q$ (at the bottom level, we output the list maintained by \emph{PatienceSorting}). We omit the pseudocode for $\textbf{ApproxLCSBound}$.

\begin{algorithm}
	\SetAlgoLined
	\DontPrintSemicolon
	\KwIn{Two strings $x, y \in \Sigma^*$, parameters $b$ and $\eps$}
	
	\If{$|x|\leq b $ }{
		let $z = \text{ReduceLCStoLIS}(x,y)$
		compute $\lis(z)$ exactly with $O(b\log n)$ space in $O(|x||y|) = O(bn\log n)$ time with \emph{PatienceSorting}. 
	}

	divide $x$ evenly into $b$ blocks such that $x = x^1\circ x^2\circ \cdots \circ x^b$\;
	
	initialize $S = \{0\}$ and $Q[0] = -\infty$ \;
	
	\For{$i = 1$ \textbf{to} $b$}{
		$k = 0$\;
		\ForEach{$s\in S$}{
			
			$d = \textbf{ApproxLCS} (x^i,y^*(Q[s]),b,\eps)$  \Comment*[f]{by  $y^*(Q[s])$, we mean the string we get by replacing the first $Q[s]$ elements of $y$ with a special symbol $*$ that does not appear in $x$}\;
			$k = \max\{k,s+d \}$\label{ApproxLCS:newk}\;
		}
		if $k\leq b/\eps$, let $S' = \{0,1,2,\ldots,k\}$, otherwise let $S' = \{0,\frac{\eps}{b}k,2\frac{\eps}{b}k,\ldots ,k\}$ \Comment*[f]{evenly pick $ \frac{b}{\eps}+1$ integers from $0$ to $k$ (including $0$ and $k$) } \label{algo:approxlcsd+1_newS}\;
		$Q'[s] \gets \infty$ for all $s\in S'$ except $Q'[0] = -\infty$\;
		\ForEach{$s\in S$}{
			\ForEach{$l = 1,1+\eps,(1+\eps)^2,\ldots, k-s$}{
				$\tilde{S},\tilde{Q} \gets \textbf{ApproxLCSBound}(x^i,y^*(Q[s]),b, \eps,l)$ \;
				for each $s'\in S'$ such that $s\leq s' \leq s+l$, let $\tilde{s}$ be the smallest element in $\tilde{S}$ that is larger than $s'-s$  and set $Q'[s'] = \min \{\tilde{Q}[\tilde{s}], Q'[s']\}$. \label{algo:approxlcs_updateQS}\;
			}
		}
		$S\gets S'$, $Q\gets Q'$\;
	}

	\Return{$\max\{s\in S\}$}\;
	\caption{$\textbf{ApproxLCS}$}
	\label{algo:approxlcs}
\end{algorithm}

\begin{lemma}
	\label{lem:approxlcs}
	Given two strings $x, y \in \Sigma^n$, parameters $b\leq \sqrt{n}$ and $\eps\in(0,1)$, $\textbf{ApproxLCS}(x,y,b,\eps)$ computes a $(1-3\log_b (n) \eps)$ approximation of $\lcs(x,y)$ with $O(\frac{b\log^2 n}{\eps \log b})$ bits of space in $(O(\frac{b^2\log n}{\eps^2}))^{\log_b n-1}bn\log n$ time.
\end{lemma}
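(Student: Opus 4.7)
The plan is to mimic the proof of Lemma~\ref{lem:ApproxLIS}, carrying the entire argument over via the reduction from LCS to LIS. Let $z = \text{ReduceLCStoLIS}(x,y)$; by Lemma~\ref{lem:reduceLCStoLIS}, $\lis(z) = \lcs(x,y)$, and each symbol of $z$ can be produced on demand from $x$ and $y$ in logspace, so $z$ is never stored. The block decomposition $x = x^1 \circ \cdots \circ x^b$ induces a decomposition $z = z^1 \circ \cdots \circ z^b$ with $z^i = \text{ReduceLCStoLIS}(x^i, y)$, and the recursive call $\textbf{ApproxLCS}(x^i, y^*(Q[s]), b, \eps)$ exactly implements $\textbf{ApproxLIS}$ on the subsequence of $z^i$ consisting of symbols strictly greater than $Q[s]$: overwriting the first $Q[s]$ positions of $y$ with a symbol outside $\Sigma$ prevents any index $j \leq Q[s]$ from ever being appended to $z^i$, which matches exactly the filtering step inside $\textbf{ApproxLIS}$.

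For correctness, I would induct on recursion depth $r$ from $d = \log_b n$ down to $1$. At the base case ($|x| \leq b$), \emph{PatienceSorting} on the implicit $z$ yields $\lcs(x,y)$ exactly. For the inductive step I would replay Claim~\ref{claim:approxlis_correctness2} on the induced block decomposition of $z$: fix a longest common subsequence, viewed as an increasing subsequence $\tau = \tau^1 \circ \cdots \circ \tau^b$ of $z$ with $\tau^t$ lying in $z^t$, set $h_t = \sum_{j \leq t}|\tau^j|$ and $k_t = \max S_t$, and prove by induction on $t$ that
\[
P'_t\bigl[(1 - 3(d-(r+1))\eps - \eps)\, h_t - 2t\tfrac{\eps}{b}\,k_t\bigr] \leq P_t[h_t],
\]
using the same three ingredients: (i) the outer inductive hypothesis applied to the recursive subcall recovers an extension of approximate length $|\tau^t|$; (ii) trying $l = 1, 1+\eps, (1+\eps)^2, \ldots$ lands within a $(1+\eps)$ factor of $|\tau^t|$; (iii) the $(\eps/b)$-spaced discretization of $S_t$ loses at most $(\eps/b) k_t$ per update. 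Setting $t = b$ then yields a $(1 - 3(d-r)\eps)$-approximation at level $r$, so the top-level output is a $(1 - 3\eps\log_b n)$-approximation of $\lcs(x,y)$.

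The space accounting is identical to $\textbf{ApproxLIS}$: each active recursion level stores only $S$ and $Q$, each of size $O(b/\eps)$, giving $O(\frac{b}{\eps}\log n)$ bits per level, times $\log_b n$ levels, plus $O(b\log n)$ at the leaf, which totals $O(\frac{b\log^2 n}{\eps \log b})$. The time analysis differs from LIS only at the leaves. The recursion tree has degree $O(\frac{b^2\log n}{\eps^2})$ (from $|S| = O(b/\eps)$ starts times $O(\log_{1+\eps} n)$ values of $l$ times $b$ blocks) and depth $\log_b n$, hence $(O(\frac{b^2\log n}{\eps^2}))^{\log_b n - 1}$ leaves. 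Each leaf runs \emph{PatienceSorting} on the implicit $z$ of length $|x| \cdot |y| = O(bn)$ in $O(bn\log n)$ time; the internal-node work, dominated by the updates at line~\ref{algo:approxlcs_updateQS} at cost $O(b^3\log n/\eps^3)$ per node, is of strictly lower order and is absorbed into the leaf contribution, yielding the claimed bound.

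The main obstacle will be verifying cleanly that $\textbf{ApproxLCS}(x^i, y^*(Q[s]), b, \eps)$ is equivalent to $\textbf{ApproxLIS}$ on the filtered subsequence of $z^i$ at \emph{every} recursive level, not just at the top. Once this correspondence is nailed down --- essentially that $\text{ReduceLCStoLIS}$ commutes with the ``restrict to values above a threshold'' operation when the threshold is realized by masking the appropriate prefix of $y$ --- the remainder of the proof carries Lemma~\ref{lem:ApproxLIS} across the reduction step for step, and the approximation, space, and time bounds follow by plugging the modified leaf cost into the recurrence.
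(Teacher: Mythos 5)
Your proposal matches the paper's proof essentially step for step: the same observation that $\textbf{ApproxLCS}$ is $\textbf{ApproxLIS}$ run on $z=\text{ReduceLCStoLIS}(x,y)$ partitioned according to $x$ rather than evenly, with the prefix-masking $y^*(Q[s])$ realizing the threshold filtering, correctness inherited from Claim~\ref{claim:approxlis_correctness2}, identical space accounting, and the time bound dominated by the $(O(\frac{b^2\log n}{\eps^2}))^{\log_b n-1}$ leaves each costing $O(bn\log n)$ for \emph{PatienceSorting} on the implicit $z$. The correspondence you flag as the main obstacle is exactly what the paper asserts (and does not belabor), so your argument is correct and follows the same route.
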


\begin{proof}
	When the input string $x$ is of length at most $b$, let $z = \text{ReduceLCStoLIS}(x,y)$. Since $z$ is consists of at most $b$ parts, all in decreasing order. $\lis(z)\leq b$. We can compute $\lis(z)$ using \emph{PatienceSorting} with $O(b\log n)$ space and $O(bn\log n)$ time. We do not need to store $z$. This is because \emph{PatienceSorting} only need to scan $z$ from left to right once.  We can do this by scanning $y$ from right to left $b$ times. The total time is still $O(bn\log n)$.
	
	Otherwise, let $z = \text{ReduceLCStoLIS}(x,y)$. We use the same notation as in the proof of Lemma~\ref{lem:reduceLCStoLIS} such that $z\in [n]^{O(n^2)}$ can be viewed as the concatenation of $n$ blocks. That is, $z = \hat{z}^1\circ \hat{z}^2\circ \cdots \circ \hat{z}^n$, where $\hat{z}^i$ consists of indices of characters in $y$ that are equal to $x^i$, arranged in descending order. 
	
	The recursion stops when the first input string has length no larger than $b$. Also, every time we enter next level, the length of input string $x$ is decreased by a factor of $b$. The depth of recursion is at most $\log_b n$. 
	
	Our algorithm $\textbf{ApproxLCS}$ is essentially computing the length of LIS of $z$. However, unlike the algorithm $\textbf{ApproxLIS}$, instead of partition $z$ equally into $b$ blocks, we partition $z$ according to $x$. That is, we first evenly divide $x$ into $b$ blocks such that $x = x^1\circ x^2 \circ \cdots \circ x^b$. $z$ is then naturally divided into $b$ blocks $z = z^1\circ z^2 \circ \cdots \circ z^b$ (note that $z^i$ is not the same as $\hat{z}^i$), where $z^i = \text{ReduceLCStoLIS}(x^i, y)$. Thus, $\textbf{ApproxLCS}(x^i, y, b, \eps)$ computes a good approximation of $\lis(z^i)$.
	
	In our algorithm, we use the notation $y^*(Q[s])$ to denote the string we get by replacing the first $Q[s]$ characters of $y$ with a special symbol $*$ that does not appear in $x$. Thus, $\text{ReduceLCStoLIS}(x^i, y^*(Q[s]))$ is the subsequence of $\text{ReduceLCStoLIS}(x^i, y)$ with only elements larger than $Q[s]$. By running $\textbf{ApproxLCS}(x^i, y^*(Q[s]), b, \eps)$, we are computing a good approximation of the length of LIS of $z^i$ with first element larger than $Q[s]$.
	
	The proof of correctness of algorithm $\textbf{ApproxLCS}$ then follows directly from that of algorithm $\textbf{ApproxLIS}$. 
	
	Notice that it is not required to stored string $z$ at any level of the algorithm. We divide $z$ according to the corresponding position in $x$ and we only need to query $z$ at the last level. 
	
	We now turn to space and time complexity. The analysis is similar to that of algorithm $\textbf{ApproxLIS}$ except

	$\textbf{ApproxLCS}$ is a recursive algorithm. We start by analyse the recursion tree. Notice that except at the bottom level, $\textbf{ApproxLCS}$ needs to call itself $O(b|S|) = O(\frac{b^2}{\eps})$ times and $\textbf{ApproxLCSBound}$ $O(b|S|\log_{1+\eps}n) = O(\frac{b^2\log n}{\eps^2})$ (since $|S| $ is at most $\frac{b}{\eps}$) times. Thus, the degree of the recursion tree is $O(\frac{b^2\log n}{\eps^2})$. Also, as we have shown, the depth of recursion is at most $\log_b n$. 
	
	The computation of $\textbf{ApproxLCS}$  has the same order as doing depth-first search on the recursion tree. We only need to remember the information in one node at each level. 
	
	For the inner nodes of the recursion, $\textbf{ApproxLCS}$ maintains a set $S$ and a list $Q$, both of size $\frac{b}{\eps}$. Since each elements in $S$ and $Q$ takes $O(\log n)$ bits. The space needed for an inner node is $O(\frac{b}{\eps}\log n)$. For the leaf node,  we compute $\lcs(x,y)$ exactly with $O(b\log n)$ space. Thus, the total space required for $\textbf{ApproxLCS}$  is $O(d(\frac{b}{\eps}\log n))$ where $d$ is the depth of recursion. Since $d\leq \log_b n= \frac{\log n}{\log b}$, we know running $\textbf{ApproxLCS}$ takes $O(\frac{b\log^2 n}{\eps \log b})$.

	For the time complexity, we denote the used within $i$-th level by $T_i$ (excluding the time used for running itself or $\textbf{ApproxLCSBound}$) and the total running time by $ T$. We have $T = \sum_{i = 1}^d T_i$. 
	
	For the leaf nodes, we run exact algorithm with $O(bn\log n)$ time. Since the recursion tree has degree $O(\frac{b^2\log n}{\eps^2})$ and depth $\log_b n$, the number of nodes at $d$-th level (leaf nodes) is bounded by $(O(\frac{b^2\log n}{\eps^2}))^{\log_b n-1}$. We have
	\begin{equation}
	\label{lcs_a}
		T_d = (O(\frac{b^2\log n}{\eps^2}))^{\log_b n-1}bn\log n.
	\end{equation}
	
	For the inner nodes, the time is dominated by the operations at line~\ref{algo:approxlcs_updateQS} of algorithm~\ref{algo:approxlcs}. Since the size of $S$ and $\tilde{S}$ are both at most $\frac{b}{\eps}$ and we try at most $\log_{1+\eps}n$ different $l$, it takes $O(b|S|^2\log_{1+\eps}n) = O(\frac{b^3}{\eps^3}\log n)$ time. Also, the number of nodes at $i$-th level is bounded by $(O(\frac{b^2\log n}{\eps^2}))^{i-1}$. We have
	
	\begin{equation}
		T_i = (O(\frac{b^2\log n}{\eps^2}))^{i-1}\frac{b^3}{\eps^3}\log n.
	\end{equation}
	
	Thus
	
	\begin{equation}
	\label{lcs_b}
	\begin{split}
	\sum_{i = 1}^{d-1} T_i & \leq (d-1)T_{d-1}\\
	& \leq d(O(\frac{b^2\log n}{\eps^2}))^{d-2}\frac{b^3}{\eps^3}\log n\\
	& = (O(\frac{b^2\log n}{\eps^2}))^{d-1}\frac{b\log n}{\eps\log b}\\
	& = (O(\frac{b^2\log n}{\eps^2}))^{\log_b n-1}\frac{b\log n}{\eps\log b}
	\end{split}
	\end{equation}
	
	Compare \ref{lcs_a} and \ref{lcs_b}, we know that the total running time is dominated by $T_d$. We have
	
	$$T = (O(\frac{b^2\log n}{\eps^2}))^{\log_b n-1}bn\log n.$$

\end{proof}

\begin{theorem}
	\label{lem:lcs_polylog}
	Given two strings string $x, y\in \Sigma^n$, there is a deterministic algorithm that computes a $1-O(\frac{1}{\log \log n})$ approximation of $\lcs(x,y)$ with $O(\frac{\log^4 n}{\log \log n})$ bits of space in $O(n^{6+o(1)})$ time.
\end{theorem}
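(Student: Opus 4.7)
The plan is to derive Theorem~\ref{lem:lcs_polylog} as a direct instantiation of Lemma~\ref{lem:approxlcs}, by choosing $b$ and $\eps$ so that the resulting approximation factor, space usage, and running time match the claimed polylogarithmic bounds. Specifically, I would set $b = \log n$ and $\eps = \frac{1}{\log n}$, then run $\textbf{ApproxLCS}(x,y,b,\eps)$ and simply verify each of the three bounds.

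First, for the approximation factor, the guarantee $1 - 3\log_b(n)\eps$ becomes $1 - 3 \cdot \frac{\log n}{\log \log n} \cdot \frac{1}{\log n} = 1 - \frac{3}{\log \log n}$, which is $1 - O(1/\log \log n)$ as required. Second, for the space bound, $O\!\left(\frac{b \log^2 n}{\eps \log b}\right)$ becomes $O\!\left(\frac{\log n \cdot \log^2 n}{(1/\log n) \cdot \log \log n}\right) = O\!\left(\frac{\log^4 n}{\log \log n}\right)$, again matching the theorem.

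The only nontrivial calculation is the running time. Plugging in yields $\left(O\!\left(\frac{b^2 \log n}{\eps^2}\right)\right)^{\log_b n - 1} \cdot b n \log n = \left(O(\log^5 n)\right)^{\frac{\log n}{\log \log n} - 1} \cdot n \log^2 n$. Using the identity $(\log n)^{\log n / \log \log n} = n$, the base-$\log n$ part contributes $n^{5 - o(1)}$, while any constant factor $C$ inside the big-$O$ contributes $C^{\log n / \log \log n} = 2^{O(\log n / \log \log n)} = n^{o(1)}$. Multiplying by the trailing $n \log^2 n$ gives $n^{6 + o(1)}$, which is the claimed bound. The expected main obstacle is essentially bookkeeping — making the $n^{o(1)}$ absorption tight — but no new ideas are required beyond Lemma~\ref{lem:approxlcs}.
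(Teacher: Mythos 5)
Your proposal is correct and is exactly the paper's proof: set $b=\log n$ and $\eps=\frac{1}{\log n}$ and invoke Lemma~\ref{lem:approxlcs}, with the paper leaving the arithmetic implicit while you spell it out. The verification of the approximation factor, the space bound, and the $(\log^5 n)^{\log n/\log\log n}=n^{5+o(1)}$ computation for the running time are all accurate.
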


\begin{proof}[Proof of Theorem~\ref{lem:lcs_polylog}]
	Let $b = \log n$ and $\eps = \frac{1}{\log n}$. Then Theorem~\ref{lem:lcs_polylog} is a direct result of Lemma~\ref{lem:approxlcs}.
\end{proof}

\begin{theorem}
	\label{lem:lcs_n^delta}
	Given two strings $x,y\in \Sigma^n$, for any constant $\delta\in (0,\frac{1}{2})$ and $\eps\in (0,1)$, there is a deterministic algorithm that computes a $1-\eps$ approximation of $\lcs(x,y)$ with $\tilde{O}_{\eps,\delta}(n^\delta)$ bits of space in $\tilde{O}_{\eps, \delta}(n^{3-\delta})$ time.
\end{theorem}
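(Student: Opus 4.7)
The plan is to derive this theorem as a direct corollary of Lemma~\ref{lem:approxlcs}, by plugging in the appropriate parameters. Specifically, I would take $b = n^{\delta}$, and choose an accuracy parameter $\eps' = c\eps\delta$ for a sufficiently small absolute constant $c$ (say $c=1/3$), then run $\textbf{ApproxLCS}(x,y,b,\eps')$ and return its output.

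First I would verify the approximation quality. Lemma~\ref{lem:approxlcs} guarantees a $(1 - 3\log_b(n)\eps')$-approximation of $\lcs(x,y)$. With $b=n^\delta$ we have $\log_b n = 1/\delta$, so the factor becomes $1 - 3\eps'/\delta = 1 - \eps$ by the choice of $\eps'$. Next I would check the space bound: Lemma~\ref{lem:approxlcs} gives space $O\!\left(\frac{b \log^2 n}{\eps' \log b}\right) = O\!\left(\frac{n^\delta \log^2 n}{\eps\delta \cdot \delta \log n}\right) = \tilde{O}_{\eps,\delta}(n^\delta)$, which matches the claim.

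The most delicate piece, though still routine, is the running-time calculation. The bound from Lemma~\ref{lem:approxlcs} is $\left(O\!\left(\frac{b^2 \log n}{\eps'^2}\right)\right)^{\log_b n - 1} b n \log n$. Substituting $b = n^\delta$ and $\log_b n = 1/\delta$, the geometric factor is
\[
\left(O\!\left(\tfrac{n^{2\delta} \log n}{\eps'^2}\right)\right)^{1/\delta - 1} \;=\; n^{2\delta(1/\delta - 1)} \cdot \polylog(n) \cdot O_{\eps,\delta}(1) \;=\; n^{2-2\delta} \cdot \tilde{O}_{\eps,\delta}(1),
\]
and multiplying by the remaining $b n \log n = n^{1+\delta}\log n$ yields $\tilde{O}_{\eps,\delta}(n^{3-\delta})$, as required.

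There is no real obstacle: once Lemma~\ref{lem:approxlcs} is in hand, this theorem is obtained just by setting $b = n^\delta$, rescaling $\eps$ by a factor depending on $\delta$ to absorb the $\log_b n = 1/\delta$ blowup in the approximation factor, and checking that the $\polylog(n)$ and $\eps,\delta$ dependencies collapse into the $\tilde{O}_{\eps,\delta}(\cdot)$ notation. The proof in the paper should mirror those of Theorem~\ref{lem:lis_n^delta} and Theorem~\ref{lem:ed_ndelta}, namely a one-line deduction stating the parameter choices and pointing to Lemma~\ref{lem:approxlcs}.
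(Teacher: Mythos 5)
Your proposal is correct and matches the paper's proof exactly: the paper likewise sets $b = n^\delta$, picks $\eps'$ sufficiently smaller than $\eps$ (to absorb the $\log_b n = 1/\delta$ factor), and cites Lemma~\ref{lem:approxlcs} directly. Your explicit verification of the space and time bounds is more detailed than what the paper writes, but the argument is the same.
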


\begin{proof}[Proof of Theorem~\ref{lem:lcs_n^delta}]
	Let $b = n^\delta$ and pick $\eps'$ to be a constant sufficiently smaller than $\eps$. We run $\textbf{ApproxLCS}(x, y, b, \eps')$. Then Theorem~\ref{lem:lcs_n^delta} is a direct result of Lemma~\ref{lem:approxlcs}.
\end{proof}

\subsection{Output the Subsequence}

We now show how to output the common subsequence we have detected with small space. The idea is similar to our approach on how to output longest increasing subsequence. 

Similarly, let $b$ be a parameter we will pick later. For the base case, we use the linear space alogrithm from \cite{hirschberg1975linear} that output a LCS  of $x$ and $y$ with $O(\min(n,m)\log n)$ space (we assume alphabet size is polynomial in $n$). Thus, one of the string has length no larger than $b$, we can output the longest common subsequence with $O(b\log n)$ space.  

We call our algorithm for outputing the sequence $\textbf{LCSSequence}$. The pseudocode can be found in algorithm~\ref{algo:lcssequence}.

\begin{algorithm}
	\SetAlgoLined
	\DontPrintSemicolon
	\KwIn{Two strings $x, y\in \Sigma^*$ and parameters $b$ and $\eps$ }
	
	\If{$|x|\leq b\textbf{ or } |y|\leq b$}{
		Output the longest common subsequence of $x$ and $y$ with $O(b\log n)$ bits of space and $O(bn )$ time.
	}
	
	divide $x$ evenly into $b$ blocks $x^1\circ x^2\circ \cdots \circ x^b$\;
	
	compute $ S_{b}$ and $Q_{b}$ by running $\textbf{ApproxLCS}(x,y,b, \eps)$  \Comment*[f]{$S_i$ and $Q_i$ are the set $S$ and list $Q$ after $i$-th outer loop of $\textbf{ApproxLCS}$ } \;
	set $B$ to be a list with $B[0] = -\infty$ and $B[b] = Q_b[s_b]$ where $ s_b= \max\{s\in S_b\}$\;
	\For{$i = b-1$ \textbf{to} $1$ \label{algo:lcssequence_first_loop}}{
		release the space used for storing $S_{i+1}$, and $Q_{i+1}$\;
		compute $S_i$, $Q_i$ by running $\textbf{ApproxLCS}^d(x,y,\eps)$ \;
		
		\ForEach{$s\in S_i$ \textbf{such that } $s\leq s_{i+1}$ }{
			let $z$ be the subsequence of $x^i$ by only considering the elements larger than $Q[s]$\;
			\ForEach{$l = 1,1+\eps,(1+\eps)^2,\ldots, k-s$}{
				$\tilde{S},\tilde{Q} \gets \textbf{ApproxLCSBound}(x^i,y^*(Q[s]),\eps,l)$\Comment*[f]{by  $y^*(Q[s])$, we mean the string we get by replacing the first $Q[s]$ elements of $y$ with a special symbol $*$ that does not appear in $x$}\;
				if there is an $\tilde{s}\in \tilde{S}$ such that $\tilde{s}+s\geq s_{i+1}$ and $B[i+1] = \tilde{Q}[\tilde{s}]$, we set $B[i] = Q_i[s]$, $s_i = s$ and \textbf{continue}  \;
			}	             
		}
	}
	\For{$i=1$ \textbf{to} $b$ \label{algo:lcssequence_output_loop}}{
		
		$\textbf{LCSSequence}(x^i,y_{[B[i-1]+1,B[i]]},b,  \eps)$\;
	}
	
	\caption{$\textbf{LCSSequence}$}
	\label{algo:lcssequence}
\end{algorithm}

\begin{lemma}
	\label{lem:lcssequence}
	
	Given two strings $x, y\in \Sigma^n$ and two parameters $b\leq \sqrt{n}$ and $\eps\in (0,1)$, $\textbf{LCSSequence}(x,y, b,\eps)$ outputs an increasing subsequence of $x$ with length at least $(1-3\log_b (n) \eps)\lcs(x,y)$ with $O(\frac{b\log^2 n}{\eps \log b})$ bits of space in $(O(\frac{b^2\log n}{\eps^2}))^{\log_b n-1}b^2n\log n$ time.
	
\end{lemma}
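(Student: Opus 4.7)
The plan is to mirror the proof of Lemma~\ref{lem:lissequence}, adapting each step to the two–string setting via the reduction $\text{ReduceLCStoLIS}$ and the block decomposition induced by splitting $x$. I will argue correctness first, then space, then time.

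For correctness, I would set $z = \text{ReduceLCStoLIS}(x,y)$ and let $\rho$ be the longest common subsequence of $x,y$ that is detected by $\textbf{ApproxLCS}(x,y,b,\eps)$ via its corresponding increasing subsequence of $z$; by Lemma~\ref{lem:approxlcs}, $|\rho|\ge (1-3\log_b n\,\eps)\lcs(x,y)$. Splitting $x$ evenly into $b$ blocks induces $z=z^1\circ\cdots\circ z^b$ with $z^i=\text{ReduceLCStoLIS}(x^i,y)$, and correspondingly $\rho$ decomposes as $\rho^1\circ\cdots\circ\rho^b$ with $\rho^i$ lying in $z^i$. My task is then to show, by downward induction on $i$ from $b$ to $1$, that the list $B$ computed by the outer loop of Algorithm~\ref{algo:lcssequence} satisfies $B[i]=Q_i[s_i]$ where $s_i$ is the length of $\rho^1\circ\cdots\circ\rho^i$ and $B[i]$ is the last $y$-index of $\rho^i$ (or $B[i]=B[i+1]$ if $\rho^{i+1}$ is empty). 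The base case $B[b]=Q_b[s_b]$ with $s_b=\max S_b$ holds by definition. For the inductive step, I reuse the inequality established in the proof of Lemma~\ref{lem:approxlcs} (its LCS-analog of Claim~\ref{claim:approxlis_correctness2}), which guarantees that for some $s'\in S_i$ with $s'+l\ge s_{i+1}$ and $l$ close to $|\rho^{i+1}|$ on the geometric grid $1,1+\eps,(1+\eps)^2,\ldots$, the call $\textbf{ApproxLCSBound}(x^{i+1},y^*(Q_i[s']),b,\eps,l)$ returns an $\tilde s\in\tilde S$ with $\tilde s+s'\ge s_{i+1}$ and $\tilde Q[\tilde s]\le B[i+1]$; this is exactly the trigger condition in the inner loop, so the algorithm sets $B[i]=Q_i[s']$ and $s_i=s'$. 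Once $B$ is correct, the final loop invokes $\textbf{LCSSequence}(x^i,y_{[B[i-1]+1,B[i]]},b,\eps)$ on disjoint windows of $y$; by the induction on recursion depth, each recursive call outputs a common subsequence between $x^i$ and $y_{[B[i-1]+1,B[i]]}$ whose length approximates $|\rho^i|$ with the same per-level loss, and concatenation yields the claimed $(1-3\log_b n\,\eps)$-approximation.

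For the space bound, I would analyze the recursion tree of $\textbf{LCSSequence}$: its degree is $b$ and depth is at most $\log_b n$, and the computation follows a depth-first order, so only one node per level is live. At each inner node we store the list $B$ of size $b+1$ and, sequentially, one pair $(S_i,Q_i)$ of size $O(b/\eps)$, each entry being $O(\log n)$ bits; the internal calls to $\textbf{ApproxLCS}$ and $\textbf{ApproxLCSBound}$ cost $O(\tfrac{b\log^2 n}{\eps\log b})$ bits by Lemma~\ref{lem:approxlcs}, and this space is reused across calls. At the leaf level, where $|x|\le b$, I invoke Hirschberg's linear-space LCS algorithm \cite{hirschberg1975linear} using $O(b\log n)$ bits. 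Summing across the $\log_b n$ recursion levels gives $O(\tfrac{b\log^2 n}{\eps\log b})$, dominated by the per-node $\textbf{ApproxLCS}$ cost.

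For the time bound, I would separate the time spent inside $\textbf{LCSSequence}$ itself from the time spent in calls to $\textbf{ApproxLCS}$ and $\textbf{ApproxLCSBound}$. The self-time per inner node is dominated by the $O(b\cdot|S|\cdot\log_{1+\eps}n)=O(\tfrac{b^2\log n}{\eps^2})$ calls to $\textbf{ApproxLCSBound}$; multiplied over $b^{\log_b n-1}$ inner nodes this is subsumed by the $\textbf{ApproxLCS}$ cost. For the recursive $\textbf{ApproxLCS}$/$\textbf{ApproxLCSBound}$ calls, letting $f(n)$ denote the cost of one such call on input length $n$, Lemma~\ref{lem:approxlcs} gives $f(n)=(O(\tfrac{b^2\log n}{\eps^2}))^{\log_b n-1}bn\log n$, and at the $i$-th level of $\textbf{LCSSequence}$ we make at most $O(b^i\cdot\tfrac{b\log n}{\eps^2})$ such calls on inputs of length $n/b^{i-1}$, whose sum is dominated by $O(b f(n))$. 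The leaf work is $O(bn\log n)$ per leaf times $b^{\log_b n-1}$ leaves, which is absorbed. Altogether this yields the stated $(O(\tfrac{b^2\log n}{\eps^2}))^{\log_b n-1}b^2 n\log n$ bound. The main obstacle I anticipate is the inductive step that $B[i]$ is correctly recovered: one must verify that the threshold test ``$\tilde s+s\ge s_{i+1}$ and $\tilde Q[\tilde s]=B[i+1]$'' is both achievable (by the $\textbf{ApproxLCS}$ correctness lemma) and unique enough that setting $B[i]=Q_i[s]$ preserves the inductive invariant, which requires reusing the structural argument from Claim~\ref{claim:approxlis_correctness2} translated to the $\text{ReduceLCStoLIS}$ setting.
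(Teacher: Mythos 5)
Your proposal is correct and follows essentially the same route as the paper: the paper's own proof also treats $\textbf{LCSSequence}$ as $\textbf{LISSequence}$ applied to $z=\text{ReduceLCStoLIS}(x,y)$ with the block decomposition induced by $x$, recovers the list $B$ the same way, invokes Hirschberg's linear-space algorithm at the leaves, and bounds the running time by $O(b\,f(n))$ with $f(n)$ the cost of one $\textbf{ApproxLCS}$ call. If anything, your write-up supplies more detail on the inductive recovery of $B$ than the paper does.
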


\begin{proof}
	[Proof of Lemma~\ref{lem:lcssequence}]
	Algorithm $\textbf{LCSSequence}$ is a modified version of $\textbf{LISSequence}$. One difference is that, when the input string $x$ has length at most $b$, we output the longest common subsequence of $x$ and $y$ with $O(b\log n)$ bits of space and $O(bn )$ time. This can be achieved by using the linear space algorithm from ~\cite{hirschberg1975linear}.
	
	Our algorithm for approximating LCS is base on the reduction from LCS to LIS. Let $z = \text{ReduceLCStoLIS}(x,y)$. Instead of output an increasing subsequence of $z$, we need to output the corresponding common subsequence of $x$ and $y$. Similarly to our analysis of algoirhm $\textbf{LISSequence}$. Let $\rho$ be the longest increasing subsequence we have detected in $z$ and divide $\rho$ into $b$ blocks such that $\rho^i$ lies in $z^i = \text{ReduceLCStoLIS}(x^i, y)$. The list $B$ here serves the same purpose as in the algorithm $\textbf{LISSequence}$. $B[i]$ is equal to the last element of $\rho^i$. Thus, for the next level of recursion, we only need to consider the LCS between $x^i$ and $y_{[B[i-1]+1,B[i]]}$. Here, $y_{[B[i-1]+1,B[i]]}$ is the substring of $y$ from position $B[i-1]+1$ to $B[i]$. 
	
	By the correctness of algorithm $\textbf{ApproxLCS}$, we are guaranteed to output a common subsequence of $x$ and $y$ with length at least $(1-3\log_b (n) \eps)\lcs(x,y)$.
	
	For the time and space complexity, the analysis is also same as that of $\textbf{LISSequence}$. The space is dominated by the space used for running $\textbf{ApproxLCS}$. The time complexity is $O(bf(n))$ where $f(n)$ is the running time of $\textbf{ApproxLCS}$ when the first input string has length $n$. 	
	
\end{proof}

\begin{theorem}
	\label{lem:lcssequence_polylog}
	Given two strings $x , y\in \Sigma^n$, there is a deterministic algorithm that can output an common subsequence of length at least $(1-O(\frac{1}{\log \log n}))\lcs(x,y)$ with $O(\frac{\log^4 n}{\log \log n})$ bits of space in $ O(n^{6+o(1)})$ time. 
\end{theorem}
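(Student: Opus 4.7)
The plan is to obtain Theorem~\ref{lem:lcssequence_polylog} as a direct instantiation of Lemma~\ref{lem:lcssequence} with a carefully chosen pair of parameters, exactly mirroring how Theorem~\ref{lem:lcs_polylog} was derived from Lemma~\ref{lem:approxlcs}. Concretely, I would set $b = \log n$ and $\eps = 1/\log n$, and then simply plug these into the three conclusions of Lemma~\ref{lem:lcssequence}.

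First, for the approximation factor, Lemma~\ref{lem:lcssequence} outputs a common subsequence of length at least $(1 - 3 \log_b n \cdot \eps)\lcs(x,y)$. With the chosen parameters, $\log_b n = \log n / \log \log n$, so $3 \log_b n \cdot \eps = 3/\log \log n$, giving the desired $(1 - O(1/\log \log n))$ factor. Next, for the space bound, $O(\tfrac{b \log^2 n}{\eps \log b}) = O(\tfrac{\log n \cdot \log^2 n}{(1/\log n)\log\log n}) = O(\tfrac{\log^4 n}{\log \log n})$, matching the claim. Finally, for the running time, the $(O(\tfrac{b^2 \log n}{\eps^2}))^{\log_b n -1}\, b^2 n \log n$ expression becomes $(O(\log^5 n))^{\log n/\log\log n - 1} \cdot \mathrm{poly}(\log n) \cdot n$; since $(\log^5 n)^{\log n / \log\log n} = 2^{5 \log n} = n^5$, the total is $n^{5 + o(1)} \cdot n \cdot \mathrm{poly}(\log n) = n^{6+o(1)}$, as required.

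There is essentially no obstacle here: all the heavy lifting has already been done in Lemma~\ref{lem:lcssequence} (and before that in Lemma~\ref{lem:approxlcs} and the \textbf{LCSSequence} construction). The only thing to double-check is that the $o(1)$ absorption in the exponent of the running time works cleanly with the poly-logarithmic overhead factors $b^2 n \log n = n \cdot \mathrm{polylog}(n)$, which indeed they do since $\mathrm{polylog}(n) = n^{o(1)}$. Thus the proof is a one-line substitution, and I would write it as: ``Take $b = \log n$ and $\eps = 1/\log n$ in Lemma~\ref{lem:lcssequence}; the claimed bounds follow immediately.''
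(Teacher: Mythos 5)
Your proposal is correct and is exactly the paper's proof: the paper likewise obtains this theorem by setting $b=\log n$ and $\eps = 1/\log n$ in Lemma~\ref{lem:lcssequence}. Your arithmetic verifying the approximation factor, the $O(\frac{\log^4 n}{\log\log n})$ space bound, and the $(\log^5 n)^{\log n/\log\log n}=n^5$ computation for the $n^{6+o(1)}$ running time is all sound.
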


\begin{proof}[Proof of Theorem~\ref{lem:lcssequence_polylog}]
	Let $b = \log n$ and $\eps = \frac{1}{\log n}$. Then Theorem~\ref{lem:lcssequence_polylog} is a direct result of Lemma~\ref{lem:lcssequence}.
\end{proof}

\begin{theorem}
	\label{lem:lcssequence_ndelta}
	Given two strings $x,y \in \Sigma^n$, for any constants $\delta\in (0,\frac{1}{2})$ and $\eps\in (0,1)$, there is a deterministic algorithm that can output an common subsequence of length at least $(1-\eps)\lcs(x,y)$ with $\tilde{O}_{\eps, \delta}(n^\delta)$ bits of space in $\tilde{O}_{\eps, \delta}(n^{3}) $ time. 
\end{theorem}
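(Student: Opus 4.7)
The plan is to derive this theorem as a direct instantiation of Lemma~\ref{lem:lcssequence}, in exactly the same way that Theorem~\ref{lem:lcs_n^delta} is obtained from Lemma~\ref{lem:approxlcs} and Theorem~\ref{lem:lissequence_ndelta} is obtained from Lemma~\ref{lem:lissequence}. Concretely, I set the block parameter $b = n^{\delta}$ and run $\textbf{LCSSequence}(x,y,b,\eps')$ where $\eps'$ is a constant of the form $\eps' = c\,\eps\,\delta$ chosen so that $3\log_b(n)\,\eps' = 3\eps'/\delta \le \eps$; this guarantees the output is a common subsequence of length at least $(1-\eps)\lcs(x,y)$ by the approximation guarantee of Lemma~\ref{lem:lcssequence}.

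For the space complexity, Lemma~\ref{lem:lcssequence} gives $O(\tfrac{b \log^2 n}{\eps' \log b})$ bits. Substituting $b=n^{\delta}$ and $\log b = \delta \log n$, this becomes $O(\tfrac{n^{\delta} \log n}{\eps' \delta}) = \tilde{O}_{\eps,\delta}(n^{\delta})$, as required.

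For the running time, Lemma~\ref{lem:lcssequence} gives
\[
\left(O\!\left(\tfrac{b^2 \log n}{\eps'^2}\right)\right)^{\log_b n - 1} \cdot b^2 \, n \log n .
\]
Plugging in $b = n^{\delta}$ so that $\log_b n = 1/\delta$, the first factor is
\[
\left(O\!\left(\tfrac{n^{2\delta} \log n}{\eps'^2}\right)\right)^{\frac{1}{\delta} - 1}
= n^{2\delta(\frac{1}{\delta}-1)} \cdot \polylog(n) \cdot O_{\eps,\delta}(1)
= n^{2 - 2\delta} \cdot \polylog(n) \cdot O_{\eps,\delta}(1),
\]
and the second factor is $n^{2\delta} \cdot n \log n = n^{1+2\delta}\log n$. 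The product is $\tilde{O}_{\eps,\delta}(n^{2-2\delta} \cdot n^{1+2\delta}) = \tilde{O}_{\eps,\delta}(n^{3})$, matching the claimed bound.

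No step in this proof is an actual obstacle: the heavy lifting is already done inside Lemma~\ref{lem:lcssequence}, so the only thing to be careful about is the bookkeeping for the approximation factor, namely to absorb the factor $\log_b n = 1/\delta$ (which is a constant, given that $\delta$ is a constant) into the choice of $\eps'$ so that the final approximation is $1-\eps$. Since $\delta$ is fixed, this simply costs a constant blow-up inside the $\tilde{O}_{\eps,\delta}(\cdot)$ notation and does not change the stated complexities.
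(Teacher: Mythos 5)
Your proposal is correct and matches the paper's own proof, which likewise sets $b=n^{\delta}$, chooses $\eps'$ to be a sufficiently small constant relative to $\eps$, and invokes Lemma~\ref{lem:lcssequence} directly. Your explicit substitution into the time and space bounds of that lemma is exactly the bookkeeping the paper leaves implicit.
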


\begin{proof}[Proof of Theorem~\ref{lem:lcssequence_ndelta}]
	Let $b = n^\delta$ and pick $\eps'$ to be a constant sufficiently smaller than $\eps$. We run $\textbf{LCSSequence}(x, y, b, \eps')$. Then Theorem~\ref{lem:lcssequence_ndelta} is a direct result of Lemma~\ref{lem:lcssequence}.
\end{proof}

\section{Asymmetric streaming model}

\label{sec:asymmetric}

Asymmetric streaming model has been considered in \cite{saks2013space}. In this model, we have streaming access to one string and random access to the other string. We now show how our approaches can be used to get better algorithms in this model.

\subsection{Edit Distance}

\subsubsection{$(1+\eps)$ approximation with $\tilde{O}(\sqrt{n})$ space}

\begin{theorem}
	\label{lem:asymmetric_ed}
	Given two strings $x, y\in \Sigma^n$. Suppose we have streaming access to string $x$ and random access to string $y$. Then, there is a deterministic algorithm that, making one pass through $x$, outputs a $(1+\eps)$-approximation of $\ed(x,y)$ in $\tilde{O}(n^{2})$ time with $\tilde{O}(\sqrt{n})$ bits of space.
\end{theorem}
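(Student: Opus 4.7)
The plan is to instantiate \textbf{SpaceEfficientApproxED} from Section~\ref{ed} with $b = \lceil\sqrt{n}\rceil$, which makes the recursion depth exactly $2$: the top level partitions $x$ into $\sqrt{n}$ blocks of length $\sqrt{n}$, and at the next level every input string $x^i$ already has length $\le b$, so Lemma~\ref{lem:SpaceEfficientApproxED} computes edit distance exactly. By that lemma this already gives a $(1+O(\eps))$-approximation and runs in $\tilde O(n^2)$ time; the only issue is that, as written, \textbf{SpaceEfficientApproxED} revisits the blocks of $x$ many times, once for each choice of $\Delta$ and once for each candidate interval, which requires random access to $x$. The goal therefore is to reorganize the computation so that the outer loop is a single left-to-right scan of $x$ that processes one block $x^i$ at a time.

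Concretely, the streaming algorithm will maintain, in parallel for every $\Delta$ that is a power of $(1+\eps)$ up to $n$, the current row of the dynamic programming table $A(i-1,\cdot)$ of \emph{DPEditDistance} (Algorithm~\ref{algo:DPEditDistance}). Each such row has size $O(b/\eps)=O(\sqrt{n}/\eps)$ by Lemma~\ref{lem:DPEditDistance}, and there are $O(\log_{1+\eps}n)=\tilde O(1/\eps)$ values of $\Delta$, so storing all rows together costs $\tilde O(\sqrt{n}/\eps^2)$ bits. When the next block $x^i$ arrives in the stream, buffer its $\sqrt{n}$ symbols into memory (this costs $O(\sqrt{n}\log n)$ bits and happens only for the current block). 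Then, for every $\Delta$, generate the candidate set $C^i_{\eps,\Delta}=\text{CandidateSet}(n,n,b,(l_i,r_i),\eps,\Delta)$ using random access to $y$, and for each $(a,b')\in C^i_{\eps,\Delta}$ compute $\ed(x^i,y_{[a,b']})$ \emph{exactly} by standard $O(|x^i|\cdot|y_{[a,b']}|)$ dynamic programming, streaming this value directly into the single-row update of $A$ for that $\Delta$; we never need to store the two-dimensional matrix $M$. After processing $x^i$ we discard its buffer, advance to $x^{i+1}$, and repeat. When the stream ends we finish the last DP step (line~\ref{DPEditDistance:last} of Algorithm~\ref{algo:DPEditDistance}) for every $\Delta$ and return the minimum.

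The correctness is inherited directly from Lemma~\ref{lem:SpaceEfficientApproxED} with $d=2$, since we are producing exactly the same numbers that the non-streaming algorithm would produce for each $\Delta$; the only change is the order in which the edit-distance subproblems are solved. For space, we pay $\tilde O(\sqrt{n}/\eps^2)$ bits for the cached DP rows, $O(\sqrt{n}\log n)$ bits for the current block buffer, and $O(\sqrt{n}\log n/\eps)$ bits for the innermost exact edit-distance computation (which reuses its space across candidates), giving $\tilde O_\eps(\sqrt{n})$ overall. For time, each block contributes $O(\log_{1+\eps} n)$ choices of $\Delta$, $O(\sqrt{n}\log n/\eps^2)$ candidate intervals per $\Delta$, and $O(\sqrt{n}\cdot\sqrt{n}/\eps)=O(n/\eps)$ time per candidate for the exact ED; summed over $\sqrt{n}$ blocks this is $\tilde O(n^2)$.

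The main obstacle is subtle: a priori the candidates $(a,b')\in C^i_{\eps,\Delta}$ produced for the current block $x^i$ depend on $\Delta$ but not on any state of the DP, so they can be enumerated on the fly; however, the definition of $A(i,\cdot)$ indexes $A$ by the starting points in $C^{i+1}$, meaning that the row we maintain before seeing $x^{i+1}$ is indexed by \emph{future} candidate starting points. This is harmless because $C^{i+1}$ can be regenerated from $y$ (random access) and from the deterministic description of the partition of $x$, independently of the contents of $x^{i+1}$; so when $x^i$ has just been processed we can compute the needed columns of $A(i,\cdot)$ using only the stored row $A(i-1,\cdot)$, the $\ed(x^i,y_{[a,b']})$ values that were just streamed, and on-the-fly generation of $C^{i+1}$. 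Verifying that no step of the update rule in line~\ref{algo:DPEditDistance_updating} requires a second pass over $x^i$, and bookkeeping the $\tilde O(1/\eps^2)$ parallel copies of the DP cleanly within $\tilde O(\sqrt{n})$ space, is the one place where care is needed; everything else is a direct specialization of Lemma~\ref{lem:SpaceEfficientApproxED}.
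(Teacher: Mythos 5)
Your proposal is correct and follows essentially the same route as the paper: set $b=\sqrt{n}$ so the recursion has depth two, run all $O(\log_{1+\eps}n)$ choices of $\Delta$ as parallel DP instances, and buffer one block $x^i$ at a time so that $x$ is read in a single left-to-right pass. Your extra observation that the row $A(i,\cdot)$ is indexed by candidate starting points from $C^{i+1}$, which can be regenerated from $y$ alone, is a valid detail that the paper leaves implicit.
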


\begin{proof}[Proof of Theorem~\ref{lem:asymmetric_ed}]
	
	We take $b = \sqrt{n}$ and $\eps' $ to be a constant sufficiently smaller than $\eps$. 
	
	We now show that we can slightly modify the algorithm $\text{SpaceEfficientApproxED}$ (algorithm~\ref{algo:SpaceEfficientApproxED}) to make it work in the asymmetric model. 
	
	The idea is to run the for-loop starting from line~\ref{algo:SpaceEfficientApproxEDDeltaLoop} of algorithm~\ref{algo:SpaceEfficientApproxED} in parallel. This creates $O(\log_{1+\eps} n) = O(\frac{n}{\eps})$ parallel instances. Finally, we output the smallest edit distance find by these instances. This does not change the result of $\text{SpaceEfficientApproxED}(x,y, b,\eps')$. 
	
	We only need to show each instance takes $\tilde{O}(\sqrt{n})$ bits of space and reads $x$ from left to right once. 
	
	Notice that the computation in each instance is the same as running $\text{SpaceEfficientApproxED}$ except that we only try one $\Delta$ instead of all $\Delta = (1+\eps)^j$ for $j\in [\lceil \log_{1+\eps} n\rceil]$. Thus, it has the same space complexity as $\text{SpaceEfficientApproxED}$ and can be computed with $\tilde{O}(\sqrt{n})$ bits of space. Running them in parallel increase the aggregated space by a factor of $O(\log_{1+\eps} n)$
	
	The depth of recursion is two when $b = \sqrt{n}$. We only need to query $x$ in the second level and we query each block of $x$ one by one. That is, we only query block $x^{i+1}$ after we have finished the computation on input $x^i$. Thus, when we need to query block $x^i$, we can store the whole block with $O(\sqrt{n}\log n)$ bits of space.  After we have finished the computation on block $x^i$, we can release the space and scan the next $\sqrt{n}$ elements of $x$. This only adds another $O(\sqrt{n}\log n)$ bits to the aggregated space and we only need to scan $x$ from left to right once.

\end{proof}

\subsubsection{$\tilde{O}(n^\delta/\delta)$ space algorithm with polynomial time}

A recent work by Farhadi et. al. \cite{farhadi2020streaming} gives an asymmetric streaming algorithm that finds a $O(2^{\frac{1}{\delta}})$ approximation of edit distance using $\tilde{O}(\frac{n^\delta}{\delta})$ space at the expense of a running time exponential to the input size.

We now show that, combined with our space-efficient approximation algorithm, we can reduce the running time to a polynomial. 

\begin{theorem}
	\label{lem:asymmetric_ed_ndelta}
	Given two strings $x, y\in \Sigma^n$. Suppose we have streaming access to $x$ and random access to $y$. For any constants $\delta \in (0,\frac{1}{2})$, there is a deterministic algorithm that, making one pass through $x$, outputs a $O(2^{\frac{1}{\delta}})$-approximation of $\ed(x,y)$ in $\tilde{O}_{ \delta}(n^{4})$ time with $O(\frac{n^\delta}{\delta}\log n)$ bits of space.
\end{theorem}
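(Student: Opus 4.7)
The plan is to follow the recursive architecture of \cite{farhadi2020streaming} but replace its two sources of super-polynomial running time with polynomial-time subroutines drawn from earlier sections. Recall that on input $(x,y)$ their scheme partitions $x$ into $b=n^\delta$ consecutive blocks $x^1,\dots,x^b$; recursively computes for each $i$ a substring $y_{[l_i,r_i]}$ together with a value $d_i$ such that $\ed(x^i,y_{[l_i,r_i]}) \le d_i \le \alpha\,\min_{l^*,r^*} \ed(x^i,y_{[l^*,r^*]})$; then searches for indices $p_0 \le p_1 \le \cdots \le p_b$ minimizing $\sum_{i=1}^b \ed(y_{[p_{i-1},p_i)}, y_{[l_i,r_i]})$; and combines the result with $\sum_i d_i$ to obtain a $(2\alpha+1)$-approximation of the closest substring of $y$ to $x$ at the current level. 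Iterating $1/\delta$ times and computing exactly at the base case yields an $O(2^{1/\delta})$-approximation of $\ed(x,y)$.

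The first modification is the observation already sketched in the technical overview: minimizing $\sum_{i=1}^b \ed(y_{[p_{i-1},p_i)}, y_{[l_i,r_i]})$ over all $p_0 \le \cdots \le p_b$ is equivalent to finding the substring $y_{[p_0,p_b-1]}$ of $y$ whose edit distance to the concatenation $Y := y_{[l_1,r_1]}\circ\cdots\circ y_{[l_b,r_b]}$ is minimal, since any optimal alignment between such a substring and $Y$ splits at the $b-1$ block boundaries of $Y$ and thereby determines the $p_i$'s. Hence instead of iterating over the $\binom{n}{b+1}$ tuples we iterate over the $O(n^2)$ substrings of $y$. The second modification is to replace the exact (quasipolynomial) inner edit-distance computation by the $(1+\eps)$-approximation of Theorem~\ref{lem:ed_ndelta} invoked with parameters $\delta$ and $\eps=\delta$; this gives a $(1+\delta)$-approximation of $\ed(y_{[p_0,p_b-1]},Y)$ in $\tilde O_\delta(n^2)$ time using $\tilde O_\delta(n^\delta)$ space.

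For the analysis, the per-level approximation becomes $\alpha \mapsto (2\alpha+1)(1+\delta)$, so after $1/\delta$ levels the overall factor is $O((2+O(\delta))^{1/\delta})=O(2^{1/\delta})$ for constant $\delta$. The dominant cost at each level is $O(n^2)$ substrings $\times\ \tilde O_\delta(n^2)$ per inner ED call $=\tilde O_\delta(n^4)$; the base case computes ED exactly on strings of length $O(n^\delta)$ in $\tilde O_\delta(n^{2\delta})$ time per call; summing over the polynomially many recursive calls and the $1/\delta$ levels leaves the total at $\tilde O_\delta(n^4)$. For space, each level stores the $O(b)$ triples $(l_i,r_i,d_i)$ in $O(n^\delta \log n)$ bits and reuses $\tilde O(n^\delta)$ bits for the inner ED subroutine, giving $O((n^\delta/\delta)\log n)$ overall.

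The main obstacle I expect is reconciling this recursion with the requirement of a \emph{single streaming pass} over $x$ and with the fact that we never materialize $Y$. For the first point, $x$ is consumed only at the deepest level, where consecutive base-case blocks are processed left-to-right while summaries of size $\tilde O(n^\delta)$ are carried up the recursion; this exactly corresponds to one left-to-right scan of $x$ with $O((n^\delta/\delta)\log n)$ auxiliary state. For the second point, the $(1+\eps)$-ED algorithm of Theorem~\ref{lem:ed_ndelta} only ever accesses its inputs through indexed queries (the recursion is driven by substring index ranges, not by scanning), so each query $Y_j$ can be answered in $O(\log n)$ time by binary search on the stored $(l_i,r_i)$ table followed by a random-access query to $y$; thus $Y$ need not be explicitly stored, and the space and time bounds above are preserved.
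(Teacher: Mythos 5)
Your proposal is correct and follows essentially the same route as the paper: replace the exhaustive search over $p_0\le\cdots\le p_b$ by a search over the $O(n^2)$ substrings of $y$ compared against the (virtually indexed, never materialized) concatenation $\bar y$, and replace the exact inner edit-distance computation by the $(1+\eps)$-approximation of Theorem~\ref{lem:ed_ndelta} with $\eps=\delta$, yielding the same per-level loss, the same $\tilde O_\delta(n^4)$ time bound dominated by the top level, and the same $O((n^\delta/\delta)\log n)$ space.
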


\begin{proof}[Proof of Theorem~\ref{lem:asymmetric_ed_ndelta}]
	We start with explaining the ideas used in \cite{farhadi2020streaming} and then show why combined with our space efficient algorithms, the running time can be reduced to a polynomial.

	Given an online string $x$, our goal is to find an \emph{Approximate Closest Substring} of $x$ in $y$. For example, we say $y_{[l,r]}$ along with $d$ (an approximation of $\ed(x, y_{[l,r]})$), is an $\alpha$-approximation for the closest substring problem if for any substring $y_{[l^*, r^*]}$, we have
	\begin{equation}
		\ed(x, y_{[l,r]})\leq d \leq \alpha\ed(x, y_{[l^*,r^*]}).
	\end{equation}
	
	Let $\delta\in (0,\frac{1}{2})$ be a small constant and $b = n^\delta$. 
	
	If the online string $x$ has length at most $b = n^\delta$. We solve the closest substring problem exactly by finding a substring in $y$ that has the smallest edit distance to $x$. by computing the edit distance between $x$ and every substring of $y$ with length at most $2b$. Notice that compute the edit distance between two string with length $O(b)$ exactly takes $O(b^2)$ time and there are $O(nb)$ substrings of $y$ with length $2b$. This can be done with time  $O(b^3 n)$.
	
	If the online string $x$ has length larger than $b = n^\delta$. We first divide $x$ into $b$ blocks such that $x = x^1\circ x^2\circ \cdots \circ x^b$. Then, we find the approximate closest substring for each $x^i$ recursively. \cite{farhadi2020streaming} shows that, if for each block $x^i$, we are given an $\alpha$-approximation for the closest substring problem. That is, we know a substring $y_{[l_i,r_i]}$ of $y$ and a number $d_i$, such that for any substring $y_{[l^*, r^*]}$,
	
	\begin{equation}
	\label{asym_b}
		\ed(x^i, y_{[l_i,r_i]})\leq d_i \leq \alpha\ed(x, y_{[l^*,r^*]}).
	\end{equation}
	
	Then we can find a $(2\alpha+1)$-approximation for the closest substring of string $x$. That is, we can find a substring $y_{[l,r]}$ of $y$ and a number $d$, such that for any substring $y_{[l^*, r^*]}$,
	
	\begin{equation}
	\ed(x, y_{[l,r]})\leq d \leq (2\alpha+1)\ed(x, y_{[l^*,r^*]}).
	\end{equation}
	
	To do this, they try all possible $1\leq p_0\leq p_1\leq \ldots \leq p_b\leq n+1$ and set 
	
	\begin{equation}
		d = \min_{1\leq p_0\leq p_1\leq \ldots \leq p_b\leq n+1} (\sum_{i = 1}^{b} d_i + \ed(y_{[p_{i-1},p_i)}, y_{[l_i, r_i]}))
	\end{equation}
	
	and $l = p_0$ and $r = p_b-1$ with $p_0$, $p_b$ that achieve the minimum $d$. With triangle inequality, \cite{farhadi2020streaming} proved $y_{[l,r]}$ and $d$ is $(2\alpha+1 )$-approximation for the closest substring problem.
	
	The depth of recursion of the algorithm proposed by \cite{farhadi2020streaming} is at most $\frac{1}{\delta}$. Each recursive level adds a factor of two to the approximation of edit distance. Thus, this algorithm gives a $O(2^{\frac{1}{\delta}})$ approximation of the edit disance. Also notice that, the order of computation is the same as doing a depth first search on the recursion tree. To remember the partial results in each recursive level, we only need to store $l_i$, $r_i$ and $d_i$ for each $i\in [b]$. The space complexity is then bounded by $O(n^\delta/\delta\log n)$ bits.
	
	The super-polynomial running time comes from two parts: First, the algorithm tries all possible $1\leq p_0\leq p_1\leq \ldots \leq p_b\leq n+1$, there are $\binom{n}{n^\delta+1}$ choices for $p_0$ to $p_b$. Second, when computing $\ed(y_{[p_{i-1},p_i)}, y_{[l_i, r_i]})$, they use a $O(\log^2 n)$ space, quasi-polynomial time algorithm guaranteed by Savitch's theorem.
	
	We now show that with our space efficient algorithm for approximating edit distance, we can reduce the running time of above algorithm to polynomial, but with a slightly worse approximation factor.
	
	For convenience, let $\bar{y} = y_{[l_1,r_1]}\circ y_{[l_2, r_2]}\circ \cdots \circ y_{[l_b,r_b]}$. We do not need to store $\bar{y}$ with extra memory. Knowing $l_1, r_1, \ldots , l_b, r_b$, we can query each bit of $\bar{y}$ with an $O(\log n)$ time overhead. To achieve, we can precompute the accumulated length of the block $y_{[l_i,r_i]}$. This takes an additional $O(n^\delta\log n)$ bits of space. With this, to query a certain bit of $\bar{y}$, we can do a binary search to determine to query which bit in $y$ in $O(\log n)$ time. 
	
	Instead of trying all possible $1\leq p_0\leq p_1\leq \ldots \leq p_b\leq n+1$, we try all possible substrings $y_{[l',r']}$ with length at most $2|\bar{y}|$, i.e. $r'-l'\leq 2|x|$. And set $d$ to be 
	
	\begin{equation}
	\label{asym_a}
		d = \min_{r', l' \in [n], 1\leq r'-l'\leq 2|x|}\ed(\bar{y}, y_{[l',r']}) + \sum_{i = 1}^{b} d_i
	\end{equation}
	
	and $l = l'$ and $r = r'$ with $l'$, $r'$ that achieve the minimum value $d$. 
	
	Also, we do not compute $\ed(\bar{y}, y_{[l',r']}) $ exactly. Instead, we compute a $1+\eps$ approximation of $\ed(\bar{y}, y_{[l',r']}) $ with $O(n^\delta)$ space using our space efficient algorithm. Here, $\eps \in (0,1)$ is a small constant. 
	
	We show that the substring $y_{[l,r]}$ and $d$ is a $((2+\eps)\alpha + 1+\eps)$-approximation of closest substring of $x$. 
	
	To show this, let $y_{[\bar{l},\bar{r} ]}$ be the closest substring of $x$ in $y$. That is, for any substring $y_{[l^*, r^*]}$,
	
	\begin{equation}
	\ed(x, y_{[\bar{l},\bar{r}]})\leq \ed(x, y_{[l^*,r^*]}).
	\end{equation}
	
	We can assume $\bar{r}-\bar{l}\leq 2|x|$, since otherwise, $\ed(x, y_{[\bar{l},\bar{r}]})\geq |x|$ and we can replace $y_{[\bar{l},\bar{r} ]}$ with any substring of length $|x|$. 
	
	By the triangle inequality of edit distance, we know
	\begin{equation}
	\begin{split}
	\ed(\bar{y},y_{[\bar{l},\bar{r} ]} )&\leq  \ed(x, y_{[\bar{l},\bar{r}]})+ \ed(x,\bar{y}) \\
	&\leq  \ed(x, y_{[\bar{l},\bar{r}]}) + \sum_{i = 1}^{b} d_i. 
	\end{split}
	\end{equation}
	
	Here, the second inequality is guaranteed by the fact that 
	\begin{equation}
		\ed(x,\bar{y}) \leq \sum_{i = 1}^{b}\ed(x^i,y_{[l_i,r_i]}) \leq  \sum_{i = 1}^{b} d_i .
	\end{equation}
	
	Since $d$ in \ref{asym_a} takes minimum over $r', l' \in [n], 1\leq r'-l'\leq 2|x|$, the substring $y_{[\bar{l},\bar{r} ]}$ is also considered, we have 
	\begin{equation}
		d\leq (1+\eps)\ed(x,\bar{y}) + \sum_{i = 1}^{b} d_i\leq (1+\eps)\ed(x, y_{[\bar{l},\bar{r}]}) + (2+\eps)\sum_{i = 1}^{b} d_i.
	\end{equation}
	
	Notice that for any substring $y_{[l^*, r^*]}$, $y_{[l_i, r_i]}$ and $d_i$ satisfies equation~\ref{asym_b}. We have 
	\begin{equation}
	d\leq  ((2+\eps)\alpha + 1+\eps)\ed(x, y_{[\bar{l},\bar{r}]}) .
	\end{equation}
	
	Thus, $y_{[l,r]}$ and $d$ found by our method is a $((2+\eps)\alpha + 1+\eps)$-approximation of closest substring of $x$. Notice that this only requires an additional $O(n^\delta)$ space. The space complexity of thus unchanged asymptotically. 
	
	The structure of recursion remains the same. It will give us a $O((2+\eps)^{\frac{1}{\delta}})$ approximation of edit distance. Further more, if we take $\eps = \delta$, since $(2+\delta)^{\frac{1}{\delta}} = O(2^{\delta})$, the algorithm outputs a $O((2+\delta)^{\frac{1}{\delta}})$ approximation of edit distance. 
	
	For the time complexity, we consider the recursion tree, which has degree $b = n^\delta$ and depth $\log_b n = \frac{1}{\delta}$. 
	
	At the $\frac{1}{\delta}$-th level, we solve the closest substring problem by computing the edit distance between input at that level, $x$ ($|x|\leq b$), and every substring in $y$ with length at most $2b$. Notice that compute the edit distance between two string with length $O(b)$ exactly takes $O(b^2)$ time and there are $O(nb)$ substrings of $y$ with length $2b$. This can be done with time  $O(b^3 n)$. There are $b^{\frac{1}{\delta}-1}$ nodes at the $\frac{1}{\delta}$-th level. The total running time at that level is bounded by 
	
	\begin{equation}
		T_{\frac{1}{\delta}} = O(b^{\frac{1}{\delta}+2} n)  = O(n^{2+2\delta}) 
	\end{equation}
	
	For nodes in the $i$-th level of the recursion tree, the input string $x$ has length $\frac{n}{b^{i-1}}$. We need to consider every substring in $y$ with length at most $2|x|$ (there are $O(n|x|)$ such substring) and for each of these substring, we compute an $(1+\eps)$ approximation of edit distance between it and $x$. This takes $\tilde{O}_{\eps,\delta}(n|x|^3)$ time. Since there are $b^{i-1}$ nodes in the $i$-th recursive level, the total running time for $i$-th level is 
	\begin{equation}
		T_i = \tilde{O}_{\eps,\delta}(b^{i-1}n|x|^3) = \tilde{O}_{\eps,\delta}( \frac{n^4}{b^{2(i-1)}}) = \tilde{O}_{\eps,\delta}(n^{4-2\delta(i-1)})
	\end{equation}
	
	Thus, the running time is dominated by the computation at the first level. The time complexity is $\tilde{O}_{\eps,\delta}(n^4)$. Since we take $\eps = \delta$, the running time is $\tilde{O}_{\delta}(n^4)$.

\end{proof}

\subsection{Longest Common Subsequence}

\begin{theorem}
	\label{lem:asymmetric_lcs}
	Given two strings $x, y\in \Sigma^n$. Suppose we have streaming access to string $x$ and random access to string $y$. Then, there is a deterministic algorithm that, making one pass through $x$, outputs a $(1-\eps)$-approximation of $\lcs(x,y)$ in $\tilde{O}(n^{\frac{5}{2}})$ time with $O(\frac{\sqrt{n}}{\eps}\log n)$ bits of space.
\end{theorem}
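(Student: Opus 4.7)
The plan is to specialize the algorithm $\textbf{ApproxLCS}$ from Section~\ref{lcs} to the asymmetric streaming model by setting $b = \sqrt{n}$ and $\eps' = \eps/6$. With these parameters, the recursion depth $\log_b n - 1 = 1$, so the algorithm has only two levels: the top level divides $x$ into $b = \sqrt{n}$ blocks $x^1, \ldots, x^b$ each of size $\sqrt n$, and every recursive call at the second level has first-argument length $\sqrt n \leq b$ and therefore hits the base case, which runs exact \emph{PatienceSorting} on the reduced string. The approximation quality and the resource bounds follow from Lemma~\ref{lem:approxlcs}: the guarantee becomes $(1 - 3 \log_b(n) \eps') = (1 - 6\eps') = (1-\eps)$, the aggregate space is $O(b \log^2 n / (\eps' \log b)) = O(\sqrt{n}\log n / \eps)$ bits, and the running time is $(O(b^2 \log n / \eps'^2))^{\log_b n - 1} b n \log n = \tilde{O}(n^{5/2}/\eps^2)$.

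The crux is to verify that the computation can be arranged as a single left-to-right pass over $x$. The outer loop of $\textbf{ApproxLCS}$ iterates $i$ from $1$ to $b$, and inside iteration $i$ the only portion of $x$ that is ever accessed is the block $x^i$, through recursive calls $\textbf{ApproxLCS}(x^i, y^*(Q[s]), b, \eps')$ and $\textbf{ApproxLCSBound}(x^i, y^*(Q[s]), b, \eps', l)$. Because $|x^i| = \sqrt n$, I will, at the start of iteration $i$, read the next $\sqrt n$ symbols of $x$ from the stream into a buffer of size $O(\sqrt n \log |\Sigma|) = O(\sqrt n \log n)$ bits, then perform all second-level recursive calls using random access to this buffered block together with the given random access to $y$. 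Once the outer loop finishes updating $S$ and $Q$ for block $i$, the buffer is discarded and the buffer's space is reused to read block $x^{i+1}$. The base case itself requires no additional pass over $x$: the reduced sequence $z = \text{ReduceLCStoLIS}(x^i, y^*(Q[s]))$ can be produced symbol by symbol by scanning $y$ while holding $x^i$ in the buffer, exactly as in the base case of algorithm~\ref{algo:approxlcs}.

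It remains to tally the space. The buffered block costs $O(\sqrt n \log n)$ bits, the base-case \emph{PatienceSorting} list costs another $O(b \log n) = O(\sqrt n \log n)$ bits (and its memory is reusable across the base-case calls within a single outer iteration, since $S$ and $Q$ are only updated after each base-case call returns), and the persistent sets $S$ and $Q$ maintained between outer iterations have size $O(b/\eps) = O(\sqrt n/\eps)$ each with entries of $O(\log n)$ bits, for $O(\sqrt n \log n / \eps)$ bits total. Summing gives the claimed $O(\sqrt n \log n / \eps)$ bits with one pass over $x$.

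The main obstacle is really just the bookkeeping for the streaming discipline: one must check that no control path inside the second-level calls requires revisiting an earlier block of $x$, and that the base case can indeed stream $z$ on the fly using only random access to $y$ and the in-memory buffer of $x^i$. Both properties follow because $\textbf{ApproxLCS}$ at the top level strictly processes blocks of $x$ in order, $\text{ReduceLCStoLIS}$ is itself a left-to-right scan of $x^i$ with an inner scan of $y$, and \emph{PatienceSorting} is single-pass; no other dependence on $x$ exists.
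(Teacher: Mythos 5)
Your proposal is correct and takes essentially the same approach as the paper: set $b=\sqrt n$ so the recursion bottoms out after one level, invoke Lemma~\ref{lem:approxlcs} for the approximation and resource bounds, and observe that the outer loop of $\textbf{ApproxLCS}$ touches blocks of $x$ strictly in order, so buffering the current $\sqrt n$-symbol block suffices for all second-level calls (which otherwise only need random access to $y$ and to $S,Q$). Your write-up is a bit more explicit than the paper's about the buffering and the space accounting, but it is the same argument.
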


\begin{proof}[Proof of Theorem~\ref{lem:asymmetric_lcs}]
	
	We take $b = \sqrt{n}$ and $\eps' $ to be a constant sufficiently smaller than $\eps$. 
	
	We will show $\textbf{ApproxLCS}(x, y, b, \eps')$ is such a streaming algorithm. When $b = \sqrt{n}$, the recursion has depth only 2. We need to query $x$ when running \emph{PatienceSorting} on $z^i$ at the second level of recursion. Also notice that to read $z^i$, we only needs to query the $i$-th block $x^i$. By Lemma~\ref{lem:approxlcs},  $\textbf{ApproxLCS}(x, y, b, \eps')$ can be computed in $\tilde{O}(n^{\frac{5}{2}})$ time with $O(\frac{\sqrt{n}}{\eps}\log n)$  bits of space.
	
	Notice that \emph{PatienceSorting} only need to read the input string from left to right once. Also, to read $z_i = \text{ReduceLCStoLIS}(x^i, y)$ from left to right once, we only need to scan $x^i$ from left to right once. Thus, $\textbf{ApproxLCS}$ makes only one pass through $x$ when $b = \sqrt{n}$. This proves the lemma.
\end{proof}

\section{Discussion and Open Problems}
\label{discussion}
In this paper we designed several space efficient approximation algorithms for three string measures that are widely used in practice: edit distance, longest common subsequence, and longest increasing sequence. All our algorithms are deterministic and can use space $n^{\delta}$ for any constant $\delta>0$, while achieving $1+\eps$ or $1-\eps$ approximation for any constant $\eps>0$ or even slightly sub constant $\eps$. The running time of our algorithms are essentially the same as, or only slightly larger than the standard algorithms which solve these problems exactly. With a larger polynomial running time, we can even achieve space complexity $\polylog(n)$. Our work leaves many interesting open problems, and we list them below.

\begin{enumerate}
\item Can we achieve better space complexity or better time complexity, or both? For example, is it possible to further reduce the space complexity to even logarithmic while still maintaining polynomial running time? Or  can we maintain poly-logarithmic space, but also achieve  quadratic or even sub-quadratic time complexity? What kind of approximations can we achieve in these cases? For example, can we keep the approximation factor to be $1+\eps$ or $1-\eps$, or a constant? We believe it requires new ideas to answer these questions. We remark that in this direction, a recent work \cite{chakraborty2019approximate} provides randomized algorithms which can give a constant factor approximation to ED in both slightly sub-linear space and slightly sub-quadratic time. It remains to see if one can do better or design a similar deterministic algorithm.

\item So far all our algorithms are deterministic. How does randomness help here? Can we design randomized algorithms that achieve $1+\eps$ or $1-\eps$ approximation, but with better space complexity?


\item Finally, is there a good reason for the lack of progress on computing edit distance and longest common subsequence \emph{exactly} using polynomial time and strongly sub linear space? In other words, it would be nice if one can provide justification like the SETH-hardness of computing edit distance and longest common subsequence exactly in truly sub-quadratic time. We note that a recent work of Yamakami \cite{Yamakami17} proposes a so called \emph{Linear Space Hypothesis}, which conjectures that some $\mathsf{NL}$-complete problems cannot be solved simultaneously in polynomial time and strongly sub linear space. Thus it would be nice to show reductions from these problems to edit distance and longest common subsequence.\ We note that here we need a reduction that simultaneously uses small space and polynomial time.
\end{enumerate}

\section*{Acknowledgement}
We thank Aviad Rubinstein and the authors of \cite{farhadi2020streaming} for pointing out an error in our first version of the paper.

\bibliographystyle{alpha}
\bibliography{references}

\appendix

\section{Proofs of results in section~\ref{massively_parallel}} \label{proofs}

\begin{proof}[Proof of Lemma~\ref{lem:approx_optimal_candidate}]
	Let $D_i = \ed(x^i,y_{[\alpha_i,\beta_i]})$. Since we assume the alignment is optimal and $[\alpha_i,\beta_i]$ are disjoint and span the entire length of $y$,  we know $\ed(x,y) = \sum_{i = 1}^{b}D_i$. 
	
	For each $i\in [b]$, if $\eps'|\alpha_i-\beta_i+1|\leq |x^i|\leq 1/\eps'|\alpha_i- \beta_i+1|$, by the defition of \emph{$(\eps',\Delta)$-approximately optimal condiate}, we know, 
	\begin{equation}
	\label{a}
	|\alpha_i-\alpha'_i|\leq \eps'\frac{\Delta}{b}
	\end{equation}
	and
	\begin{equation}
	\label{b}
	|\beta_i-\beta'_i| \leq \eps'\frac{\Delta}{b} + \eps' \ed(x^i, y_{[\alpha_i,\beta_i]})
	\end{equation}
	
	Also notice that we can transform  $y_{[\alpha'_i,\beta'_i]}$ to $y_{[\alpha_i,\beta_i]}$ with $|\alpha_i-\alpha'_i|+|\beta_i-\beta'_i| $ insertions and then transform $y_{[\alpha_i,\beta_i]}$ to $x^i$ with $\ed(x^i,y_{[\alpha_i,\beta_i]}) $ edit operations. We have
	\begin{equation}
	\label{c}
	\ed(x^i,y_{[\alpha'_i,\beta'_i]})\leq \ed(x^i,y_{[\alpha_i,\beta_i]}) + |\alpha_i-\alpha'_i|+|\beta_i-\beta'_i|
	\end{equation}
	
	Meanwhile, we can always transform $y_{[\alpha_i,\beta_i]}$ to $y_{[\alpha'_i,\beta'_i]}$ with $|\alpha_i-\alpha'_i|+|\beta_i-\beta'_i| $ deletions and then transform $y_{[\alpha'_i,\beta'_i]}$ to $x^i$ with $\ed(x^i,y_{[\alpha'_i,\beta'_i]})$. We have 
	\begin{equation}
	\label{f}
	D'_i\geq D_i.
	\end{equation}
	
	Combining \ref{a} \ref{b} \ref{c} and \ref{f}, we have
	
	\begin{equation}
	\label{d}
	D_i\leq D'_i\leq \ed(x^i,y_{[\alpha_i,\beta_i]}) + 2|\alpha_i-\alpha'_i|+2|\beta_i-\beta'_i| \leq (1+2\eps')D_i + 4 \eps'\frac{\Delta}{b}.
	\end{equation}
	
    For those $i$ such that $|x^i|>(1/\eps')|\alpha_i-\beta_i+1|$ or $ |x^i|<\eps'|\alpha_i-\beta_i+1|$, to transform $x^i $ to $y_{[\alpha_i,\beta_i]}$, we need to insert (or delete) $||\alpha_i-\beta_i+1|-|x^i||$ characters to make sure the length of $x^i$ equals to the length of $y_{[\alpha_i,\beta_i]}$. Thus, $D_i=\ed (x^i,y_{[\alpha_i,\beta_i]})$ is at least $ ||\alpha_i-\beta_i|-|l_i-r_i||$.  Since $D'_i = |\alpha_i-\beta_i|+|l_i-r_i|$, we have 
	\begin{equation}
	\label{e}
	\begin{aligned}
	D'_i\leq &\frac{1+\eps'}{1-\eps'}D_i &\\
	\leq & (1+3\eps')D_i & \text{Since we set } \eps' = \eps/10\leq 1/10
	\end{aligned}
	\end{equation}
	
	Also notice that we can turn $x^i$ into $y_{[\alpha_i,\beta_i]}$with $|l_i-r_i|$ deletions and $|\alpha_i-\beta_i|$ insertions, we know $D'_i\geq D_i$. It gives us 
	\begin{equation}
	\label{g}
	D_i\leq D'_i\leq (1+3\eps')D_i 
	\end{equation}
	
	Thus for each $i \in [b]$, by \ref{g} and \ref{d}, we have 
	\begin{equation}
	D_i\leq D'_i\leq (1+3\eps') D_i + 4\eps' \frac{\Delta}{N} .
	\end{equation}
	
	Since we assume $ \Delta\leq (1+\eps')\ed(x,y)$, we have $\eps'  \Delta\leq 1.1 \eps'\ed(x,y)$, this gives us 
	\begin{equation}
	\ed(x,y)\leq  \sum_{i = 1}^{b}D'_i  \leq  (1+3\eps') \ed(x,y)+ 4\eps' \Delta\leq (1+10\eps')\ed(x,y) = (1+\eps)\ed(x,y).
	\end{equation}

\end{proof}

\begin{proof}[Proof of Lemma~\ref{lem:CandidateSet}]
	Let $C^i_{\eps,\Delta}$ be the output of $\text{CandidateSet}(n,m,b,(l_i,r_i),\eps,\Delta)$. For the starting point $i'$, we only choose multiples of $\eps \frac{ \Delta}{b}$ from $[l_i-\Delta-\eps\frac{ \Delta}{b},l_i+\Delta+\eps\frac{ \Delta}{b}]$. At most $ O(\Delta/(\eps\frac{\Delta}{b}))  = O(b/\eps)$ starting points will be chosen. For each starting point, we consider $O(\log_{1+\eps} m) = O(\frac{\log m}{\eps})= O(\frac{\log n}{\eps})$ ending point since we assume $\eps m\leq n \leq \frac{1}{\eps} m$. Thus, the size of set $C^i_{\eps,\Delta}$ is at most $ O(\frac{b\log n}{\eps^2})$.
	
	We now show there is an element in $C^i_{\eps,\Delta} = \text{CandidateSet}(n,m,b,(l_i,r_i),\eps,\Delta)$ that is an \emph{$(\eps, \Delta)$-approximately optimal candidate} of $x^i$ if $\eps|\alpha_i-\beta_i+1|\leq |x^i|\leq 1/\eps|\alpha_i-\beta_i+1|$. 
	
	Since we assume $\Delta\geq \ed(x,y)$, we are guaranteed that $l_i-\Delta\leq \alpha_i\leq l_i+\Delta$. Thus, there is a multiple of $\lceil\eps \frac{ \Delta}{b}\rceil$, denoted by $\alpha'$, such that
	
	$$l_i-\Delta-\eps\frac{ \Delta}{b}\leq \alpha_i\leq \alpha' \leq \alpha_i+\eps \frac{ \Delta}{b}\leq l_i+\Delta+\eps\frac{ \Delta}{b},$$
	since we try every multiple of $\lceil\eps \frac{ \Delta}{b}\rceil$ between $l_i-\Delta-\eps\frac{ \Delta}{b}$ and $l_i+\Delta+\eps\frac{ \Delta}{b}$, one of them equals to $\alpha'$.
	
	For the ending point, we first consider the case when the length of  $y_{[\alpha_i,\beta_i]}$ is larger than the length of $x^i$, that is $\beta_i-\alpha_i+1\geq r_i-l_i+1$. We know $\ed(x^i,y_{[\alpha_i,\beta_i]})\geq \beta_i-\alpha_i+1 -|x^i|$. Let $j$ be the largest element in $\{0,1,\lceil 1+\eps\rceil, \lceil (1+\eps)^2\rceil,\cdots,\lceil (1+\eps)^{\log_{1+\eps}(m)}\rceil\}$ such that $\alpha'+|x^i|-1 +j\leq \beta_i$. We set $\beta' = \alpha'+|x^i|-1+j$. Since $j\geq (\beta_i-(\alpha'+|x^i|-1)/(1+\eps)$, we have
	\begin{equation}
	\begin{aligned}
	\beta' \geq& \alpha'+|x^i|-1+(\beta_i-(\alpha'+|x^i|-1))/(1+\eps)\\
	\geq & \frac{\beta_i}{1+\eps}+\frac{\eps}{1+\eps}(\alpha'+|x^i|-1)\\
	\geq & \beta_i - \frac{\eps}{1+\eps} (\beta_i-\alpha'+1-|x^i|)\\
	\geq &\beta_i - \eps \ed(x^i,y_{[\alpha_i,\beta_i]})
	\end{aligned}
	\end{equation}
	
	The last inequality is because $\ed(x^i,y_{[\alpha_i,\beta_i]})\geq \beta_i-\alpha_i+1-|x^i|\geq \beta_i-\alpha'+1-|x^i|$ and $\eps \geq \frac{\eps}{1+\eps} $. Thus, $(\alpha', \beta')\in C^i_{\eps,\Delta}$ is an \emph{$(\eps, \Delta)$-approximately optimal candidate} of $x^i$. 
	
	For the case when $\beta_i - \alpha_i + 1< |x^i|$. Similarly, we know $\ed(x^i,y_{[\alpha_i,\beta_i]})\geq  |x^i|-(\beta_i-\alpha_i+1)$.We pick $j$ to be the smallest element in $\{0,1,\lceil 1+\eps\rceil, \lceil (1+\eps)^2\rceil,\cdots,\lceil (1+\eps)^{\log_{1+\eps}(m)}\rceil\}$ such that $\alpha'+|x^i|-1-j\leq \beta_i$. We know $j\leq (1+\eps) (\alpha'+|x^i|-1-\beta_i)$. We set $\beta' = \alpha'+|x^i|-j$. Then
	\begin{equation}
	\begin{aligned}
	\beta' \geq& \alpha'+|x^i|-1-(1+\eps)(\alpha'+|x^i|-1-\beta_i)\\
	\geq & \beta_i-\eps(\alpha'-\beta_i+|x^i|-1)\\
	\geq & \beta_i - \eps(\alpha_i+\eps\frac{\Delta}{N}-\beta_i+|x^i|-1)\\
	\geq &\beta_i - \eps \ed(x^i,y_{[\alpha_i,\beta_i]})-\eps^2\frac{\Delta}{b}\\
	\geq &\beta_i - \eps \ed(x^i,y_{[\alpha_i,\beta_i]})-\eps\frac{\Delta}{b}
	\end{aligned}
	\end{equation}
	Thus, $(\alpha', \beta')\in C^i_{\eps,\Delta}$ is an \emph{$(\eps, \Delta)$-approximately optimal candidate} of $x^i$. 
	
\end{proof}

\begin{proof}[Proof of Lemma~\ref{lem:DPEditDistance}]
	We start by explaining the dynamic programming. Let $f$ be a function such that $f(i)\in C^i_{\eps',\Delta} \cup \{\emptyset\}$. We say an interval $x^i$ is matched if $f(i)\in C^i_{\eps',\Delta}$ and it is unmatched if $f(i) = \emptyset$. Let $S^f_1$ be the set of indices of matched blocks under function $f$ and $S^f_2 = [b]\setminus S^f_1$ be the set of indices of unmatched blocks. We let $f(i) = (\alpha^f_i,\beta^f_i)$ for each $i\in S^f_1$. We also require that,  for any $i,j\in S^f_1$ with $i<j$, $(\alpha^f_i,\beta^f_i)$ and $(\alpha^f_j,\beta^f_j)$ are disjoint and $\beta^f_i<\alpha^f_j$. Let $u_f$ be the number of unmatched characters under $f$ in $x$ and $y$. That is, $u_f$ equals to the number of indices in $[n]$ that is not in any matched block plus the number of indices in $[m]$ that is not in $f(i)$ for any $i\in S^f_1$.  Then we define the edit distance under match $f$ by \[\ed_f \coloneqq \sum_{i\in S^f_1}\ed(x^i, y_{[\alpha^f_i,\beta^f_i]})+u_f.\]
	
	Since we can always transform $x$ to $y$ by deleting (inserting) every unmatched characters in $x$ ($y$), and transforming each matched block $x^i$ into $y_{[\alpha^f_i,\beta^f_i]}$ with $\ed(x^i, y_{[\alpha^f_i,\beta^f_i]})$ edit operations. We know $\ed_f\geq \ed(x,y)$ 
	
	Let $F$ be the set of all matchings. Also, given $i\in [b]$ and $\alpha\in [m]$, we let $F^{i,\alpha}$ be the set of matching such that $f(i')$ is within $(1,\alpha)$ for all $i'\leq i$. Similarly, for each $f\in F^{i,\alpha}$, let $u^{i,\alpha}_f$ be the number of unmatched characters in $x_{[1,r_i]}$ and $y_{[1,\alpha]}$ under $f$. We can also define $\ed^{i,\alpha}_f = \sum_{i\in S^f_1}\ed(x^i, y_{[\alpha^f_i,\beta^f_i]})+u^{i,\alpha}_f$. For simplicity, let $C^i$ be the set of starting points of all intervals in $C^{i+1}_{\eps,\Delta}$. We now show that in algorithm~\ref{algo:DPEditDistance}, for each $i\in[b-1]$ and $\alpha\in C^{i+1}$, we have 
	
	\[A(i,\alpha-1) = \min_{f\in F^{i,\alpha-1}}\ed^{i,\alpha-1}_f.\] 
	
	We can proof this by induction on $i$. For the base case $i = 1$, we fix an $\alpha\in C^{2}$. For each $f\in F^{1,\alpha-1}$, if $f(1) = \emptyset$, then every character in $x^1$ and $y_{[1,\alpha-1]}$ are unmatched. In this case, $\ed^{1,\alpha-1}_f = |x^1|+\alpha -1 = A(0,\alpha'-1) + \alpha-\alpha' +|x^1|$ for every $\alpha'\in C^1$ such that $\alpha'\leq \alpha$. When $f(1) \neq \emptyset$, we assume $f(1) = (\alpha^f_1,\beta^f_1)$, then $\ed^{i,\alpha-1}_f = \alpha^f_1-1 + M(1,(\alpha^f_i,\beta^f_i))+ \alpha -\beta = A(0,\alpha^f)+ M(1,(\alpha^f_i,\beta^f_i))+ \alpha -\beta $. By the updating rule of $A(1,\alpha-1)$ at line~\ref{algo:DPEditDistance_updating}, we know $A(1,\alpha-1) = \min_{f\in F^{1,\alpha-1}}\ed^{1,\alpha-1}_f$ for every $\alpha\in C^{2}$.
	
	Now assume $A(t-1,\alpha-1) = \min_{f\in F^{t-1,\alpha-1}}\ed^{t-1,\alpha-1}_f$ for any $\alpha\in C^{t}$ for $1<t\leq b-1$. Fix an $\alpha_0\in C^{t+1}$, we show $A(t,\alpha_0-1) = \min_{f\in F^{t,\alpha_0-1}}\ed^{t,\alpha_0-1}_f$. For each matching $f$, if $f(t) = \emptyset$, we know 
	\[\ed^{t,\alpha_0-1}_f = \ed^{t-1,\alpha_0-1}_f + |x^t| \geq \min_{\alpha'\in C^{t},\alpha'\leq \alpha}A(t-1,\alpha'-1)+ \alpha_0-\alpha'\geq A(t,\alpha_0-1).\]
	When $f(t)\neq \emptyset$, we assume $f(t) = (\alpha^f_t,\beta^f_t)$. Then 
	\begin{align*}
	\ed^{t,\alpha_0-1}_f &= \ed^{t-1,\alpha^f_t-1}_f + M(t,(\alpha^f_t,\beta^f_t)) + \alpha_0-\beta^f_t-1\\
	& \geq A(t-1,\alpha^f_t-1)+M(t,(\alpha^f_t,\beta^f_t)) + \alpha_0-\beta^f_t-1\\
	&\geq A(t,\alpha_0-1)
	\end{align*}
	Meanwhile, $A(t,\alpha_0) \geq \min_{f\in F^{t,\alpha_0-1}}\ed^{t,\alpha_0-1}_f$ since $A(t,\alpha_0) = \ed^{t,\alpha_0-1}_f$ for some $f\in F^{t,\alpha_0-1}$ by the updating rule at line~\ref{algo:DPEditDistance_updating}. Thus, we have proved  $A(t,\alpha_0-1) = \min_{f\in F^{t,\alpha_0-1}}\ed^{t,\alpha_0-1}_f$.
	Now, assume we have computed $A(b-1,\alpha)$ for every $\alpha\in C^{b}$. Let $f_0$ be the optimal matching such that $\ed_{f_0}(x,y) = \min_{f\in F} \ed_f(x,y)$.  If $f_0(b) = \emptyset$, 
	\[\ed_{f_0}(x,y) = \min_{\alpha'\in C^{b}}A(b-1,\alpha'-1)+ |m-\alpha'| + |x^b|\]
	Otherwise, let $f_0(b) = (\alpha^{f_0}_b,\beta^{f_0}_b)$
	\[\ed_{f_0}(x,y)  = \min_{ (\alpha',\beta')\in C^b_{\eps,\Delta}}A(b-1,\alpha'-1)+M(b,(\alpha^{f_0}_b,\beta^{f_0}_b)) + m-\beta'\]
	By the optimality of $f_0$, we know the algorithm~\ref{algo:DPEditDistance} is $d = \ed_{f_0}(x,y)$. Now, if we fix an optimal alignment such that $x_{[l_i,r_i]}$ is matched  to block $y_{[\alpha_i,\beta_i]}$ and $[\alpha_i,\beta_i]$ are disjoint and span the entire length of $y$. Let $f_1$ be a matching such that, for each $i\in [b]$, if $\eps'|\alpha_i-\beta_i|\leq |l_i-r_i|\leq 1/\eps'|\alpha_i-\beta_i|$, $f(i)$ is an \emph{$(\eps', \Delta)$-approximately optimal candidate}. Otherwise, $f(i) = \emptyset$. By lemma~\ref{lem:approx_optimal_candidate} and lemma~\ref{lem:CandidateSet}, such a matching $f_1$ exists and $\ed_f\leq (1+\eps)\ed(x,y)$. Thus,
	\[\ed(x,y)\leq\ed_{f_0}\leq \ed_{f_1}\leq (1+\eps)\ed(x,y)\]
	This proves the correctness of algorithm~\ref{algo:DPEditDistance}. 
	
	Now we compute the tme complexity. By the proof of lemma~\ref{lem:CandidateSet}, $|C^i| = O(\frac{b}{\eps})$ for $i\in [b]$. The size of matrix $A$ is $O(\frac{b^2}{\eps})$ where the rows of $A$ are indexed by $i$ from $0$ to $b-1$ and for the $i$-th row, the columns are indexed by the elements in set $C^{i+1}$. We can divide the dynamic programming into roughly $b$ steps and for the $i$-th step, we compute the row indexed by $i$.  Assume we have already computed the row indexed by $i-1$ of $A$. We first set 
	$$A(i,\alpha-1) = \min_{\alpha'\in C^i,\alpha'\leq \alpha} A(i-1,\alpha'-1) + \alpha-\alpha'+ |x^i|$$
	for all $\alpha \in C^i$. This takes $O(|C^i||C^{i+1}|) = O(\frac{b^2}{\log^2 n})$ time. Then, we query each elements in the $i$-th row of $M$. Say we queried $M(i, (\alpha',\beta'))$, we update all $A(i,\alpha-1)$ such that $\alpha-1\geq \beta'$ by 
	$$A(i,\alpha-1) = \min\{A(i,\alpha-1),A(i-1,\alpha'-1)+M(i,(\alpha',\beta')) + \alpha-1 -\beta'\}.$$
	This takes $O(|C^i_{\eps', \Delta}||C^{i+1}|) = O(\frac{b^2}{\eps^3}\log n)$ time. So the $i$-th step takes $O(\frac{b^2}{\eps^3}\log n)$ time. SInce there are $b$ steps, the time complexity is bound by $O(\frac{b^3}{\eps^3}\log n)$.
	
	
	For the space complexity, notice when updating $A(i,\alpha)$, we only need the information of $A(i-1, \alpha'-1)$ for every $\alpha'\in C^{i}$. Thus, we can release the space used to store $A(i-2, \alpha''-1)$ for every $\alpha''\in C^{i-1}$. And for line~\ref{DPEditDistance:last}, we only need the information of $A(i-1, \alpha-1)$ for every $\alpha\in C^i$. From algorithm~\ref{algo:CandidateSet}, we know that for each $i$, we pick at most $ b/\eps$ points as the starting point of the candidate intervals. The size of  $C^i$ is at most $b/\eps$. Since each element in $A$ is a number at most $n$, it can be stored with $O(\log n)$ bits of space. Thus, the space required is $O(\frac{b}{\eps}\log n)$.
	
	If we replace $M(i,(\alpha,\beta))$ with a $(1+\gamma)$ approximation of $\ed(x^i,y_{[\alpha,\beta]})$. Each $M(i,(\alpha,\beta))$ will add at most an $\gamma\ed(x^i,y_{[\alpha,\beta]})$ additive error. The amount of error added is bounded by $\gamma \ed(x,y)$. Thus, $\text{DPEditDistance}(n,m,b, \eps',\Delta,M)$ outputs a $(1+\eps)(1+\gamma)$-approximation of $\ed(x,y)$. The time and space complexity is not affected.
\end{proof}

\end{document}